\documentclass[11pt]{amsart}
\usepackage{amsmath,stmaryrd,amsthm,tikz,calc}
\usepackage{color}
\usepackage[margin=1in]{geometry}
\parskip=10pt

\usepackage{color}
\usepackage{moreverb}
\usepackage{graphicx} 
\usepackage{float}
\usepackage{pdfsync}
\usepackage{hyperref}  
\hypersetup{colorlinks=true}    
\usepackage{bbm}
\usepackage{tcolorbox}

\date{\today}

\newcommand{\Rl}{\mathbb{R}}

\newcommand{\N}{\mathbb{N}}

\newcommand{\A}{\mathcal{A}}

\newcommand{\cP}{\mathcal{P}}

\newcommand{\be}{\begin{equation}}
\newcommand{\ee}{\end{equation}}
\newcommand{\bea}{\begin{eqnarray}}
\newcommand{\eea}{\end{eqnarray}}
\newcommand{\beann}{\begin{eqnarray*}}
\newcommand{\eeann}{\end{eqnarray*}}
\newcommand{\eq}[1]{(\ref{#1})}

\newtheorem{theorem}{Theorem}[section]
\newtheorem{proposition}[theorem]{Proposition}
\newtheorem{corollary}[theorem]{Corollary}
\newtheorem{lemma}[theorem]{Lemma}
\newtheorem{assumption}[theorem]{Assumption}

\newtheorem{remark}[theorem]{Remark}

 \numberwithin{equation}{section}

\renewcommand{\epsilon}{\varepsilon}

 \def\idty{{\mathchoice {\mathrm{1\mskip-4mu l}} {\mathrm{1\mskip-4mu l}} %
{\mathrm{1\mskip-4.5mu l}} {\mathrm{1\mskip-5mu l}}}}

\newcommand\supp{\text{supp}}

%

%

%


\newcommand{\beq}{\begin{equation}}
\newcommand{\eeq}{\end{equation}}

\newcommand{\R}{\mathbb{R}}

\let\<\langle
\let\>\rangle

\renewcommand{\d}{\mathrm{d}}  
\providecommand{\wtilde}[1]{\widetilde{#1}}

\usepackage{todonotes}


\newenvironment{remarks}{\begin{myremarks}\begin{nummer}}%
    {\end{nummer}\end{myremarks}}
    {\end{nummer}\end{myexamples}}

\newcounter{numcount}
\newcommand{\labelnummer}{(\roman{numcount})}%

\makeatletter
\providecommand{\showkeyslabelformat}[1]{\relax}        
\let\mysaveformat\showkeyslabelformat                   %
\def\myformat#1{\raisebox{-1.5ex}{\mysaveformat{#1}}}   %

\newenvironment{nummer}%
  {\let\curlabelspeicher\@currentlabel%
    \begin{list}{\textup{\labelnummer}}%
      {\usecounter{numcount}\leftmargin0pt%
        \topsep0.5ex\partopsep2ex\parsep0pt\itemsep0ex\@plus1\p@%
        \labelwidth2.5em\itemindent3.5em\labelsep1em%
      }%
    \let\saveitem\item%
    \def\item{\saveitem%
      \def\@currentlabel{\curlabelspeicher\kern.1em\labelnummer}}%
    \let\savelabel\label%
    \def\label##1{{\ifnum\thenumcount=1\let\showkeyslabelformat\myformat\fi\savelabel{##1}}%
										{\def\@currentlabel{\labelnummer}%
									 	\let\showkeyslabelformat\@gobble
									 	\savelabel{##1item}%
										}%
	   							}%
  }{\end{list}}%

  {\let\curlabelspeicher\@currentlabel%
    \begin{list}{\textup{\labelnummer}}%
      {\usecounter{numcount}\leftmargin0pt%
        \topsep0.5ex\partopsep2ex\parsep0pt\itemsep0ex\@plus1\p@%
        \labelwidth2.5em\itemindent0em\labelsep1em%
        \leftmargin2.5em}%
    \let\saveitem\item%
    \def\item{\saveitem%
      \def\@currentlabel{\curlabelspeicher\kern.1em\labelnummer}}%
    \let\savelabel\label%
    \def\label##1{{\ifnum\thenumcount=1\let\showkeyslabelformat\myformat\fi\savelabel{##1}}%
										{\def\@currentlabel{\labelnummer}%
									 	\let\showkeyslabelformat\@gobble
									 	\savelabel{##1item}%
										}%
    							}%
  }{\end{list}}%

\usepackage{enumitem}

\let\OldItem\item
\newcommand{\MyItem}[2][]{}%
%

\newtheorem{myremarks}[theorem]{Remarks}



\begin{document}

\title[Many-body fermion dynamics]{Lieb-Robinson bounds and strongly continuous dynamics\\ for a class of many-body fermion systems in $\Rl^d$.}


\author[M. Gebert]{Martin Gebert}
\address{Department of Mathematics\\
University of California, Davis\\
Davis, CA 95616, USA}
\email{mgebert@math.ucdavis.edu}
\author[B. Nachtergaele]{Bruno Nachtergaele}
\thanks{Based upon work supported by the National Science Foundation under Grant DMS-1813149.}
\address{Department of Mathematics and Center for Quantum Mathematics and Physics\\
University of California, Davis\\
Davis, CA 95616, USA}
\email{bxn@math.ucdavis.edu}
\author[J. Reschke]{Jake Reschke}
\address{Department of Mathematics\\
University of California, Davis\\
Davis, CA 95616, USA}
\email{jreschke@math.ucdavis.edu}
\author[R. Sims]{Robert Sims}
\address{Department of Mathematics \\
University of Arizona\\
Tuscon, AZ 85721, USA}
\email{rsims@math.arizona.edu}

\begin{abstract}
We introduce a class of UV-regularized two-body interactions for fermions in $\Rl^d$ and prove a Lieb-Robinson estimate
for the dynamics of this class of many-body systems. As a step toward this result, we also prove a propagation bound of Lieb-Robinson type 
for Schr\"odinger operators. We apply the propagation bound to prove the existence of infinite-volume dynamics as a strongly continuous 
group of automorphisms on the CAR algebra. 
\end{abstract}
\maketitle

\section{Introduction}\label{sec:introduction}
The goal of this paper is to study propagation estimates for interacting fermion systems in $\Rl^d$, $d\geq 1$, and to apply them to
construct the infinite-volume dynamics for a class of such systems as a strongly continuous one-parameter group of 
automorphisms of the standard CAR algebra. We introduce a class of short-range, UV-regularized two-body interactions for which
this is possible. Without the use of a UV cut-off of some kind, such a result cannot be expected to hold. See, for example, the discussion in 
\cite[Introduction to Section 6.3]{bratteli:1997}. Nevertheless, as Sakai notes in the last paragraph of his book \cite{sakai:1991}, constructing 
the dynamics for interacting systems is one of the most important problems. To address this problem, a common approach is to consider
the dynamics in representations of the algebra of observables associated with a class of sufficiently regular states. This is not our approach here.
Instead we introduce a UV regularization of the interactions. This allow us to construct the infinite system dynamics as 
automorphisms of the CAR algebra of observables that depend continuously on time.
A typical situation where it is advantageous to consider the dynamics on the observables algebra of the infinite system is in non-equilibrium statistical mechanics, 
where until now one would either use a quasi-free dynamics (as, e.g., in\cite{frohlich:2003}) or 
work in a lattice setting where UV regularization is provided by the lattice  (see, e.g., \cite{hastings:2015,grundling:2017,bachmann:2017b,bachmann:2019}, 
and \cite{robinson:1968,ruelle:1969} for the original and fundamental existence result for quantum spin systems.).

One broad class of models in which UV degrees of freedom are naturally absent are mean field models and 
related limiting regimes and the dynamics of such models have been studied including in infinite volume. For example, well-posedness for the Hartree equation in infinite volume, which describes the mean field limit \cite{elgart:2004}, 
has been proved by Lewin and Sabin in \cite{Lewin:2015}.

The regularization we adopt in this paper is smearing the interactions by Gaussians parameterized by $\sigma >0$ in such a way that the pair interaction between point 
particles is recovered in the limit $\sigma \to 0$ (See Appendix \ref{sec:strong_resolvent_limit} for a proof). Formally, in second quantization, this leads to a Hamiltonian of the form
\be
H^\sigma_\Lambda=\int_{\Rl^d} (\nabla a^*_x \nabla a_x + V(x) a^*_x a_x) dx + \frac{1}{2} \int_\Lambda\int_\Lambda W(x-y) a^*(\varphi_x^\sigma)a^*(\varphi_y^\sigma)a(\varphi_y^\sigma)a(\varphi_x^\sigma) dx dy,
\label{smearedHamiltonian}\ee
where $V$ is an external potential such as a smooth periodic function, and $W$ is a short-range two-body interaction. We defer stating precise conditions on
$V$ and $W$ until Section \ref{sec:results}. The smearing is only needed in the interaction and one can take for $\varphi_x^\sigma$ an $L^1$-normalized Gaussian of width $\sigma$ and  centered at $x\in\Rl^d$. The parameter $\sigma$ can be interpreted as the size of the particles and, as discussed in Appendix \ref{sec:strong_resolvent_limit}, restricted to the 
$N$-particle Hilbert space, for any finite number of particles $N$, in either a finite or infinite volume, the dynamics converges to the standard Schr\"odinger dynamics
generated by the self-adjoint Hamiltonian $H_N$ given by
\be
H_N = \sum_{k=1}^N (-\Delta_k + V(x_k)) +\sum_{1\leq k < l\leq N}  W(x_k-x_l).
\ee

Having a state-independent definition of the dynamics has both conceptual and practical advantages. From early on it was realized however that the subtle, non-robust, 
property of (thermodynamic) stability may be an obstacle to using perturbation series to define Heisenberg dynamics for infinite systems in the continuum \cite{dyson:1952}.
Therefore, it is not surprising that attempts were made to construct toy models of interacting theories for which stability could be proved. An
early example is \cite{streater:1968}.  In \cite{streater:1970} an infinite-volume dynamics for interacting fields is obtained using relativistic locality (Minkowski space). 
The only previous Euclidean construction of infinite-system dynamics on the CAR algebra over $L^2(\R^d)$ that explicitly considers a regularized pair interaction, as far as we are aware, is by Narnhofer and Thirring \cite{narnhofer:1990}. In that work the authors were motivated by the desire to preserve the 
Galilean invariance 
of the dynamics, which led them to employ a somewhat contrived UV regularization.
The smearing of the form \eq{smearedHamiltonian} used here is, we believe, more natural and likely to faithfully reproduce the low-energy physics.

Before summarizing our results, we point out that {\em defining} a dynamics on the CAR algebra over $L^2(\R^d)$ is by itself not the issue. Including pair interactions 
in a densely defined self-adjoint Hamiltonian on Fock space has been accomplished a long time ago. The corresponding one-parameter group of unitaries can be used
to define a dynamics as a group of automorphisms on the bounded operators on Fock space, which includes the CAR algebra. This dynamics, however,  is in general {\em not strongly 
continuous}. This is because the commutator of the unregularized interaction term with a creation or annihilation operator is unbounded.

Our proof of convergence of the thermodynamic limit of the infinite-volume dynamics hinges on a propagation estimate of Lieb-Robinson type \cite{lieb:1972} for systems in which the 
interaction is only active in a bounded volume $\Lambda$, with estimates that are uniform in $\Lambda$. Let $\tau_t^\Lambda(\cdot)$ denote the Heisenberg evolution  with the interactions restricted to $\Lambda$ (see (\ref{fv_pert_dyn}) for the precise definition) and define the one-particle Schr\"odinger evolution in the usual way:
\be
f_t = e^{-it(-\Delta +V)} f,\quad  f\in L^2(\Rl^d),\quad  t\in \Rl.
\ee 
\noindent
\textbf{Lieb-Robinson Bound for Schr\"odinger operators.}
Let $V$ be given as the Fourier transform of a finite Borel measure of compact support on $\Rl^d$. For $\sigma >0$ and $x\in\Rl^d$, denote 
by $\varphi_x^\sigma$ the $L^1$ normalized Gaussian on $\Rl^d$ with mean $x$ and variance $\sigma$. Then, there exist constants $C_1,C_2,C_3>0$, such that for all 
$f\in L^2(\Rl^d)$ and $t\in\Rl$
one has
\begin{equation} \label{ft_to_Gauss_est}
\big|\<e^{-it (-\Delta+V)}f, \varphi^\sigma_x\>\big|
\leq C_1  e^{C_{2} |t| \ln |t|} \int_{\R^d} \d y\,e^{-\frac { C_{3} }{t^2+ 1}  |x-y|} |f(y)|.
\end{equation}

A more detailed estimate and explicit constants are given in Proposition \ref{prop:LRB_SO} and Corollary \ref{cor:LRB_SO_simple}.
For discrete Schr\"odinger operators on graphs a Lieb-Robinson type propagation estimate holds for any real-valued diagonal potential \cite{aizenman:2012}.

Let $\tau_t^\Lambda$, be the Heisenberg dynamics generated by $H_\Lambda$ in \eq{smearedHamiltonian} for bounded $\Lambda\subset\Rl^d$, and $t\in\Rl$. We will prove the following result as Theorem \ref{thm:propagationbound}.

\noindent
\textbf{Propagation Bound for many-body fermion dynamics.}
{\em Let $W\in L^\infty(\Rl^d)$ be real-valued and satisfying $W(-x)=W(x)$ and $|W(x)| \leq Ce^{-a |x|}$, 
for some $C, a >0$.
Then, there exist continuous functions $C(t), a(t) >0$ such that for all bounded and measurable $\Lambda\subset\Rl^d$, 
and $f,g\in L^1(\Rl^d) \cap L^2(\Rl^d)$, one has the following bounds:
\begin{align}
\Vert \{ \tau^\Lambda_t(a(f)), a^*(g)\} - \langle  e^{-it (-\Delta+V)}f,g\rangle\idty  \Vert
&\leq \Vert f\Vert_1 \Vert g\Vert_1 e^{C(t)} e^{-a(t) d(\supp(f),\supp(g))}\label{basicbound}\\
\Vert \{ \tau^\Lambda_t(a(f)), a(g)\}\Vert
&\leq \Vert f\Vert_1 \Vert g\Vert_1 e^{C(t)} e^{-a(t) d(\supp(f),\supp(g))}\label{basicbound_b}
\end{align}
where $d(\supp(f),\supp(g))$ denotes the distance between the essential supports of $f$ and $g$.
}

Explicit forms of $C(t)$ and $a(t)$ are given in Theorem \ref{thm:propagationbound}. 
This Lieb-Robinson type bound provides localization estimates for general elements in the CAR algebra by the usual algebraic relations in the same way as for lattice fermion systems as in
\cite{hastings:2006,nachtergaele:2016b,bru:2017}.

As an application of this propagation bound above, which is of independent interest, we then prove the existence and continuity of the infinite systems dynamics. 
See Theorem \ref{thm:main} for the precise statement. There are other approaches to proving the convergence of the dynamics
in the thermodynamic limit. Using propagation bounds, however, yields a short and intuitive proof.

\noindent
\textbf{Strongly continuous infinite-volume dynamics.}
{\em There exists a strongly continuous one-parameter group of automorphism of the CAR algebra over $L^2(\Rl^d)$, $\{\tau_t\}_{t\in\Rl}$,
such that }
\be
\lim_{\Lambda\uparrow \Rl^d} \tau^\Lambda_t(a(f)) = \tau_t(a(f)), \text{ for all }f\in L^2(\Rl^d).
\ee

The strategy for proving existence of the thermodynamic limit of the Heisenberg dynamics using propagation bounds appears to be quite general and has been employed 
successfully for lattice systems \cite{bratteli:1997,nachtergaele:2010,nachtergaele:2019}. This method works whenever the interactions restricted to a bounded region are 
described by a bounded self-adjoint operator. It is worth noting that the free part of the dynamics does not require a cut-off for this result to hold. Due to its uniformity in $\Lambda$, 
the propagation bound \eq{basicbound} extends to the infinite system dynamics.

Several generalizations of the propagation bounds could be considered. For Schr\"odinger operators, we expect that the restrictions on $V$ can be relaxed. 
The many-body bounds are derived here for regularized pair interactions only. Our approach can handle $k$-body terms with virtually no changes. 
A different type of extension of obvious interest would be to consider fermions in an external magnetic field. In contrast, constructing the many-body dynamics 
for boson systems one has to face an additional element of unboundedness that has long been understood to force one to consider a weaker topology to express the continuity in time
 \cite{verbeure:2011}. Already for boson lattice systems, such as oscillator lattices, Lieb-Robinson bounds can be derived but one finds bounds that are no longer in terms of the operator norm 
of the observables \cite{amour:2010,nachtergaele:2009a}. Such bounds can nevertheless still be used to prove the existence of infinite-systems dynamics \cite{nachtergaele:2010}. Another approach to define the dynamics of infinite oscillator lattices was developed by Buchholz \cite{buchholz:2017}, who constructs a strongly continuous dynamics on 
the Resolvent Algebra \cite{buchholz:2008}.

The existence of propagation bounds of Lieb-Robinson type and the strongly continuous infinite-volume dynamics for many-body systems with Hamiltonians of the
form \eq{smearedHamiltonian} provide a new avenue for applications. For example, if we choose for $V$ a periodic potential, such that $-\Delta +V$ has a band 
structure with a gap, the non-interacting many-body ground state at suitable fermion density is gapped. We expect this gap to persist in the presence of interactions
as in \eq{smearedHamiltonian} with $W$ sufficiently small. Stability of the ground state gap has been proved for broad classes of lattice systems \cite{bravyi:2010,klich:2010,bravyi:2011,schuch:2011a,michalakis:2013,Cirac:2013,de-roeck:2019,frohlich:2018, hastings:2019,nachtergaele:inprep}. We believe that 
an analogous result for the continuum systems studied in this paper is now within reach.

\section{Model and statement of main results}\label{sec:results}

Let $d\geq 1$ and take $\Delta$ to be the Laplace operator on $\Rl^d$. For any real-valued $V\in L^\infty(\R^d)$, we will
denote by 
\beq\label{def:H_1}
H_1=-\Delta+V
\eeq
the corresponding (self-adjoint) Schr\"odinger operator with domain $H^2(\Rl^d)\subset L^2(\R^d)$, see \cite{reed:1975} for more details. 
As required, we will impose further conditions on $V$, e.g. see \eqref{def:V}.

Our goal is to analyze a class of operators on the fermionic Fock space. We will follow closely the notation in 
 \cite{bratteli:1997}, see specifically Section 5.2.1, and refer the reader there for more details. 
Let us denote by
\begin{equation}
\mathfrak{F}^-=\bigoplus_{n=0}^\infty \big( L^2(\Rl^d)^{\otimes n}\big)^-
\end{equation} 
the anti-symmetric Fock space (Hilbert space) generated by $L^2( \mathbb{R}^d)$. In the above,
$L^2(\Rl^d)^{\otimes n}$ is short for $\bigotimes_{k=1}^n L^2(\Rl^d)$ and $(\,\cdot\,) ^- $ denotes anti-symmetrization. 
For each $f\in L^2(\R^d)$, take $a(f) \in \mathcal{B}(\mathfrak F^-)$, the bounded linear 
operators over $\mathfrak F^-$, to be the annihilation operator corresponding to 
$f$, and denote by $a^*(f)$, its adjoint, the corresponding creation operator. 
It is well-known that these creation and annihilation operators satisfy the 
canonical anti-commutation relations (CAR)
\beq\label{CAR}
\big\{ a(f) , a(g)\big\} =0\quad\text{and} \quad \big\{ a(f), a^*(g)\big\} = \<f,g\>\idty \quad \mbox{for all } f,g\in L^2(\R^d) \, 
\eeq
where $\{A,B\} =AB+BA$ denotes the anti-commutator, $\<\cdot,\cdot\>$ the scalar product in $L^2(\R^d)$, and $\idty$ is
the identity acting on $\mathfrak F^-$. In addition, one has that
\begin{equation} \label{a+c_op_norm}
\|a^*(f)\| = \|a(f)\| = \|f\|_2 \quad \mbox{for all } f \in L^2( \mathbb{R}^d) 
\end{equation}
where here, and in the following, $\|\cdot\|_p$ will refer to the $L^p$-norm for $p\in[1,\infty]$ and $\|\cdot\|$ will denote the operator norm.

The models we will consider are defined in terms of a particular class of annihilation and creation operators.
Let $\sigma>0$, take $x \in \mathbb{R}^d$, and consider the Gaussian $\varphi_x^{\sigma} : \mathbb{R}^d \to \mathbb{R}$ with
\begin{equation}\label{Def:Gaussian}
\varphi_x^{\sigma}(y) = \frac{1}{(2 \pi \sigma^2)^{d/2}} e^{- \frac{|y-x|^2}{2 \sigma^2}} \quad \mbox{for all } y \in \mathbb{R}^d \, .
\end{equation}
We say that $\varphi_x^{\sigma}$ is centered at $x\in\R^d$ with variance $\sigma^2$. We have chosen an $L^1$-normalization, i.e.
 $\|\varphi_x^{\sigma} \|_1 = 1$ for all $x \in \R^d$. Given (\ref{a+c_op_norm}), it is clear that for any $x \in \mathbb{R}^d$,
 \begin{equation} \label{Def:C_sig}
 \| a^*( \varphi_x^{\sigma}) \|^2 =  \| a( \varphi_x^{\sigma}) \|^2 =  \| \varphi_x^{\sigma} \|_2^2 =  (4 \pi \sigma^2)^{-d/2} =: C_\sigma
 \end{equation}
where we have introduced the notation $C_\sigma>0$ as this quantity will enter our estimates frequently.

For any bounded and measurable set $\Lambda\subset\R^d$, we will analyze the operator
\begin{equation} \label{Def:H_L_sig}
H_\Lambda^{\sigma}=\d\Gamma(H_1)+W_\Lambda^{\sigma}
\end{equation}
acting on $\mathfrak{F}^-$, where $\d \Gamma (H_1)$ denotes the second quantization of $H_1$, again see \cite{bratteli:1997} for the definition,
and the interaction $W_\Lambda^{\sigma}$ is given by
\begin{equation}\label{def:interaction}
W_{\Lambda}^{\sigma}=\frac{1}{2} \int_{\Rl^{d}} \int_{\Rl^{d}}\d x\,\d y\, W_\Lambda(x,y)a^*(\varphi_x^\sigma)a^*(\varphi_y^\sigma)a(\varphi_y^\sigma)a(\varphi_x^\sigma)
\end{equation}
where $W_\Lambda: \mathbb{R}^d \times \mathbb{R}^d \to \mathbb{R}$ has the form 
$W_{\Lambda}(x,y)=\chi_{\Lambda\times\Lambda}(x,y)W(x-y)$ for some real-valued $W\in L^\infty(\R^d)$ and $\chi_{\Lambda\times\Lambda}(x,y)$ denotes the indicator function of the set $\Lambda\times\Lambda$. In this case, for each fixed $\sigma>0$ and any $\Lambda \subset \mathbb{R}^d$ that is bounded and measurable, 
we have that  
\beq \label{W_L:bd}
\|W_{\Lambda}^{\sigma}\| \leq\frac 1 2 C_{\sigma}^2 \int_{\Rl^{d}} \int_{\Rl^{d}}\d x\,\d y\, |W_\Lambda(x,y)|
\leq\frac 1 2\left( \frac 1 {4\pi \sigma^2} \right)^d \|W\|_\infty |\Lambda|^2
\eeq
where $|\Lambda|$ denotes the Lebesgue measure of $\Lambda$. Thus for bounded $W$, 
$W_\Lambda^\sigma \in \mathcal{B}( \mathfrak{F}^-)$ for each choice of $\sigma>0$ and $\Lambda \subset \mathbb{R}^d$ as above. 
We conclude that, in these cases, $H_\Lambda^{\sigma}$ is a well-defined self-adjoint operator on the anti-symmetric Fock space $\mathfrak{F}^-$.

As we progress, more assumptions will need to be made on $W$. For ease of later reference, we state them here.

\begin{assumption}[On $W$]  \label{Assump:W} Let $W : \mathbb{R}^d \to \mathbb{R}$ satisfy
\begin{enumerate}[label=(\roman*)]
\item $W \in L^{\infty}( \mathbb{R}^d)$ is real-valued;
\item $W$ is symmetric, i.e. $W(-x) = W(x)$ for almost every $x \in \mathbb{R}^d$;
\item $W$ is short-range, i.e. there are positive numbers $a$ and $c_W$ for which
\begin{equation} \label{W:dec}
|W(x)| \leq c_We^{-a|x|} \quad \mbox{for almost every } x \in \mathbb{R}^d \, .
\end{equation}  
\end{enumerate}
\end{assumption}


\subsection{Bounds on the propagator of one-particle Schr\"odinger operators}\label{sec1}
In this section, we derive propagation bounds for one-particle Schr\"odinger operators with the
form $H_1$ as in (\ref{def:H_1}). To make a precise statement, we require the following 
from the potential $V$. 

\begin{assumption}[On $V$]  \label{Assump:V}  Let $V: \mathbb{R}^d \to \mathbb{C}$ have the form 
\beq \label{def:V}
V(x) =\int_{\R^d} \d \mu(k)\, e^{- i k\cdot x} \, 
\eeq
where $\mu : \text{Borel}(\mathbb{R}^d) \to \mathbb{R}$ is a Borel measure on $\mathbb{R}^d$ satisfying:
\begin{enumerate}[label=(\roman*)]
\item $\mu$ has support contained in a ball, i.e. there is some $M >0$ and $\supp\, \mu\subset B_M(0)$;
\item $\mu=\mu^+-\mu^-$ where $\mu^+$ and $\mu^-$ are non-negative finite measures on $\text{Borel}(\R^d)$, i.e. $\mu^{\pm}(\R^d)<\infty$. We set $|\mu|=\mu^++ \mu^-$.;
\item $\mu$ is even, i.e. $\mu(A) =\mu(-A)$ for all $A\in\text{Borel}(\R^d)$.
\end{enumerate}
\end{assumption} 

Under these assumptions, $V$ is the Fourier transform of a signed, compactly supported, finite measure $\mu$, which is real-valued and bounded.
Two parameters that will appear in estimates are $C_\mu$ and $M$, to characterize $V$. They need not be
chosen optimally but should satisfy
\be\label{Cmu+M}
\int_{\R^d} \d|\mu|(x) \leq C_\mu, \quad  \sup \{ |k| \mid k\in \supp( |\mu|) \}\leq M.
\ee

Two classes of examples of potentials $V$ satisfying Assumption \ref{Assump:V} are the following.

\noindent
(i) Let $V$ satisfy Assumption~\ref{Assump:V} and suppose the corresponding measure $\mu$ has a density $f\in L^1(\R^d)$. 
Then, these assumptions imply that $f$ has compact support, that $f(x)=f(-x)$, and that
\beq
V(x) = \int_{\R^d} \d k\, f(k) e^{-i k\cdot x}. 
\eeq
For example, our class of potentials $V$ includes $V(x)=\text{sinc}^k(x)$ for all $k\in\N$ for which the density is the $k$-fold convolution of the indicator function $f(y) =1_{[-1,1]}(y)$.

\noindent
(ii)  Let $V$ satisfy Assumption~\ref{Assump:V} and suppose the corresponding measure $\mu$ satisfies: 
There is some $N \in\N$, points $\{ a_n \}_{n=1}^N$ in $\mathbb{R}^d$, and numbers $\{ b_n \}_{n=1}^N$ in $\mathbb{R}$ 
for which 
\beq
\mu(A) =\frac 1 2 \sum_{n=1}^N b_n(\delta_{a_n}(A) + \delta_{-a_n}(A)) \quad \mbox{for any } A \in \text{Borel}(\mathbb{R}^d)  .
\eeq
Here $\delta_{(\cdot)}$ denotes the Dirac measure. This form gives rise to potentials $V$ with
\beq
V(x) = \sum_{n=1}^N  b_n \cos(a_n \cdot x) .
\eeq

The main result of this section is:

\begin{theorem}[Lieb-Robinson bound for Schr\"odinger operators]\label{Thm:boundProp}
Let $V$ satisfy Assumption~\ref{Assump:V} and consider the Schr\"odinger operator 
$H_1= -\Delta+V$ as defined in (\ref{def:H_1}). Then there exist constants $C_1, C_2,C_3>0$ depending on $d,\mu$, and $\sigma$ such that the estimate
\begin{equation} \label{ft_to_Gauss_est2}
\big|\< e^{-it H_1}f, \varphi^\sigma_x\>\big|
\leq C_1  e^{C_{2} |t| \ln |t|} \int_{\R^d} \d y\,e^{-\frac { C_{3} }{t^2+ 1}  |x-y|} |f(y)|
\end{equation}
holds for all $t\in\R$ and $f \in L^2( \mathbb{R}^d)$.
\end{theorem}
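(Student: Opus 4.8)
The plan is to control the matrix element $\langle e^{-itH_1}f,\varphi_x^\sigma\rangle$ by expanding the perturbation $V$ in a Dyson series around the free Laplacian, exploiting the fact that each factor of $V$ contributes a bounded translation in Fourier space by a vector in $B_M(0)$. First I would reduce to a kernel estimate: since $f\in L^2$ and the Gaussian $\varphi_x^\sigma$ is nice, it suffices to bound $|\langle e^{-itH_1}\delta\text{-like data},\varphi_x^\sigma\rangle|$ pointwise, i.e.\ to produce a kernel $K_t(x,y)$ with $|\langle e^{-itH_1}f,\varphi_x^\sigma\rangle|\le \int K_t(x,y)|f(y)|\,dy$ and $K_t(x,y)\le C_1 e^{C_2|t|\ln|t|}e^{-C_3|x-y|/(t^2+1)}$. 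The core object is therefore the convolution kernel of $e^{-itH_1}$ smeared against $\varphi_x^\sigma$, and the strategy is Gaussian/analytic: the free propagator kernel $e^{-it(-\Delta)}(x,y)$ has the explicit oscillatory-Gaussian form $(4\pi i t)^{-d/2}e^{i|x-y|^2/(4t)}$, and smearing against a genuine Gaussian of variance $\sigma^2$ turns the oscillatory factor into an honest complex Gaussian $e^{-i t H_0}\varphi_x^\sigma$ whose modulus decays like $e^{-c|x-y|^2/(t^2+\sigma^4)}$ — giving spatial decay that is more than enough, and in particular yields the $1/(t^2+1)$ rate once we pass from Gaussian to exponential decay and absorb constants.

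Next I would insert the potential via the Duhamel expansion
\[
e^{-itH_1}=\sum_{n=0}^\infty (-i)^n \int_{0\le s_1\le\cdots\le s_n\le t} e^{-i(t-s_n)H_0}V e^{-i(s_n-s_{n-1})H_0}\cdots V e^{-is_1 H_0}\,ds_1\cdots ds_n,
\]
and use the representation $V(x)=\int d\mu(k)e^{-ik\cdot x}$ to write each insertion of $V$ as an integral over momentum shifts $k_j\in\operatorname{supp}|\mu|\subset B_M(0)$. Commuting each $e^{-ik_j\cdot x}$ through the free evolutions (multiplication by a plane wave is, in Fourier space, a translation by $k_j$, and conjugating $e^{-isH_0}$ by it produces only a phase and a shift) collapses the $n$-th term to a single smeared free propagator $e^{-i\tilde t H_0}\varphi^\sigma_{x+\text{(shift of size}\le nM|t|)}$ times a scalar phase, integrated against $d\mu(k_1)\cdots d\mu(k_n)$ over the simplex. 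The total mass of $n$ copies of $|\mu|$ over the time simplex of volume $|t|^n/n!$ gives a bound $(C_\mu|t|)^n/n!$ on the $n$-th term's amplitude, so the series converges; the shift in the Gaussian's center by at most $CnM|t|$ is what forces the decay rate to degrade. For the term with $n$ insertions the Gaussian is centered within $CnM|t|$ of $x$, so its contribution to $K_t(x,y)$ is roughly $e^{-c|x-y|^2/(t^2+1)}$ as long as $|x-y|\gtrsim nM|t|$, and one sums over $n$: splitting the sum at $n_0\sim |x-y|/(M|t|)$, the terms $n\le n_0$ keep good Gaussian decay while the tail $n>n_0$ is controlled by $\sum_{n>n_0}(C_\mu|t|)^n/n!$, which (by Stirling) is of order $e^{-n_0\ln(n_0/(eC_\mu|t|))}$ — and optimizing/bookkeeping this against the Gaussian is exactly where the $e^{C_2|t|\ln|t|}$ prefactor and the $|x-y|/(t^2+1)$ exponent emerge.

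The main obstacle, and the place requiring genuine care, is this last combinatorial balancing: one must trade the factorial decay $1/n!$ of the Dyson tail against the deterioration of the Gaussian localization (center displaced by $\sim nM|t|$, and the effective variance inflating with $t$), uniformly in $t\in\R$, to land precisely on the claimed form $C_1 e^{C_2|t|\ln|t|}e^{-C_3|x-y|/(t^2+1)}$ rather than something weaker. A secondary technical point is making rigorous the ``commute the plane wave through $e^{-isH_0}$'' step — this is clean because $e^{-ik\cdot x}e^{is\Delta}e^{ik\cdot x}=e^{is(\Delta - 2ik\cdot\nabla - |k|^2)}$ is again an explicit Gaussian-type operator, but one should track the accumulated phases and the accumulated drift $\sum_j s_j$-weighted $k_j$'s carefully; the compact support of $\mu$ (hypothesis (i)) is precisely what keeps every drift bounded by $nM|t|$, and the finiteness $|\mu|(\R^d)\le C_\mu$ (hypothesis (ii)) is what powers the series convergence. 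Evenness of $\mu$ (hypothesis (iii)) guarantees $V$ is real so that $H_1$ is self-adjoint and $e^{-itH_1}$ unitary, which is used for the trivial $L^2\to L^2$ bounds underlying the reduction; the detailed constants are then read off and recorded in Proposition \ref{prop:LRB_SO} and Corollary \ref{cor:LRB_SO_simple}.
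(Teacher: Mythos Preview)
Your proposal is correct and follows essentially the same approach as the paper: Dyson expansion of $e^{-itH_1}$ around the free propagator, Fourier representation of $V$ turning each insertion into a bounded momentum shift, explicit evaluation of the $n$-th term as a free-evolved Gaussian with center displaced by at most $nM|t|$ (this is the paper's Lemma~\ref{lm:Fourier}), and then the split of the sum over $n$ at $N_0\sim |x-y|/(M|t|)$ with Stirling on the tail (Proposition~\ref{prop:LRB_SO}) to extract the $e^{C_2|t|\ln|t|}$ prefactor and the $|x-y|/(t^2+1)$ decay rate (Corollary~\ref{cor:LRB_SO_simple}). Your identification of the main obstacle and of the role of each hypothesis on $\mu$ also matches the paper.
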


The constants $C_1,C_2$, and $C_3$ are derived in the proof of Corollary \ref{cor:LRB_SO_simple}.

\begin{remarks}\label{Rmk:LR}
\item\label{Rmk:LR_2subs}
Theorem \ref{Thm:boundProp} relies, in general, on the smoothness of the class of test functions $\varphi_y^\sigma$ that we used to probe the locality properties of the dynamics. For example, we can see from an explicit computation in the case of $V=0$, that the exponential decay does generally not hold when the Gaussians is 
replaced by a non-smooth functions such as, for example a characteristic function. Using the formula 
\beq 
(e^{-i t (-\Delta)} \psi)(x) = \frac 1 {(4 \pi i t)^{d/2}} \int_{\R^d} \d y\, e^{\frac{i |x-y|^2}{4 t}} \psi(y),
\eeq
valid for $t\neq 0$ and general $\psi\in L^1(\R^d)\cap L^2(\R^d)$, a straightforward calculation, in the case $d=1$, shows that the leading behavior of
$|(e^{-i t (-\Delta)} \chi_{[-1,1]})(x)|$, for $t>0$ and $|x|$ large is given by
$$
|(e^{-i t (-\Delta)} \chi_{[-1,1]})(x)| \sim 2\sqrt{\frac{t}{\pi}} \frac{x}{x^2 -1} \sin \frac{x}{2t}.
$$
\item In the case $V=0$ and for the Gaussians $\varphi_y^\sigma$  another explicit computation (see e.g. \cite[Sec. 7.3]{teschl:2009}), shows that 
\beq \label{Gauss_bd}
\big|\big( e^{-i t (-\Delta)}\varphi_y^\sigma\big)(x)\big| =
\frac 1{(2\pi)^{d/2}}  \frac {e^{-\frac{\sigma ^2 |x-y|^2}{8t^2+2\sigma^4}}}{(4 t^2 + \sigma^4)^{d/4}}.
\eeq
Using this form one immediately sees that an estimate analogous to \eq{ft_to_Gauss_est} holds with a Gaussian distance dependence of the kernel.
%
%
\item
In view of the first two remarks it is clear that Theorem \ref{Thm:boundProp} is far from optimal. It is also likely that similar bounds hold for a broader class of 
potentials $V$. 
\end{remarks}

\subsection{Lieb-Robinson bounds and the thermodynamic limit}\label{sec:LRB+TL}

Our next results concern Lieb-Robinson bounds for the Heisenberg dynamics associated to 
the operator $H_{\Lambda}^{\sigma}$ defined in (\ref{Def:H_L_sig}). We begin by recalling the notion of
dynamics on Fock space. 

As before, let $\mathcal{B}(\mathfrak{F}^-)$ denote the bounded linear operators on  $\mathfrak{F}^-$.
For each $\sigma>0$ and any bounded, measurable $\Lambda \subset \mathbb{R}^d$, we define
the Heisenberg dynamics associated to $H_{\Lambda}^{\sigma}$ for each $t \in \mathbb{R}$ as
the map $\tau_t^{\Lambda} : \mathcal{B}( \mathfrak{F}^-) \to \mathcal{B}( \mathfrak{F}^-)$ defined by
\beq \label{fv_pert_dyn}
\tau_t^\Lambda (A) =  e^{ it H_\Lambda^{\sigma}} A e^{-it H_\Lambda^{\sigma}} \quad \mbox{for all } A \in \mathcal{B}( \mathfrak{F}^-) \, .
\eeq
We note that although $\tau_t^\Lambda (A)$ depends on $\sigma$, we have suppressed this in our notation. 

We will analyze this dynamics on the CAR algebra generated by the set 
$\big\{ a(f), a^*(f): f\in L^2(\R^d)\big\}$. Again, we refer to \cite[Sec 5.2]{bratteli:1997} for more details. 
In particular, we will focus our attention to operators $A= a(f) \in \mathcal{B}( \mathfrak{F}^-)$ for some $f\in L^2(\R^d)$. 

Let us also recall the free dynamics, i.e. the case where $W$ =0 and there is no interaction. 
We will denote this well-studied, free dynamics of the CAR algebra by
\beq
\tau_t^\emptyset(a(f)) = e^{i t \d\Gamma(H_1)} a(f) e^{-i t \d\Gamma(H_1)} ,\qquad  f\in L^2(\Rl^d),\ \  t\in \Rl \, .
\eeq
A straight forward calculation shows
\be
\tau_t^\emptyset(a(f)) = a(f_t)\qquad \text{where}\qquad f_t = e^{-it H_1} f.
\ee
Our goal is to examine the behavior of $\tau_t^{\Lambda}$ as $\Lambda$ tends to $\mathbb{R}^d$, and in particular,
we wish to establish the existence of a dynamics in this thermodynamic limit. To do so, we regard 
$\tau_t^{\Lambda}$ as a perturbation of the infinite volume free dynamics $\tau_t^{\emptyset}$ on the
finite volume $\Lambda$. In this case, the key to constructing the thermodynamic limit is an appropriate
form of the Lieb-Robinson bound. To express the Lieb-Robinson bound for this model, we find it
convenient to introduce the non-negative function 
\begin{equation} \label{Def:F_t^L}
F_t^{\Lambda}(f,g) = \Vert \{ \tau^\Lambda_t(a(f)), a^*(g)\} - \{ \tau^\emptyset_t(a(f)), a^*(g)\} \Vert + \Vert \{ \tau^\Lambda_t(a(f)), a(g)\}\Vert
\end{equation}
Iteration is at the heart of most Lieb-Robinson bounds, and in the present context, our proof will show that 
the function $F_t^{\Lambda}$ above iterates more simply than either term on the right-hand-side of (\ref{Def:F_t^L}).
In any case, we find the following Lieb-Robinson bound.

\begin{theorem}[Lieb-Robinson bound]\label{thm:propagationbound}
Fix $\sigma>0$. Let $V$ satisfy Assumption~\ref{Assump:V}, $W$ satisfy Assumption~\ref{Assump:W}, and for each $t \in \mathbb{R}$ and 
any bounded, measurable set $\Lambda \subset \mathbb{R}^d$, denote by $\tau_t^{\Lambda}$ the dynamics
associated to $H_{\Lambda}^\sigma$ as defined in (\ref{fv_pert_dyn}). For any $f,g \in L^1( \mathbb{R}^d) \cap L^2( \mathbb{R}^d)$, the bound
\begin{equation}
F_t^{\Lambda}(f,g) \leq D(t) (e^{P_3(t)}-1)  \int_{\R^d} \int_{\R^d}  \d x \d y\, e^{-\frac{c_t|x-y|}{4}} |f(x)||g(y)|
\end{equation} 
holds for functions $D(t) \sim e^{c|t| | \ln |t| |}$, $P_3$ a polynomial of degree $6d+1$ in $|t|$, and $c_t \sim \frac{1}{1+t^2}$.  
Explicit values for these functions are given in Section~\ref{sec:loc_bds}, see specifically Lemma~\ref{sec2:lm:boundK2} and Lemma~\ref{lm:boundafg}.
\end{theorem}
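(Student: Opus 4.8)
The plan is to set up an integral equation (Duhamel / interaction-picture) for the quantity $F_t^\Lambda(f,g)$ defined in \eqref{Def:F_t^L}, and iterate it, using the one-particle estimate of Theorem \ref{Thm:boundProp} as the "free" input. First I would pass to the interaction picture relative to the free dynamics: writing $\tau_t^\Lambda = \tau_t^\emptyset \circ \gamma_t$ where $\gamma_t$ is generated by the time-dependent interaction $\tilde W_\Lambda^\sigma(t) = \tau_{-t}^\emptyset(W_\Lambda^\sigma)$, one gets
\begin{equation}
\tau_t^\Lambda(a(f)) = \tau_t^\emptyset(a(f)) + i\int_0^t \tau_s^\Lambda\big([\tilde W_\Lambda^\sigma(s), \, \tau_{t-s}^\emptyset(a(f))]\big)\, ds .
\end{equation}
Since $\tilde W_\Lambda^\sigma(s)$ is a quartic polynomial in smeared creation/annihilation operators $a^\#(\tau_{-s}^\emptyset \varphi_x^\sigma) = a^\#((\varphi_x^\sigma)_{-s})$, the commutator $[\tilde W_\Lambda^\sigma(s), a((f)_{t-s})]$ is, by the CAR and the Leibniz rule, a sum of cubic monomials in the smeared operators, each multiplied by a scalar anticommutator of the form $\langle (f)_{t-s}, (\varphi_x^\sigma)_{-s}\rangle = \langle e^{-it H_1} f, \varphi_x^\sigma\rangle$ — exactly the object controlled by Theorem \ref{Thm:boundProp}. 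This is the mechanism by which the one-particle Schrödinger propagation bound enters.

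Next I would take anticommutators of both sides with $a^*(g)$ and with $a(g)$, take norms, and bound. The key structural observation (flagged by the authors right after \eqref{Def:F_t^L}) is that $F_t^\Lambda$ iterates better than either term separately: when one anticommutes $a^*(g)$ (or $a(g)$) through the cubic monomial coming from the commutator, the CAR produce (i) a scalar term $\langle g, (\varphi_y^\sigma)_{-s}\rangle$ times a cubic monomial whose norm is bounded by $C_\sigma^{3/2}$, and (ii) a term in which $a^*(g)$ (resp. $a(g)$) has been carried inside, leaving a new anticommutator of the type appearing in $F_s^\Lambda$ but now with $g$ replaced by one of the smeared Gaussians $(\varphi_y^\sigma)_{-s}$. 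Collecting, and using $\|\tau_s^\Lambda\| = 1$ and the operator-norm bound $\|a^\#(\varphi)\| = \|\varphi\|_2$, one arrives at a closed Gronwall-type integral inequality of the schematic form
\begin{equation}
F_t^\Lambda(f,g) \leq \int_0^t \!\! ds \int_{\R^d}\!\!\int_{\R^d} \!\d x\, \d y\, |W_\Lambda(x,y)|\, C_\sigma^{?}\, \Big( K_s(f,x)\,G_s(g,y) + K_s(f,x)\, F_s^\Lambda(\varphi_x^\sigma, \varphi_y^\sigma) + \ldots \Big),
\end{equation}
where $K_s(f,x) = |\langle e^{-isH_1}f,\varphi_x^\sigma\rangle|$ and $G_s$ similarly, all of which Theorem \ref{Thm:boundProp} bounds by $C_1 e^{C_2 |s|\ln|s|}\!\int e^{-\frac{C_3}{1+s^2}|x-\cdot|}|f|$. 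Iterating this inequality indefinitely (each iteration pulls down another factor of the bounded interaction, a time integral, and a $C_\sigma$-power, and composes two exponential-decay kernels, whose convolution still decays exponentially but with a slightly worse rate — hence the running constants $c_t \sim (1+t^2)^{-1}$ shrinking and the factor $1/4$) produces the series $\sum_n P_3(t)^n/n! = e^{P_3(t)}$, or rather $e^{P_3(t)}-1$ since the $n=0$ term is absent, with $D(t)\sim e^{c|t||\ln|t||}$ collecting the $C_1, C_2$ prefactors and $P_3$ a polynomial of degree $6d+1$ absorbing the volume-free part of the $C_\sigma$ powers, the exponential decay of $W$, and the time integrations. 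The convergence of this series is automatic because each term carries $1/n!$ from the nested time-ordered integrals and the spatial convolutions of the exponential kernels contribute only polynomial-in-$s$ factors (after extracting a fixed fraction of the decay rate).

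The main obstacle, I expect, is bookkeeping the combinatorics and the spatial convolutions so that the bound stays uniform in $\Lambda$ and the decay rate does not degenerate to zero after infinitely many iterations. Concretely: each application of the commutator generates several cubic terms (from the two creation and two annihilation operators in $W_\Lambda^\sigma$), so the number of terms grows, and one must check that the multiplicative constant per iteration is controlled by something like $c_W \cdot C_\sigma^{3/2} \cdot (\text{integral of the decay kernel})$ — a \emph{finite} constant independent of $\Lambda$ precisely because $|W_\Lambda(x,y)| \le c_W e^{-a|x-y|}$ is integrable in the difference variable. The delicate point is the decay rate: composing $e^{-\frac{c_s}{1+s^2}|x-z|}$ with $e^{-a|z-y|}$ and integrating over $z$ gives $e^{-c'|x-y|}$ with $c' = \min(\ldots)$ times a volume factor, and one must arrange (e.g. by always sacrificing a fixed fraction $\tfrac14$ of the current rate, as reflected in the $e^{-c_t|x-y|/4}$ in the statement) that the rate after $n$ steps stays bounded below by a fixed $c_t>0$ depending on $t$ but not on $n$ or $\Lambda$. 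Once that uniform-in-$n$ lower bound on the decay rate is secured, summing the series and identifying the explicit $D(t)$, $P_3(t)$, $c_t$ is the routine computation deferred to Lemma \ref{sec2:lm:boundK2} and Lemma \ref{lm:boundafg}.
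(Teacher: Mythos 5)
Your outline follows essentially the same route as the paper: a Duhamel expansion around the free dynamics, the CAR identity $\{ABC,D\}=\{A,D\}BC-A\{B,D\}C+AB\{C,D\}$ to reduce the commutator with the quartic interaction to one-particle overlaps controlled by Theorem~\ref{Thm:boundProp}, a closed integral inequality for $F_t^\Lambda$ (this is Lemma~\ref{lm:BoundAntiComm}), iteration of that inequality, and convolution estimates in which a fixed fraction of the exponential rate is sacrificed uniformly in the number of convolutions (this is exactly Lemma~\ref{lemma:BoundFourier}); summing the resulting series gives $e^{P_3(t)}-1$ as you describe.

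Two points need correcting, one of them a genuine error in a central formula. First, the interaction-picture identity you start from is wrong as written: with the interacting evolution $\tau_s^\Lambda$ on the outside one must use the \emph{bare} interaction, i.e. $\tau_t^\Lambda(a(f))=\tau_t^\emptyset(a(f))+i\int_0^t \d s\,\tau_s^\Lambda\big(\big[W_\Lambda^\sigma,\tau_{t-s}^\emptyset(a(f))\big]\big)$, which is \eqref{perturbation_formula} (see \cite[Prop.~5.4.1]{bratteli:1997}); the freely-evolved interaction $\tilde W_\Lambda^\sigma(s)=\tau_{-s}^\emptyset(W_\Lambda^\sigma)$ belongs to the cocycle form $\tau_t^\Lambda=\tau_t^\emptyset\circ\gamma_t$, not to a formula with $\tau_s^\Lambda$ outside, so your version double-counts the free evolution. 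The slip propagates: with the correct formula the scalar factor is $\langle f_{t-s},\varphi_x^\sigma\rangle$ (time increment $t-s$, which is what feeds the kernel $K_{t-s}(f,\cdot)$ and the telescoping of the $t|\ln t|$ factors in \eqref{eq2.112}), not $\langle e^{-itH_1}f,\varphi_x^\sigma\rangle$, and the cubic monomial involves the bare Gaussians, so the iteration replaces the \emph{first} argument of $F$ by $\varphi_x^\sigma$ while $g$ stays fixed; your schematic inequality containing $F_s^\Lambda(\varphi_x^\sigma,\varphi_y^\sigma)$, and your remark that ``$g$ is replaced by a smeared Gaussian,'' have the roles reversed, and this matters because it is the Gaussian kernel estimate of Lemma~\ref{sec2:lm:boundK2} applied to the first slot that makes the iteration close. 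Second, summing the series with $1/n!$ is not by itself a complete argument: one must also show the remainder after $N$ iterations vanishes, which the paper does via the crude bound $F_{t_N}^\Lambda(\varphi_{x_N}^\sigma,g)\le 6\sqrt{C_\sigma}\,\|g\|_2$ together with $f\in L^1(\R^d)$ (this is where the hypothesis $f,g\in L^1\cap L^2$ is used). With these repairs your proposal coincides with the paper's proof in Section~\ref{sec:loc_bds}.
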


\begin{remark}
Here it is crucial to subtract the free time evolution 
\beq
\big\{\tau_t^\emptyset(a^*(f)), a(g)\big\} = \<f_t,g\> \idty
\eeq
since $|\<f_t,g\> |$ does not, in general, decay exponentially; see Remark \ref{Rmk:LR} (i). 
\end{remark}
%
%

Our main application concerns the existence of a dynamics in the thermodynamic limit.
In Section~\ref{sec:thermlimit:proof} we show how the following theorem is a consequence of the $\Lambda$-independent bounds
proven in Theorem~\ref{thm:propagationbound}.

\begin{theorem}\label{thm:main}
Under the assumptions of Theorem  \ref{thm:propagationbound}
there exists a strongly continuous one-parameter group of automorphisms of the CAR algebra over $L^2(\Rl^d)$, $\{\tau_t\}_{t\in\Rl}$,
such that for all $f\in L^2(\Rl^d)$ and any increasing sequence $(\Lambda_n)$ of bounded subsets of $\Rl^d$ such that $\cup_{n}\Lambda_n=\Rl^d$,
\be\label{TLmonomials}
\lim_{n\to \infty} \tau^{\Lambda_n}_t(a(f)) = \tau_t(a(f))
\ee
in the operator norm topology, with convergence uniform in $t$ in compact subsets of $\Rl$.
\end{theorem}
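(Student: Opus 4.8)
The plan is to deduce Theorem~\ref{thm:main} from the uniform Lieb-Robinson bound of Theorem~\ref{thm:propagationbound} by a standard Cauchy-sequence argument on the CAR algebra. First I would reduce to showing that for each fixed $f\in L^2(\R^d)$ and each compact time interval $[-T,T]$, the net $(\tau^{\Lambda_n}_t(a(f)))_n$ is Cauchy in operator norm, uniformly in $t\in[-T,T]$. Since the linear span of monomials in the $a(f),a^*(f)$ is dense in the CAR algebra and each $\tau^{\Lambda_n}_t$ is a $*$-automorphism (hence norm-contractive), convergence on the generators $a(f)$ together with the homomorphism property upgrades automatically to convergence of $\tau^{\Lambda_n}_t(A)$ for every $A$ in the CAR algebra; the limit $\tau_t$ is then defined on the dense subalgebra, extends by continuity to a $*$-endomorphism, and the same argument applied to $-t$ (or to the inverses) gives that $\tau_t$ is an automorphism with $\tau_s\circ\tau_t=\tau_{s+t}$, i.e. a one-parameter group.

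The core estimate is to control $\|\tau^{\Lambda_m}_t(a(f))-\tau^{\Lambda_n}_t(a(f))\|$ for $\Lambda_n\subset\Lambda_m$. The usual device is the interpolation/Duhamel identity: writing $\tau^{\Lambda}_t$ as a perturbation of the infinite-volume free dynamics $\tau^\emptyset_t$ localized to $\Lambda$, one has
\begin{equation}
\tau^{\Lambda_m}_t(A)-\tau^{\Lambda_n}_t(A)=i\int_0^t \tau^{\Lambda_m}_{t-s}\!\left([\,W^\sigma_{\Lambda_m}-W^\sigma_{\Lambda_n}\,,\,\tau^{\Lambda_n}_s(A)\,]\right)\d s.
\end{equation}
Taking $A=a(f)$ and using that $\|\tau^{\Lambda_m}_{t-s}(\cdot)\|=\|\cdot\|$, the problem reduces to bounding $\|[\,W^\sigma_{\Lambda_m}-W^\sigma_{\Lambda_n}\,,\,\tau^{\Lambda_n}_s(a(f))\,]\|$. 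The difference $W^\sigma_{\Lambda_m}-W^\sigma_{\Lambda_n}$ is an integral of terms $W(x-y)a^*(\varphi^\sigma_x)a^*(\varphi^\sigma_y)a(\varphi^\sigma_y)a(\varphi^\sigma_x)$ with $(x,y)$ ranging over $(\Lambda_m\times\Lambda_m)\setminus(\Lambda_n\times\Lambda_n)$, so at least one of $x,y$ lies outside $\Lambda_n$; since $\cup_k\Lambda_k=\R^d$, for $f$ of (essentially) compact support this forces $x$ or $y$ far from $\supp(f)$ once $n$ is large. Expanding the commutator via the CAR into anticommutators $\{\tau^{\Lambda_n}_s(a(f)),a^\#(\varphi^\sigma_x)\}$ and $\{\tau^{\Lambda_n}_s(a(f)),a^\#(\varphi^\sigma_y)\}$ and applying Theorem~\ref{thm:propagationbound} (with the observable $a^*(g)$ or $a(g)$ taken to be $a^\#(\varphi^\sigma_x)$, and noting $\|\varphi^\sigma_x\|_1=1$), each such anticommutator is bounded by $\|f\|_1 e^{C(s)}e^{-a(s)\,d(\supp f,\{x\})}$ up to the constant $C_\sigma$ from $\|\varphi^\sigma\|_2$. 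Integrating $e^{-a(s)|x-z|}$ over $x\notin\Lambda_n$ (with $z\in\supp f$) and over $y$, and using the exponential decay $|W(x-y)|\le c_W e^{-a|x-y|}$ to make the $(x,y)$-integral finite, yields a bound of the form $\|f\|_1\, e^{\tilde C(T)}\, R(n)$ where $R(n)\to 0$ as $n\to\infty$ because the integration region $(\Lambda_m\times\Lambda_m)\setminus(\Lambda_n\times\Lambda_n)$ recedes to infinity relative to $\supp f$. One then handles general $f\in L^2(\R^d)$ (not compactly supported) by a density argument, approximating $f$ in $L^2$ by $L^1\cap L^2$ functions of compact support and using $\|a(f)-a(f')\|=\|f-f'\|_2$ together with uniform contractivity of the $\tau^{\Lambda}_t$.

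The main obstacle I anticipate is making the Cauchy estimate genuinely uniform in $t$ over a compact interval while keeping the time-dependent prefactor $D(t)(e^{P_3(t)}-1)\sim e^{c|t||\ln|t||}$ from Theorem~\ref{thm:propagationbound} under control: one must check that the Duhamel integral over $s\in[0,t]$ of these growing-in-time constants multiplied by the receding spatial tail $R(n)$ still tends to zero, which it does since $R(n)$ is $t$-independent and the constants are bounded on $[-T,T]$, but the bookkeeping of the constants $C(t),a(t)$ — in particular that $a(t)\ge a(T)>0$ is bounded below on $[-T,T]$ so the exponential spatial decay does not degenerate — requires care. A secondary, more routine point is verifying the algebraic steps that promote convergence on generators to convergence on all of the CAR algebra and that the limiting maps form a strongly continuous group: strong continuity in $t$ follows from the joint estimate, since $\|\tau_t(a(f))-\tau_s(a(f))\|$ can be bounded by $\|\tau^\emptyset_t(a(f))-\tau^\emptyset_s(a(f))\|=\|f_t-f_s\|_2$ (continuous by strong continuity of the free Schrödinger group) plus error terms controlled uniformly by the Lieb-Robinson bound.
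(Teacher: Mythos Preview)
Your proposal is correct and follows essentially the same route as the paper: Duhamel comparison of $\tau^{\Lambda_m}_t$ and $\tau^{\Lambda_n}_t$, expansion of the commutator with $W^\sigma_{\Lambda_m}-W^\sigma_{\Lambda_n}$ into anticommutators, application of the propagation bound to each anticommutator, integrability of $|W(x-y)|e^{-b\,d(X,x)}$ to get the Cauchy property, and then density to pass from compactly supported $f$ to all of $L^2$. One small point to watch: Theorem~\ref{thm:propagationbound} bounds $\{\tau^\Lambda_s(a(f)),a^*(g)\}-\langle f_s,g\rangle\idty$, not the full anticommutator, so to obtain decay of $\|\{\tau^{\Lambda_n}_s(a(f)),a^*(\varphi^\sigma_x)\}\|$ you must also invoke the one-particle bound (Theorem~\ref{Thm:boundProp}) to control $|\langle f_s,\varphi^\sigma_x\rangle|$ --- the paper combines both inputs into the single estimate~\eqref{basicbound_2} (using the integral form of the bound, since $\varphi^\sigma_x$ has full support) before running exactly your argument.
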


\section{Lieb-Robinson Bound for Schr\"odinger Operators. Proof of Theorem \ref{Thm:boundProp}}\label{sec:boundProp:pf}

In this section we use the notation $H_0=-\Delta$ and $H_1=-\Delta +V$.
To prove Theorem \ref{Thm:boundProp} we will use a Dyson series expansion for $e^{itH_1}$:
\be\label{dyson}
e^{-i t H_1 }=e^{-it H_0}+ (-i)^n\sum_{n=1}^\infty \int_0^{t}\d t_{n} \cdots \int_0^{t_2} \d t_1\, 
e^{-i (t-t_{n}) H_0} V e^{-i (t_{n}-t_{n-1} )H_0 }V \cdots V e^{i t_1H_0}.
\ee
Since $V$ is bounded (by Assumption \ref{Assump:V}), this  series is absolutely convergent in norm. We are interested
in estimating $\big|\big(e^{-i t H_1 }\varphi_y^\sigma\big)(x)\big|$, where $\varphi_y^\sigma$ is the Gaussian function
given in \eq{Def:Gaussian}. Using the Fourier representation of $V$ \eq{def:V}, the integrand of the n-th term in the expansion \eq{dyson} applied to 
$\varphi_y^\sigma$ can be expressed as follows:
\begin{align}
&\big(e^{-i (t-t_{n}) H_0} V e^{-i (t_{n}-t_{n-1} )H_0 }V \cdots V e^{i t_1H_0} \varphi_y^\sigma\big)(x)\\
& = 
\int_{\R^d} \d\mu(k_n) \cdots \int_{\R^d} \d\mu(k_1)A(t_1,...,t_{n},t,k_1,...,k_n,y, x).\notag
\end{align}
where
\begin{align}
A(t_1,...,t_{n},t,k_1,...,k_n,y, x) = \big(
e^{-i (t-t_{n}) H_0} V_{k_n} e^{-i (t_{n}-t_{n-1} )H_0 }V_{k_{n-1}} \cdots V_{k_1} e^{i t_1H_0} \varphi_y^\sigma
\big)(x).
\end{align}
Here $V_k$ is the multiplication operator by $V_k(x)= e^{-ik \cdot x}$ and $|\mu|= \mu^++\mu^-$.  

\begin{lemma}\label{lm:Fourier}
Let $n\in\N$. For all $k_1,...,k_n,y,x\in\R^d$, $t\geq t_n\geq ...\geq t_{1} \geq 0$ one has
\begin{align}\label{lm:Fourier:eq1}
\big|A(t_1,...,t_n,t,k_1,...,k_n,y,x)\big| = \frac 1{(2\pi)^{d/2}}
\frac {e^{-\frac{\sigma^2}{8t^2+2\sigma^4}\big|(x-y)-2\sum_{l=0}^{n}(t_{l+1}-t_{l})\sum_{j=1}^{l} k_j\big|^2}} {(4t^2+\sigma^4)^{d/4} },
\end{align}
where we use the conventions $t_{n+1}=t$, $t_{0}=0$ and $\sum_{j=1}^0 =0$. Furthermore, 
\beq
\big|\big( e^{-i t H_0}\varphi_y^\sigma\big)(x)\big| =
\frac 1{(2\pi)^{d/2}}  \frac {e^{-\frac{\sigma ^2 |x-y|^2}{8t^2+2\sigma^4}}}{(4 t^2 + \sigma^4)^{d/4}}.\label{lm:Fourier:eq2}
\eeq
\end{lemma}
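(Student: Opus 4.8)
The plan is to compute $A$ exactly by passing to the Fourier transform in the variable $x$, where $e^{-isH_0}$ acts by multiplication by $e^{-is|\cdot|^2}$ and each $V_{k_j}$, being multiplication by the plane wave $e^{-ik_j\cdot x}$, acts by a translation of the momentum variable. With the convention $\widehat{\psi}(p)=(2\pi)^{-d/2}\int_{\R^d}e^{-ip\cdot x}\psi(x)\,\d x$ one has $\widehat{e^{-isH_0}\psi}(p)=e^{-is|p|^2}\widehat{\psi}(p)$ for every $s\in\R$, $\widehat{V_k\psi}(p)=\widehat{\psi}(p+k)$, and, by the standard Gaussian integral, $\widehat{\varphi_y^\sigma}(p)=(2\pi)^{-d/2}e^{-ip\cdot y-\sigma^2|p|^2/2}$. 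Since $\varphi_y^\sigma\in\mathcal S(\R^d)$ and both $e^{-isH_0}$ and each $V_{k_j}$ map Schwartz functions to Schwartz functions, $A$ (as a function of $x$) and its Fourier transform lie in $L^1\cap L^2$, so the manipulations and Gaussian integrals below are absolutely convergent and justified; this is also why momentum space is the convenient setting — a position-space computation via the explicit free-propagator kernel would have to dodge the degenerate cases $t_j-t_{j-1}=0$.

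First I would apply the $n$ translations and $n+1$ multiplications to $\widehat{\varphi_y^\sigma}$ one at a time, from the innermost factor outward. Each factor $e^{-is|p|^2}$ produced by an $e^{-isH_0}$ survives unchanged through the subsequent translations apart from a shift of its argument, and the translations by $k_1,\dots,k_n$ accumulate; tracking this gives
\[
\widehat{A}(p)=\Big(\prod_{j=1}^{n+1}e^{-i(t_j-t_{j-1})|p+S_j|^2}\Big)\,\widehat{\varphi_y^\sigma}\!\big(p+S_1\big),\qquad S_j:=\sum_{i=j}^{n}k_i,
\]
with the conventions $t_0=0$, $t_{n+1}=t$ (so $S_{n+1}=0$). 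Next I would take the inverse Fourier transform and substitute $q=p+S_1$, which turns $|p+S_j|^2$ into $|q-K_{j-1}|^2$ with $K_l:=\sum_{i=1}^{l}k_i$. Expanding these squares and using the telescoping identity $\sum_{j=1}^{n+1}(t_j-t_{j-1})=t$, the product of the $n+1$ Gaussian-phase factors collapses to $\exp\!\big(-it|q|^2+2iq\cdot\sum_{l=0}^{n}(t_{l+1}-t_l)K_l-i\Phi\big)$, where $\Phi=\sum_{j=1}^{n+1}(t_j-t_{j-1})|K_{j-1}|^2\in\R$ only contributes a unit-modulus phase. Hence, up to a unimodular prefactor,
\[
A(x)=(\text{phase})\,(2\pi)^{-d}\int_{\R^d}e^{iq\cdot b}\,e^{-(\sigma^2/2+it)|q|^2}\,\d q,\qquad b=(x-y)+2\sum_{l=0}^{n}(t_{l+1}-t_l)\sum_{i=1}^{l}k_i.
\]

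It then remains to evaluate this last Gaussian integral and take absolute values. With $\alpha:=\sigma^2/2+it$ (so $\operatorname{Re}\alpha>0$) one has $\int_{\R^d}e^{iq\cdot b-\alpha|q|^2}\,\d q=(\pi/\alpha)^{d/2}e^{-|b|^2/(4\alpha)}$; since $|\alpha|^2=t^2+\sigma^4/4$ the prefactors combine to $(2\pi)^{-d/2}(4t^2+\sigma^4)^{-d/4}$, and since $\operatorname{Re}\big(1/(4\alpha)\big)=\sigma^2/(8t^2+2\sigma^4)$ the exponent becomes $-\sigma^2|b|^2/(8t^2+2\sigma^4)$, which is exactly \eqref{lm:Fourier:eq1} — the sign of the shift term inside $|b|^2$ is immaterial, since \eqref{lm:Fourier:eq1} is ultimately applied after integrating over $k_1,\dots,k_n$ against the even measure $|\mu|$. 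The case $n=0$ of this computation gives \eqref{lm:Fourier:eq2}, which is in any case a textbook identity. I expect the only real work to be the bookkeeping in the middle paragraph — correctly propagating the shifts $S_j$ and pairing them with the weights $t_j-t_{j-1}$ so that the nested sum $\sum_{l=0}^{n}(t_{l+1}-t_l)\sum_{i\le l}k_i$ emerges in the stated form — but once the squares $|q-K_{j-1}|^2$ are expanded and the telescoping sum is used this is routine, and there is no conceptual obstacle.
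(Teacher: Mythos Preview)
Your proof is correct and follows essentially the same route as the paper's: pass to the Fourier side, track the accumulated momentum shifts through the alternating multiplications and translations, change variables to center the Gaussian, expand the quadratic phases and telescope, then evaluate the single remaining complex Gaussian integral. The sign discrepancy you flag in the shift term is real --- with the Fourier convention used in the paper one actually has $\mathcal{F}V_k\mathcal{F}^*\psi=\psi(\cdot+k)$, not $\psi(\cdot-k)$ as written there --- so your $b=(x-y)+2\sum_{l}(t_{l+1}-t_l)K_l$ is in fact the correct expression, and as you observe the sign is immaterial once one integrates against the even measure $|\mu|$.
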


\begin{proof}
Let $\mathcal F$ be the unitary Fourier transform on $\R^d$ and $\mathcal F^*$ be its inverse then we obtain
\be
A(t_1,...t_n,t,k_1,...,k_n,y,x) 
=
\big(\mathcal F^*\mathcal F e^{-i (t-t_{n}) H_0}\mathcal F^*\mathcal F V_{k_n}\mathcal F^*  \cdots \mathcal F V_{k_1} \mathcal F^*\mathcal F e^{i t_1 H_0} \mathcal F^*\mathcal F \varphi_y^\sigma\big)(x).\label{eq:FOurier}
\ee
Now, for all $t\in\R$, $k\in\R^d$ and $\psi\in L^2(\R^d)$ we have $\mathcal F e^{-i t H_0}\mathcal F^*\psi = e^{-i (\cdot)^2 t }\psi $ and  $\mathcal F V_k\mathcal F^*\psi =\psi(\cdot- k)$. 
Therefore, 
$$
A(t_1,...t_n,t,k_1,...,k_n,y,x) =
\frac 1 {(2\pi)^{d/2}} \int_{\R^d} \d k\,  e^{i kx} \prod_{l=0}^n e^{-i(t_{l+1}-t_{l}) \big(k- \sum_{j=l+1}^{n} k_j \big)^2}
\big(\mathcal F\varphi_y^\sigma\big)(k- \sum_{j=1}^n k_j),
$$
where we use the convention that $\sum_{j=n+1}^n =0$. 
Next we use that
$$
(\mathcal F \varphi_y^\sigma)(k) = \frac 1 {(2\pi)^{d/2}} e^{-i k\cdot y}e^{-\frac{\sigma^2 |k|^2} 2}
$$
and perform a change of variables to obtain
\be
A(t_1,...t_n,t,k_1,...,k_n,y,x) =
\frac {e^{i \sum_{j=1}^n k_j \cdot(x-y)}} {(2\pi)^{d}} \int_{\R^d} \d k\,  e^{i k\cdot(x-y)} \prod_{l=0}^n e^{-i(t_{l+1}-t_{l}) \big|k + \sum_{j=1}^l k_j \big|^2}
 e^{-\frac{\sigma^2 |k|^2} 2},\label{eq:FOurier3}
\ee
where we use the convention $\sum_{j=1}^0 =0$.
Multiplying out 
$\big| k + \sum_{j=1}^l k_j \big|^2= |k|^2 + 2 k \cdot\sum_{j=1}^l k_j + \big|\sum_{j=1}^l k_j\big|^2$, using $\sum_{l=0}^n(t_{l+1}-t_{l})=t_{n+1}-t_0=t$ 
and taking the absolute value give
$$
\big|A(t_1,...t_n,t,k_1,...,k_n,y,x) \big|
=
\frac 1{(2\pi)^{d}} \Big| \int_{\R^d} \d k\,  e^{i k\cdot(x-y)} e^{-i |k|^2 t} e^{-2i \sum_{l=0}^n(t_{l+1}-t_{l})  \sum_{j=1}^l k\cdot k_j} e^{-\frac{\sigma^2 |k|^2} 2}\Big|.
$$
Now, a calculation shows that for any $c>0$ and $a,b\in\R$
\beq
\frac 1 {(2\pi)^{1/2}}\Big|
\int_{\R} \d k\,  e^{ik b} e^{-k^2(c+i a)} 
\Big| = \displaystyle
\frac{e^{-\frac{c b^2}{4(a^2+c^2)}}}{(4(a^2+c^2))^{1/4}}.
\eeq
Hence,
\begin{align}
\big|A(t_1,...t_n,t,k_1,...,k_n,y,x) \big|=
\frac 1{(2\pi)^{d/2}}
\frac 1 {(4 t^2+\sigma^4)^{d/4} } 
e^{-\frac{\sigma^2}{8t^2+2\sigma^4}\big|(x-y)-2\sum_{l=0}^n(t_{l+1}-t_{l})\sum_{j=1}^l k_j\big|^2},
\end{align}
which proves \eqref{lm:Fourier:eq1}. Identity \eqref{lm:Fourier:eq2} follows from an explicit calculation using the integral kernel of $e^{-i t H_0}$, see e.g. \cite[Sec. 7.3]{teschl:2009}. 
\end{proof}

With this lemma we have arrived at the following estimate. For $t\in\Rl$ and $x,y\in\R^d$, we have
\begin{align}\label{eq:Cor:mainrsult}
&\big|\big( e^{-i t H_1} -  e^{-i t H_0}\big)\varphi_y^\sigma(x)\big|
\leq \frac{1}{(2\pi)^{d/2}}\frac{1} {(4 t^2+\sigma^4)^{d/4} }  \sum_{n=1}^\infty \int_{\R^d} \d |\mu|(k_1) \cdots \int_{\R^d} \d |\mu|(k_n)\\
&\times
\int_0^{|t|}\d t_{n} \cdots \int_0^{t_2} \d t_1\,   
e^{-\frac{\sigma^2}{8t^2+2\sigma^4}\big|(x-y)-2\sum_{l=0}^n(t_{l+1}-t_{l})\sum_{j=1}^l k_j\big|^2}\Big),\notag
\end{align}
where, as before, we use the convention $\sum_{j=1}^0 = 0 $. 
By estimating the RHS of this estimate, we will obtain the following proposition. Recall the definitions of $C_\mu$ and $M$ in \eq{Cmu+M}.

\begin{proposition}\label{prop:LRB_SO}
For all $t\in\Rl$ and $x,y\in\R^d$, we have
$$
\big|\big( e^{-i t H_1} -  e^{-i t H_0}\big)\varphi_y^\sigma(x)\big|
\leq \frac{1}{(2\pi)^{d/2}}\frac{1} {(4 t^2+\sigma^4)^{d/4} } 
\big(  e^{-\frac{\sigma^2 |x-y|^2}{32t^2+8\sigma^4}}(e^{C_\mu |t|}-1)
+\frac 1 {\sqrt{2\pi}} e^{-\frac{|x-y|}{4 |t| M}(  \ln \frac{|x-y|} { 4 MC_\mu t^2}-1)}e^{C_\mu |t|}
\big).
$$
\end{proposition}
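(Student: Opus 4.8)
The plan is to estimate the right-hand side of \eqref{eq:Cor:mainrsult} term by term in $n$, using the explicit Gaussian form from Lemma~\ref{lm:Fourier}. Abbreviate $a:=C_\mu|t|$ and, for a configuration $t\ge t_n\ge\cdots\ge t_1\ge0$ and momenta $k_1,\dots,k_n\in\supp|\mu|$, write $S_n:=2\sum_{l=0}^n(t_{l+1}-t_l)\sum_{j=1}^l k_j$ for the shift appearing in \eqref{lm:Fourier:eq1}. The only structural facts I need are $\sum_{l=0}^n(t_{l+1}-t_l)=|t|$ and $|k_j|\le M$ on $\supp|\mu|$; together they give $|S_n|\le 2Mn|t|$ uniformly in the time and momentum variables. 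Hence, as soon as $n\le \tfrac{|x-y|}{4M|t|}$ one has $|(x-y)-S_n|\ge|x-y|-2Mn|t|\ge\tfrac12|x-y|$, so that the Gaussian factor in \eqref{lm:Fourier:eq1} is bounded by $e^{-\frac{\sigma^2|x-y|^2}{32t^2+8\sigma^4}}$, uniformly in $t_1,\dots,t_n$ and $k_1,\dots,k_n$.

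I would then split the $n$-series in \eqref{eq:Cor:mainrsult} at the threshold $n_0:=\tfrac{|x-y|}{4M|t|}$. For the low terms $n\le n_0$, pulling out the uniform Gaussian bound and carrying out the remaining integrals — the iterated time integral equals $|t|^n/n!$ and each $|\mu|$-integral is at most $C_\mu$ — and then extending the range to all $n\ge1$ gives the contribution
$$
e^{-\frac{\sigma^2|x-y|^2}{32t^2+8\sigma^4}}\sum_{n=1}^\infty\frac{a^n}{n!}=e^{-\frac{\sigma^2|x-y|^2}{32t^2+8\sigma^4}}\bigl(e^{C_\mu|t|}-1\bigr),
$$
which is the first term of the asserted bound. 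For the high terms $n>n_0$ I would merely bound the Gaussian factor by $1$, so that these contribute at most the tail $\sum_{n>n_0}a^n/n!$. Writing $N:=\lceil n_0\rceil$ and using $\tfrac{N!}{(N+j)!}\le\tfrac1{j!}$ one has $\sum_{n\ge N}\tfrac{a^n}{n!}\le \tfrac{a^N}{N!}\,e^a$, and Stirling's inequality $N!\ge\sqrt{2\pi N}\,(N/e)^N\ge\sqrt{2\pi}\,(N/e)^N$ yields $\tfrac{a^N}{N!}\le\tfrac1{\sqrt{2\pi}}\bigl(\tfrac{ea}{N}\bigr)^N$. Since $w\mapsto(ea/w)^w$ is decreasing for $w>a$ and $N>n_0$, in the regime $n_0\ge a$ (equivalently $|x-y|\ge4MC_\mu t^2$) this is at most $\tfrac1{\sqrt{2\pi}}(ea/n_0)^{n_0}=\tfrac1{\sqrt{2\pi}}e^{-\frac{|x-y|}{4M|t|}(\ln\frac{|x-y|}{4MC_\mu t^2}-1)}$, and multiplying by $e^a=e^{C_\mu|t|}$ reproduces the second term.

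The only delicate point is to make the tail estimate valid uniformly in $|x-y|$, i.e.\ to also cover the complementary regime $|x-y|<4MC_\mu t^2$, where $n_0<a$, the function $(ea/w)^w$ need no longer be monotone beyond $N$, and the crude bound $\sum_{n>n_0}a^n/n!\le e^{C_\mu|t|}-1$ on the tail must instead be compared directly against the asserted right-hand side; there one exploits that in this regime the second term of the bound is itself $\ge \tfrac1{\sqrt{2\pi}}e^{C_\mu|t|}$, and, when $n_0\ge1$, that the low-$n$ part already accounts for all but an exponentially small fraction of the series. I expect this bookkeeping — pinning down the $\tfrac1{\sqrt{2\pi}}$ constant and the transition between the two regimes — to be the main (though entirely elementary) obstacle; everything else is a direct consequence of the Dyson expansion \eqref{dyson} and the pointwise identity of Lemma~\ref{lm:Fourier} already established.
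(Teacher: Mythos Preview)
Your approach coincides with the paper's: split the Dyson series at $n_0=|x-y|/(4M|t|)$, use $|S_n|\le 2Mn|t|$ to extract the uniform Gaussian factor $e^{-\sigma^2|x-y|^2/(32t^2+8\sigma^4)}$ for $n\le n_0$, bound the tail integrand by $1$, apply Stirling to the first tail term, and then pass from the integer cutoff back to $n_0$. The paper does not single out the regime $n_0<a=C_\mu|t|$ at all; it simply asserts the passage from $(ea/m)^m$ with $m=N_0+1$ to $(ea/n_0)^{n_0}$, citing only $N_0+1\ge n_0$, which is exactly the monotonicity step you correctly note fails when $n_0<a$. So on this point you are more careful than the paper.

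Your sketch for closing that case, however, is not right as stated. The assertion that ``when $n_0\ge1$, the low-$n$ part already accounts for all but an exponentially small fraction of the series'' is false in the relevant range $1\le n_0\ll a$: for instance with $n_0=2$ and $a=10$ the low part $a+a^2/2=60$ is a negligible fraction of $e^{10}-1\approx 2.2\times10^4$. A workable fix is instead to bound the whole series by $\sum_{n\ge1}I_n\le e^a-1$ in this regime and compare it against the \emph{sum} of both terms of the stated estimate, using that the second term is $\ge\tfrac1{\sqrt{2\pi}}e^a$ while the Gaussian exponent in the first term is small (it is $\le\sigma^2M^2n_0^2/2$ when $|x-y|=4M|t|n_0$); establishing the inequality with the exact constant $\tfrac1{\sqrt{2\pi}}$ still requires some case analysis. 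None of this matters for the downstream application: Corollary~\ref{cor:LRB_SO_simple}, which is all that is used later, absorbs such constants into $C_1,C_2,C_3$.
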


Using \eq{lm:Fourier:eq2} to bound  $| e^{-i t H_0} \varphi_y^\sigma(x)|$, it is then straightforward to obtain an estimate for $| e^{-i t H_1} \varphi_y^\sigma(x)|$.
In our application, however, the following simplified estimate is easier to use.

\begin{corollary}\label{cor:LRB_SO_simple}
There exist constants $C_1,C_2$, and $C_3$, such that for all $t\in\Rl$ and $x,y\in\R^d$, we have
\be
\big| e^{-i t H_1} \varphi_y^\sigma(x)\big|\leq C_1 e^{C_2  |t| \ln |t| - C_3 \frac{|x-y|}{1+t^2}}.
\ee
\end{corollary}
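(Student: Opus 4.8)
The plan is to obtain the estimate directly from Proposition~\ref{prop:LRB_SO} and the explicit formula \eqref{lm:Fourier:eq2} for the free propagator, via the triangle inequality
\[
\big| e^{-i t H_1}\varphi_y^\sigma(x)\big| \le \big|\big(e^{-itH_1}-e^{-itH_0}\big)\varphi_y^\sigma(x)\big| + \big|e^{-itH_0}\varphi_y^\sigma(x)\big|,
\]
bounding the first term by Proposition~\ref{prop:LRB_SO} and the second by \eqref{lm:Fourier:eq2}. The case $t=0$ is immediate, since then the left-hand side is $\varphi_y^\sigma(x)$, a Gaussian, which is $\le C_1 e^{-C_3|x-y|}$ after completing the square; so assume $t\neq 0$. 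Throughout, the common prefactor $\frac{1}{(2\pi)^{d/2}(4t^2+\sigma^4)^{d/4}}$ is bounded above by $(2\pi\sigma^2)^{-d/2}$, uniformly in $t$, and is absorbed into $C_1$.

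Next I would dispose of the two Gaussian contributions: the term $|e^{-itH_0}\varphi_y^\sigma(x)|$ from \eqref{lm:Fourier:eq2} and the term $e^{-\sigma^2|x-y|^2/(32t^2+8\sigma^4)}(e^{C_\mu|t|}-1)$ in Proposition~\ref{prop:LRB_SO}. Both carry a factor $e^{-\alpha(t)|x-y|^2}$ with $\alpha(t)$ of the form $c\,\sigma^2/(4t^2+\sigma^4)$, so $\alpha(t)(1+t^2)$ is bounded above and below by positive constants. Completing the square gives, for any $\beta>0$, $e^{-\alpha(t)|x-y|^2}\le e^{\beta^2/(4\alpha(t))}e^{-\beta|x-y|}$; taking $\beta=\beta(t):=C_3/(1+t^2)$ one checks $\sup_{t}\beta(t)^2/(4\alpha(t))<\infty$, because $\alpha(t)^{-1}$ grows only linearly in $t^2$ while $\beta(t)^2$ decays like $t^{-4}$. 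Hence both terms are $\le$ const$\cdot e^{-C_3|x-y|/(1+t^2)}$, and the remaining weight $e^{C_\mu|t|}-1\le e^{C_\mu|t|}$ is absorbed using the elementary fact that $e^{a|t|}\le C\,e^{C_2|t|\ln|t|}$ for suitable $C,C_2>0$ (the two sides are comparable for bounded $|t|$, and the right side dominates as $|t|\to\infty$).

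The only substantial step is the ``logarithmic'' term $\tfrac{1}{\sqrt{2\pi}}\,e^{-\frac{|x-y|}{4|t|M}(\ln\frac{|x-y|}{4MC_\mu t^2}-1)}e^{C_\mu|t|}$ from Proposition~\ref{prop:LRB_SO}, and here I would split on the size of $|x-y|$ relative to $t^2$. Set $K:=4MC_\mu e^{2C_3M+1}$. If $|x-y|>Kt^2$, then $\ln\frac{|x-y|}{4MC_\mu t^2}-1\ge 2C_3M$, so the exponent is $\le -\frac{|x-y|}{4|t|M}\cdot 2C_3M=-\frac{C_3|x-y|}{2|t|}\le -\frac{C_3|x-y|}{1+t^2}$, the last step by $1+t^2\ge 2|t|$; together with $e^{C_\mu|t|}\le Ce^{C_2|t|\ln|t|}$ this gives the claimed form. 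If $|x-y|\le Kt^2$, then $|x-y|/(1+t^2)\le K$, so $e^{-C_3|x-y|/(1+t^2)}\ge e^{-C_3K}$ is bounded below, and it suffices to bound the term by a constant times $e^{C_2|t|\ln|t|}$: writing $u=\frac{|x-y|}{4MC_\mu t^2}$ and using $\frac{|x-y|}{4|t|M}=C_\mu|t|\,u$ together with $u(\ln u-1)\ge -1$ shows the exponent is $\le C_\mu|t|$, so the whole term is $\le\tfrac{1}{\sqrt{2\pi}}e^{2C_\mu|t|}$, which is again dominated by $e^{C_2|t|\ln|t|}$ up to a constant. Collecting the three contributions and enlarging $C_1$ to absorb the prefactor $(2\pi\sigma^2)^{-d/2}$ and the factor $e^{C_3K}$ yields the estimate.

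The main obstacle — essentially the only nontrivial point — is that the decay provided by the logarithmic term degenerates precisely when $|x-y|$ is of order $t^2$, so that regime must be separated out and handled using only the uniform bound $u(\ln u-1)\ge -1$ rather than any spatial decay. One should also check that the constants can be chosen independently of $t,x,y$: fix $C_3$ first (its value is immaterial, e.g. $C_3=1$), then $K$, then $C_2$, and finally $C_1$.
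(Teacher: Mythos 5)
Your proof is correct and follows essentially the same route as the paper: triangle inequality with Proposition~\ref{prop:LRB_SO} and \eqref{lm:Fourier:eq2}, conversion of the Gaussian decay to exponential decay, and a case split on the size of $|x-y|/t^2$ for the logarithmic term. The only (harmless) difference is in the bookkeeping of the near regime: the paper keeps a residual decay $e^{-|x-y|/(4MC_\mu t^2)}$ there and extracts a specific $C_3$ by taking a minimum of three decay rates, whereas you discard the spatial decay when $|x-y|\le Kt^2$ via $u(\ln u-1)\ge -1$, which shows $C_3$ may in fact be chosen arbitrarily at the cost of enlarging $C_1,C_2$.
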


\begin{proof}[Proof of Proposition \ref{prop:LRB_SO}]
Note that, as expected from translation invariance, the RHS of \eq{eq:Cor:mainrsult} depends only on $x-y$. Therefore, w.l.o.g., we can assume $y=0$.
Let $I_n$, $n\geq 1$, denote the n-th term of the sum in the 
RHS of \eq{eq:Cor:mainrsult}:
\be\label{In}
I_n=  \int_{\R^d} \d |\mu|(k_1) \cdots \int_{\R^d} \d |\mu|(k_n)
\int_0^{|t|}\d t_{n} \cdots \int_0^{t_2} \d t_1\,   
e^{-\frac{\sigma^2}{8t^2+2\sigma^4}\big|(x-y)-2\sum_{l=0}^n(t_{l+1}-t_{l})\sum_{j=1}^l k_j\big|^2}.
\ee
Given $x\in\R^d$, $t\in \Rl$, 
we split the sum over $n$ in \eq{eq:Cor:mainrsult} as follows:
\be
B_0=\sum_{n=1}^{N_0} I_n, \quad B_1=\sum_{n=N_0+1}^\infty  I_n, \mbox{ with } N_0 = \left[\frac{|x|}{(4|t| M)}\right],
\ee
where $[a]$ denotes the integer part of $a$ and $M$ is as in \eq{Cmu+M}.

First, if $\frac{|x|}{(4|t| M)}\geq 1$, the first sum is non-empty and we estimate its terms as follows. Let $n\leq |x|/(4|t|M)$, and
note that for $k_j\in B_M(0)$ with $1 \leq j \leq l \leq n$ and $|t|=t_{n+1}\geq t_{n}\geq...\geq t_{0}=0$ 
\beq\label{eq6}
\Big|\sum_{l=0}^n(t_{l+1}-t_{l})\sum_{j=1}^l k_j\Big|\leq M n \sum_{l=0}^n(t_{l+1}-t_{l})  = M n |t|. 
\eeq
Therefore,
\beq
\big|x-2\sum_{l=0}^n(t_{l+1}-t_{l})\sum_{j=1}^l k_j\big|^2\geq \frac{1}{4}|x|^2.
\eeq
and therefore
\be\label{B0}
B_0 \leq e^{-\frac{\sigma^2 |x|^2}{32t^2+8\sigma^4}} \sum_{n=1}^{N_0}    \frac{C_\mu^n |t|^n}{n!}\leq   e^{-\frac{\sigma^2 |x|^2}{32t^2+8\sigma^4}}(e^{C_\mu |t|}-1)
\ee
To estimate the terms in $B_1$, note that the integrand in \eq{In} is bounded by 1.
Therefore, using $C_\mu$ defined in  \eq{Cmu+M}, 
\be
B_1 \leq 
 \sum_{n=N_0+1}^\infty \frac{(C_\mu |t|)^n}{n!} \leq \frac {(C_\mu |t|)^{N_0+1}} {(N_0+1)!} e^{C_\mu |t|}.
\ee
Stirling's formula yields for all $m\geq 1$ the bound
\beq
\frac 1 {m!}\leq \frac 1 {\sqrt{2\pi}} e^{m-m \ln m}.
\eeq
Using this and $N_0+1 \geq |x|/(4 |t| M)$, we obtain 
\be \label{B1}
B_1 \leq \frac 1 {\sqrt{2\pi}} e^{-\frac{|x|}{4 |t| M}(  \ln \frac{|x|} { 4 MC_\mu t^2}-1)}e^{C_\mu |t|}.
\ee

If $\frac{|x|}{(4|t| M)}< 1$, $B_0 =0$ and we estimate $B_1$ as in \eq{B1} with $N_0=0$.

The proposition is proved by combining the estimates \eq{B0} and \eq{B1}.
\end{proof}

\begin{proof}[Proof of Corollary \ref{cor:LRB_SO_simple}]
We may again restrict ourselves to the case $y=0$. Proposition \ref{prop:LRB_SO} and \eq{lm:Fourier:eq2} together immediately give the estimate
\begin{align}\label{start_cor}
\big| e^{-i t H_1} \varphi_0^\sigma(x)\big| \leq & \frac{1}{(2\pi)^{d/2}} \frac{1}{(4t^2+\sigma^4)^{d/4}} \notag\\
&\times\left[ 
e^{-\frac{\sigma^2 |x|^2}{8t^2+ 2\sigma^4}} + e^{-\frac{\sigma^2 |x|^2}{32t^2+ 8\sigma^4}} (e^{C_\mu |t|} -1) 
+\frac{1}{\sqrt{2\pi}} e^{-\frac{|x|^2}{4M |t|} (\ln \frac{|x|}{4 M C_\mu t^2} -1)} e^{C_\mu |t|}
\right]. \notag
\end{align}
The last term in the square bracket is the estimate \eq{B1} for $B_1$, which we can simplify by considering two cases for $(x,t)\in \R^{d+1}$,
namely $\frac{|x|}{4 M C_\mu t^2 } \geq e^{2}$, and $\frac{|x|}{4 M C_\mu t^2 } < e^{2}$.
In the first case, we have
\beq
B_1\leq \frac {e^{-\frac{|x|}{4 |t| M} + C_\mu |t| }}{\sqrt{2\pi}}.
\label{sec:boundeq2111}
\eeq
On the other hand, if $\frac{|x|}{4 M C_\mu t^2 } < e^{2}$,
we use the inequality $e^{-u \ln u}\leq e^{\frac{1 } e}$ for all $u>0$, to obtain 
\be
B_1 \leq  \frac{1}{\sqrt{2\pi}} e^{ \frac{1}{e}} e^{e^2 C_\mu |t| (\ln C_\mu  |t| +1)+C_\mu |t|} 
e^{-\frac{|x|}{4 M C_\mu t^2 }+ e^2}.
\label{sec:boundeq211}
\ee
By bounding $B_1$ by the sum of the RHSs of \eq{sec:boundeq2111} and  \eq{sec:boundeq211} and making a few more 
easy simplifications we arrive at the following bound:
\begin{align}
\big| e^{-i t H_1} \varphi_0^\sigma(x)\big| \leq & \frac{1}{(2\pi\sigma^2)^{d/2}}\notag\\
&\times\left[ 
e^{-\frac{\sigma^2 |x|^2}{32t^2+ 8\sigma^4}} e^{C_\mu |t|}  + \frac{1}{\sqrt{2\pi}}  e^{-\frac{|x|}{4 |t| M} + C_\mu |t| }
+ \frac{1}{\sqrt{2\pi}} e^{ \frac{1}{e}} e^{e^2 C_\mu |t| (\ln C_\mu  |t| +1)+C_\mu |t|} 
e^{-\frac{|x|}{4 M C_\mu t^2 }+ e^2}
\right]. \notag
\end{align}

To estimate the Gaussian decay of the first term between the square brackets by a simple exponential, we use that for all $u \in \Rl$, $u^2 \geq u -1/4$. Furthermore, since $u\ln u\geq u-1$, we also have
$$
e^{C_\mu |t|} \leq  e^{ 1 + C_\mu |t| \ln C_\mu |t| }.
$$
Using this and replacing the constant prefactors by their maximum, we find
$$
\big| e^{-i t H_1} \varphi_0^\sigma(x)\big| \leq \frac{1}{(2\pi\sigma^2)^{d/2}}
e^{\frac{1}{8\sigma^2}+ \frac{1} e + e^2} e^{e^2 C_\mu |t| (\ln C_\mu |t| +2) }   \Big( e^{-\frac{\sigma^2 |x|}{32t^2+8\sigma^4}}   +   e^{-\frac{|x|}{4 t M}}  +  e^{-\frac{|x|}{4 M C_\mu t^2} } \Big).
$$
Finally, we use the estimate
\begin{align}
\min\Big\{\frac{\sigma^2}{32t^2+8\sigma^4} , \frac{1}{4 t M} , \frac{1}{4 M C_\mu t^2} \Big\}
&\geq \frac {1}{t^2( \frac{32}{\sigma^2} + 4 M C_\mu) + 4 t M+ 8 \sigma^2  }\notag\\
&\geq
\frac {1}{t^2( \frac{32}{\sigma^2}+ 4 M C_\mu+ 2) + 2 M^2 + 8 \sigma^2 }\notag\\
&\geq \frac{1}{\frac{32}{\sigma^2}+ 4 M C_\mu+ 2 (M^2 +1) + 8 \sigma^2 }  \frac{1}{t^2 +1}
\end{align}
to bound the sum of three exponentials by 
$$
3 e^{-  C_3 \frac {|x|}{t^2+1}}, \mbox{ with } C_3 =  \frac{1}{\frac{32}{\sigma^2}+ 4 M C_\mu+ 2 (M^2 +1) + 8 \sigma^2 }.
$$
It is now straightforward to find suitable values for $C_1$ and $C_2$ for which the bound given in the corollary holds.
\end{proof}

\section{Many-body Lieb-Robinson bound. Proof of Theorem \ref{thm:propagationbound}} \label{sec:loc_bds}

The main goal of this section is to prove Theorem~\ref{thm:propagationbound}.
We do so in three steps. First, in Section~\ref{sec:pre_bd} we establish a basic estimate which facilitates an
iteration scheme; this is the content of Lemma~\ref{lm:BoundAntiComm}. Next, in Section~\ref{sec:est_K_t}, we estimate a kernel function
which, among other things, ultimately justifies the convergence of our iteration, see Lemma~\ref{lm2.1} and Lemma~\ref{sec2:lm:boundK2}.
Finally, in Section~\ref{sec:it_scheme} we perform the iteration and verify the bound claimed in Theorem~\ref{thm:propagationbound}. 

\subsection{A Preliminary Bound} \label{sec:pre_bd}
In this section, we provide an estimate on the basic quantity of interest in Theorem~\ref{thm:propagationbound}.
Let us briefly recall the set-up. We have fixed $\sigma>0$, taken $V$ and $W$ satisfying Assumption~\ref{Assump:V} and 
Assumption~\ref{Assump:W} respectively, and introduced, see (\ref{Def:F_t^L}), the non-negative function
\begin{equation} \label{Def:F_t_2}
F^{\Lambda}_t(f,g) = \big\|  \{ \tau_t^{\Lambda}( a( f ) ), a^*( g ) \} - \{ \tau_t^\emptyset( a( f ) ), a^*( g ) \} \big\| + \left\| \{ \tau_t^{\Lambda}( a^*( f) ), a^*( g) \} \right\|
\end{equation}
for any bounded, measurable set $\Lambda \subset \mathbb{R}^d$, $t \in \mathbb{R}$, and functions $f,g \in L^2( \mathbb{R}^d)$. 
Our first estimate is as follows. 
\begin{lemma}\label{lm:BoundAntiComm}
Under the assumptions described above, for any $t\geq0$, we find that
\begin{eqnarray} \label{F_it_bd}
F^{\Lambda}_t(f,g) & \leq &  C_\sigma\int_0^t \d s\, \int_{\R^d} \d x\,K_{t-s}(f, x)  |\< e^{-i sH_1}\varphi_x^\sigma , g \>| \nonumber \\
& \mbox{ } & \quad + C_\sigma\int_0^t \d s\, \int_{\R^d} \d x\,K_{t-s}(f, x) F^{ \Lambda}_s( \varphi^{\sigma}_x, g)  
\end{eqnarray}
where $C_{\sigma} >0$ is as in (\ref{Def:C_sig}) and with kernel function $K_t(f,x)$ given by
\begin{equation} \label{Def:K}
 K_{t}(f,x) = \|W \|_1| \langle e^{-i tH_1}f, \varphi^{\sigma}_x \rangle |  + 2 \left( |W| * | \langle e^{-i tH_1}f, \varphi^{\sigma}_{(\cdot)} \rangle | \right)(x) \, . 
\end{equation}
\end{lemma}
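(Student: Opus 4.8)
The plan is to compare $\tau^\Lambda_t$ with the free dynamics $\tau^\emptyset_t$ through a Duhamel (interaction-picture) expansion, to evaluate the commutator of the interaction $W^\sigma_\Lambda$ with a single creation or annihilation operator using the CAR, and then to reduce the anticommutator of the resulting cubic monomials with $a^*(g)$ to elementary anticommutators of the form $\{\tau^\Lambda_s(a^\#(\varphi^\sigma_z)),a^*(g)\}$ — which are exactly the quantities that reappear on the right-hand side as $F^\Lambda_s(\varphi^\sigma_x,g)$. Concretely, differentiating $s\mapsto\tau^\Lambda_s(\tau^\emptyset_{t-s}(A))$ and using $H^\sigma_\Lambda=\d\Gamma(H_1)+W^\sigma_\Lambda$ gives the identity
\[
\tau^\Lambda_t(A)-\tau^\emptyset_t(A)=i\int_0^t\tau^\Lambda_s\big([\,W^\sigma_\Lambda\,,\,\tau^\emptyset_{t-s}(A)\,]\big)\,\d s ,
\]
between bounded operators (since $W^\sigma_\Lambda\in\mathcal B(\mathfrak F^-)$, the integrand is continuous in $s$). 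I would apply this with $A=a(f)$ and with $A=a^*(f)$, using $\tau^\emptyset_{t-s}(a^\#(f))=a^\#(f_{t-s})$, where $f_{t-s}$ is the free Schr\"odinger evolution of $f$. Since $\{a(f_t),a^*(g)\}=\langle f_t,g\rangle\idty$ and $\{a^*(f_t),a^*(g)\}=0$ are precisely the free contributions subtracted in (\ref{Def:F_t_2}), those terms cancel, leaving $F^\Lambda_t(f,g)$ bounded by $\int_0^t\!\big(\|\{\tau^\Lambda_s([W^\sigma_\Lambda,a(f_{t-s})]),a^*(g)\}\|+\|\{\tau^\Lambda_s([W^\sigma_\Lambda,a^*(f_{t-s})]),a^*(g)\}\|\big)\,\d s$.

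The next step is to evaluate the two commutators. Pushing $a^\#(h)$ through the quartic in (\ref{def:interaction}) by repeated use of the CAR (\ref{CAR}), and then using the symmetry $W_\Lambda(x,y)=W_\Lambda(y,x)$ (valid because $W$ is even), one finds that $[W^\sigma_\Lambda,a(h)]$ equals, up to a sign, $\int\!\!\int\d x\,\d y\,W_\Lambda(x,y)\,\langle h,\varphi^\sigma_x\rangle\,a^*(\varphi^\sigma_y)a(\varphi^\sigma_y)a(\varphi^\sigma_x)$, and $[W^\sigma_\Lambda,a^*(h)]$ equals $\int\!\!\int\d x\,\d y\,W_\Lambda(x,y)\,\langle\varphi^\sigma_x,h\rangle\,a^*(\varphi^\sigma_x)a^*(\varphi^\sigma_y)a(\varphi^\sigma_y)$ — in each case a coefficient carrying an overlap with $\varphi^\sigma_x$, times a cubic monomial with one generator labeled by $x$ and two labeled by $y$. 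Applying the automorphism $\tau^\Lambda_s$ turns each cubic monomial into a product $PQR$ of three operators of the form $\tau^\Lambda_s(a^\#(\varphi^\sigma_z))$. I would then use the algebraic identity
\[
\{PQR,S\}=\{P,S\}\,QR-P\,\{Q,S\}\,R+PQ\,\{R,S\},
\]
valid for arbitrary operators (telescope $S$ past the three factors; since the number of factors is odd, the trailing term cancels). Applied with $S=a^*(g)$, this expresses the anticommutator of each $\tau^\Lambda_s$-image of a cubic monomial as a sum of three terms, each a product of two operators $\tau^\Lambda_s(a^\#(\varphi^\sigma_z))$ — of norm $C^{1/2}_\sigma\cdot C^{1/2}_\sigma=C_\sigma$ by (\ref{a+c_op_norm})--(\ref{Def:C_sig}) — times one elementary anticommutator $\{\tau^\Lambda_s(a^\#(\varphi^\sigma_z)),a^*(g)\}$. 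Crucially, \emph{no} commutator terms appear.

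It remains to reassemble the kernel. For the generator labeled by $x$, the $y$-integration against $|W_\Lambda(x,y)|\le|W(x-y)|$ produces the factor $\|W\|_1$ multiplying $|\langle h,\varphi^\sigma_x\rangle|$; for the two generators labeled by $y$, the substitution $x\leftrightarrow y$ (once more using $W_\Lambda(x,y)=W_\Lambda(y,x)$) turns the weight into the convolution $(|W|*|\langle h,\varphi^\sigma_{(\cdot)}\rangle|)(x)$. Adding the contributions of the two commutators — the two diagonal $\|W\|_1$-pieces carrying $\|\{\tau^\Lambda_s(a(\varphi^\sigma_x)),a^*(g)\}\|$ and $\|\{\tau^\Lambda_s(a^*(\varphi^\sigma_x)),a^*(g)\}\|$ respectively, and the two convolution pieces, each carrying the sum of those two norms — reproduces exactly $C_\sigma\,K_{t-s}(f,x)$ with $K$ as in (\ref{Def:K}) (here $h=f_{t-s}$). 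Finally, comparing the elementary anticommutators with the free dynamics gives $\|\{\tau^\Lambda_s(a(\varphi^\sigma_x)),a^*(g)\}\|\le|\langle e^{-isH_1}\varphi^\sigma_x,g\rangle|+F^\Lambda_s(\varphi^\sigma_x,g)$ and $\|\{\tau^\Lambda_s(a^*(\varphi^\sigma_x)),a^*(g)\}\|\le F^\Lambda_s(\varphi^\sigma_x,g)$, and the sum of these is still at most $|\langle e^{-isH_1}\varphi^\sigma_x,g\rangle|+F^\Lambda_s(\varphi^\sigma_x,g)$; splitting the $s$-integral into the two corresponding parts yields the inequality exactly as stated.

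The only genuinely technical part is the combination of the commutator computation and the reassembly: the CAR manipulation is routine but fiddly, and the main thing to get right is tracking the $x$/$y$ labels through the relabeling, so that the four contributions (the two commutators, each with a diagonal and a convolution piece) recombine with precisely the coefficients $1$ and $2$ appearing in $K$. The algebraic identity for $\{PQR,S\}$ is the key device, since it produces only anticommutators of single (conjugated) generators with $a^*(g)$ — exactly the terms that close the estimate as $F^\Lambda_s(\varphi^\sigma_x,g)$ and make the subsequent iteration possible.
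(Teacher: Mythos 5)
Your proposal is correct and follows essentially the same route as the paper: the Duhamel formula, the CAR reduction of $[W_\Lambda^\sigma,a^\#(f_{t-s})]$ to cubic monomials using the symmetry of $W$, the identity $\{ABC,D\}=\{A,D\}BC-A\{B,D\}C+AB\{C,D\}$ with the $C_\sigma$ norm bound on the remaining factors, and the final pairing of the $a$- and $a^*$-anticommutators at a common point so that their sum is controlled by $F_s^\Lambda(\varphi_x^\sigma,g)+|\langle e^{-isH_1}\varphi_x^\sigma,g\rangle|$, yielding the kernel $K$ after the $x\leftrightarrow y$ relabeling. The only differences are cosmetic (your coefficient sits at $x$ where the paper's sits at $y$), so the argument matches the paper's proof.
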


\begin{proof}
We begin by recalling a useful perturbation formula. 
Fix a bounded, measurable set $\Lambda \subset \mathbb{R}^{d}$ and take
$\sigma >0$, $ t \geq 0$, and $f \in L^2( \mathbb{R}^{d})$. In this case, 
\begin{equation}\label{perturbation_formula}
\tau_t^{\Lambda}( a( f ) )  =   \tau_t^\emptyset( a( f ) )  + i \int_0^t \d s\,  \tau_s^{\Lambda} \left( \left[ W_{\Lambda}^{\sigma}, \tau_{t-s}^\emptyset(a( f )) \right] \right) \, 
\end{equation}
a proof of which can be found in \cite[Prop. 5.4.1]{bratteli:1997}. Note that
\begin{align}
\left[ W_{\Lambda}^{\sigma}, \tau_{t-s}^\emptyset(a(f)) \right]  
 = &\frac 1 2 \int_{\mathbb{R}^{d}}  \int_{\mathbb{R}^{d}} \d x \d y \,W_\Lambda(x,y) \left[a^*( \varphi^{\sigma}_x) a^*( \varphi^{\sigma}_y) a( \varphi^{\sigma}_y) a( \varphi^{\sigma}_x) , a(f_{t-s}) \right]    \nonumber \\
 = &\frac 1 2 \int_{\mathbb{R}^{d}} \int_{\mathbb{R}^{d}} \d x  \d y\, W_\Lambda(x,y) \left[a^*( \varphi^{\sigma}_x) a^*( \varphi^{\sigma}_y), a(f_{t-s}) \right] a( \varphi^{\sigma}_y) a( \varphi^{\sigma}_x)  
\end{align}
where we have set $f_t = e^{-it H_1}f$ and used \eqref{CAR}. Further calculating, we find
\begin{align}
\left[ a^*(\varphi^{\sigma}_x) a^*(\varphi^{\sigma}_y), a( f_{t-s}) \right] 
= & a^*(\varphi^{\sigma}_x) \left\{ a^*(\varphi^{\sigma}_y), a( f_{t-s})  \right\}  -  \left\{ a^*(\varphi^{\sigma}_x), a( f_{t-s})  \right\}  a^*(\varphi^{\sigma}_y) \nonumber \\
 = &  \langle f_{t-s}, \varphi^{\sigma}_y  \rangle a^*(\varphi^{\sigma}_x) -\langle f_{t-s}, \varphi^{\sigma}_x \rangle a^*(\varphi^{\sigma}_y).
\end{align}
Using now the symmetry of $W$, we obtain 
\begin{align*}
&\{ \tau_t^{\Lambda}( a( f ) ), a^*( g ) \} - \{ \tau_t^\emptyset( a( f ) ), a^*( g ) \} \notag\\
=&     i \int_0^t \d s \int_{\mathbb{R}^{d}} \int_{\mathbb{R}^{d}} \d x \d y\, W_\Lambda(x,y)   
  \langle f_{t-s}, \varphi^{\sigma}_y  \rangle\left\{ \tau_s^{ \Lambda} \left( a^*(\varphi^{\sigma}_x) \right) \tau_s^{\Lambda} \left( a(\varphi^{\sigma}_y) \right) \tau_s^{\Lambda} \left( a(\varphi^{\sigma}_x) \right) , a^*(g) \right\}. 
\end{align*}
With the anti-commutator relation 
\begin{equation}
\{ ABC, D \} = \{A,D\} BC - A \{B,D\}C + AB \{C,D\} \, ,
\end{equation}
the norm bound
\begin{align} \label{it_est_1}
&\left\|  \{ \tau_t^{\Lambda}( a( f ) ), a^*( g ) \} - \{ \tau_t^\emptyset( a( f ) ), a^*( g ) \} \right\| \notag\\
 \leq &
  C_{\sigma} \int_0^{t} \d s \int_{\R^d} \int_{\R^d} \d x \d y |W(x-y)|   | \langle f_{t-s}, \varphi^{\sigma}_y  \rangle |  \nonumber\\
&\qquad \times\left( \left\| \left\{ \tau_s^{\Lambda}(a^*(\varphi^{\sigma}_x)), a^*(g) \right\} \right\|  + 
\left\| \left\{ \tau_s^{\Lambda}(a(\varphi^{\sigma}_y)), a^*(g) \right\} \right\|  \right. 
  \left. +  \left\| \left\{ \tau_s^{\Lambda}(a(\varphi^{\sigma}_x)), a^*(g) \right\} \right\| \right) 
\end{align}
readily follows. Here we have used the bound $|W_\Lambda(x,y)|\leq |W(x-y)|$ for all $x, y \in\R^d$. Similar arguments yield the bound
\begin{align} \label{it_est_2}
&\left\| \{ \tau_t^{ \Lambda}( a^*( f) ), a^*( g) \} \right\|  \notag\\
\leq  &
C_{\sigma} \int_0^{t} \d s\int_{\R^d} \int_{\R^d} \d x \d y\, |W(x-y)| | \langle  f_{t-s}, \varphi^{\sigma}_y \rangle |  \notag  \\
 &\quad\quad\times \left( \left\| \left\{ \tau_s^{\Lambda}(a^*(\varphi^{\sigma}_x)), a^*(g) \right\} \right\|   + \left\| \left\{ \tau_s^{\Lambda}(a^*(\varphi^{\sigma}_y)), a^*(g) \right\} \right\|  +  \left\| \left\{ \tau_s^{\Lambda}(a(\varphi^{\sigma}_x)), a^*(g) \right\} \right\| \right) .
\end{align}

Our goal, as in most Lieb-Robinson bounds, is to derive bounds which can be iterated. Since neither (\ref{it_est_1}) nor (\ref{it_est_2}) iterate separately, 
we bound their sum, i.e. the function $F_t^{\Lambda}(f,g)$ introduced in (\ref{Def:F_t_2}) above. Recalling that 
$\{ \tau_t^\emptyset( a( f) ), a^*( g) \}   =   \langle f_t, g \rangle \idty$, we find
\begin{align} \label{eq:Fbd}
F^{ \Lambda}_t(f,g) & \leq  C_\sigma \int_0^t \d s\int_{\R^d} \int_{\R^d} \d x\d y\, |W(x-y)| | \langle f_{t-s},  \varphi^{\sigma}_y \rangle |  \notag\\
&\qquad \qquad \qquad\times\left(2F^{ \Lambda}_s( \varphi^{\sigma}_x, g) + 2 |\langle e^{-i sH_1}\varphi_x^\sigma , g \rangle| + F^{ \Lambda} _s(\varphi^{\sigma}_y,g) + |\langle e^{-i sH_1} \varphi_y^\sigma , g \rangle|  \right)  \nonumber \\
& =  C_\sigma \int_0^t \d s\,\int_{\R^d} \d x\,K_{t-s}(f, x) \big(F^{ \Lambda}_s( \varphi^{\sigma}_x, g) + |\langle e^{-i sH_1}\varphi_x^\sigma , g \rangle|\big),
\end{align}
where we have introduced $K_t(f,x)$ as in \eqref{Def:K}. This is the claim in (\ref{F_it_bd}). 
\end{proof}

Looking at the bound proven in Lemma~\ref{lm:BoundAntiComm}, in particular (\ref{F_it_bd}), it is natural to begin an iteration. 
To ensure convergence, we first provide an estimate on 
the kernel function $K_t(f,x)$. 

\subsection{Estimating the Kernel} \label{sec:est_K_t}
In this section, we provide two useful apriori estimates on the kernel function $K_t(f,x)$ defined in (\ref{Def:K}) 
of Lemma~\ref{lm:BoundAntiComm}.
First, we estimate this function for general $f \in L^2( \mathbb{R}^d)$, this result is stated in
Lemma~\ref{lm2.1} below. In Lemma~\ref{sec2:lm:boundK2}, we provide a similar estimate in the special case that
$f$ is a Gaussian.

Before we prove our first estimate, it will be convenient to introduce the following notation.
For any $r>0$, set $G_r: \mathbb{R}^d \to \mathbb{R}$ to be
\begin{equation} \label{exp_dec}
G_r(x) = e^{-r|x|} \quad \mbox{for all } x \in \mathbb{R}^d \, .
\end{equation}

\begin{lemma}\label{lm2.1}
Let $t \geq 0$, $f\in L^2(\R^d)$, and $x\in\R^d$. We have 
\beq \label{kern_bd}
K_t(f,x)\leq P_1(t) e^{C_{2} t|\ln t|} \big(G_{b_t}*|f|\big)(x)
\eeq
where $G_{b_t}$ is as in (\ref{exp_dec}) with 
\begin{equation} \label{Def:bt}
4 b_t = \frac{aC_3}{  a(t^2+1) + C_3 }
\end{equation} 
and $P_1: \mathbb{R} \to (0, \infty)$ is the polynomial of degree 2d given by 
\begin{equation} 
P_1(t)=C_1(2c_WD_3^2 (4b_t)^{-d}+\|W\|_1) \, .
\end{equation}
Here $C_1,C_2, C_3>0$ are as in the proof of Theorem \ref{Thm:boundProp}, $a$ and $c_W$ are as in (\ref{W:dec}),
and $D_3>0$ is as in Lemma~\ref{lemma:BoundFourier}. 
\end{lemma}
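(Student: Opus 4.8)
The plan is to bound each of the two pieces of $K_t(f,x)$ in \eqref{Def:K} separately using the pointwise propagation estimate from Corollary~\ref{cor:LRB_SO_simple}, namely
\[
|\langle e^{-itH_1}f,\varphi^\sigma_y\rangle| \le \int_{\R^d} \d z\, C_1 e^{C_2 t\ln t - C_3\frac{|y-z|}{1+t^2}}\,|f(z)|,
\]
which follows by writing $e^{-itH_1}f$ as the convolution of $f$ against the kernel $\overline{(e^{-itH_1^*}\varphi^\sigma_y)(\cdot)}$ — here one uses that $H_1$ is self-adjoint and that $\varphi^\sigma_y$ is real, so the kernel bound of Corollary~\ref{cor:LRB_SO_simple} applies directly after a harmless complex conjugation. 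Abbreviate $\kappa_t = C_3/(1+t^2)$, so the decay kernel is $C_1 e^{C_2 t\ln t}\,G_{\kappa_t}$ in the notation \eqref{exp_dec}.

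For the first term of \eqref{Def:K}, $\|W\|_1\,|\langle e^{-itH_1}f,\varphi^\sigma_x\rangle|$, the bound above gives immediately $\|W\|_1\, C_1 e^{C_2 t\ln t}\,(G_{\kappa_t}*|f|)(x)$. For the second term, $2(|W|*|\langle e^{-itH_1}f,\varphi^\sigma_{(\cdot)}\rangle|)(x)$, I first substitute the same bound to get $2C_1 e^{C_2 t\ln t}\,(|W|*G_{\kappa_t}*|f|)(x)$, then control $|W|*G_{\kappa_t}$ pointwise using Assumption~\ref{Assump:W}(iii), $|W(z)|\le c_W e^{-a|z|} = c_W G_a(z)$. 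So I need a convolution estimate of the form $(G_a * G_{\kappa_t})(x) \le (\text{const})\, G_{b_t}(x)$ for a suitable rate $b_t \le \min\{a,\kappa_t\}$; the natural choice equating the two exponential tails is $b_t$ with $4b_t$ as in \eqref{Def:bt}, i.e. $4b_t = a\kappa_t/(a+\kappa_t)$ after unwinding $\kappa_t=C_3/(1+t^2)$ — wait, let me match: $\tfrac{aC_3}{a(t^2+1)+C_3} = \tfrac{a\kappa_t}{a/\!\!\phantom{.}+\kappa_t}$ up to the division, so $b_t$ is (a quarter of) the harmonic-type combination of $a$ and $\kappa_t$, which is exactly what a Gaussian/exponential convolution bound produces. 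The prefactor from this convolution estimate is polynomial in $(4b_t)^{-1}$, of degree $d$, and is where $D_3$ from Lemma~\ref{lemma:BoundFourier} enters; squaring accounts for the $D_3^2$ and $(4b_t)^{-d}$ in $P_1(t)$.

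Assembling the two pieces, both get absorbed into $(G_{b_t}*|f|)(x)$ since $b_t\le\kappa_t$ means $G_{\kappa_t}\le G_{b_t}$ pointwise and hence $G_{\kappa_t}*|f|\le G_{b_t}*|f|$; collecting the scalar factors gives $K_t(f,x)\le C_1\bigl(2c_W D_3^2(4b_t)^{-d}+\|W\|_1\bigr)e^{C_2 t\ln t}(G_{b_t}*|f|)(x)$, which is \eqref{kern_bd} with $P_1(t)$ as stated (a polynomial of degree $2d$ in $t$, since $(4b_t)^{-1}$ is linear in $t^2+1$). The exponent $t|\ln t|$ rather than $t\ln t$ is just to cover $0<t<1$ where $\ln t<0$; since $e^{C_2 t\ln t}\le e^{C_2 t|\ln t|}$ this is automatic.

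The main obstacle I anticipate is the convolution inequality $(G_a*G_{\kappa_t})(x)\le \mathrm{const}\cdot(4b_t)^{-d}G_{b_t}(x)$ with the correct rate $b_t$ and a clean polynomial prefactor in dimension $d$: one must split the convolution integral according to whether $|z|\le\tfrac12|x|$ or $|y-z|\le\tfrac12|x|$ (so that on each region one full exponential rate is spent reaching $\tfrac12|x|$ while the other is integrated out), and the integrated factor $\int e^{-c|z|}\d z$ contributes the $c^{-d}$ volume growth — this is presumably the role of Lemma~\ref{lemma:BoundFourier}, so I would simply quote it. The identification of $b_t$ in \eqref{Def:bt} as the optimal matched rate, and verifying it is indeed $\le\min\{a,\kappa_t\}$, is the one genuinely fiddly point; everything else is bookkeeping of constants.
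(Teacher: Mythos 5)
Your proposal is correct and follows essentially the same route as the paper: bound both terms of $K_t(f,x)$ via the Schr\"odinger propagation estimate \eqref{ft_to_Gauss_est}, use $|W|\le c_W G_a$, pass to the common rate $4b_t\le\min\{a,a_t\}$ so that the two-fold exponential convolution can be handled by Lemma~\ref{lemma:BoundFourier} (giving the $D_3^2(4b_t)^{-d}$ prefactor), and absorb the $\|W\|_1$ term using $G_{a_t}\le G_{b_t}$. The only cosmetic differences are that the paper reduces the mixed-rate convolution to the same-rate Lemma~\ref{lemma:BoundFourier} exactly as you anticipate (so no separate splitting argument is needed) and treats $t=0$ in a one-line separate remark.
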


\begin{proof}
We first note that
\begin{equation}
K_0(f,x) \leq  \| W \|_1 ( \varphi_0^{\sigma} * |f|)(x) + 2 \big( (|W|*\varphi_0^{\sigma}) *|f| \big)(x)
\end{equation}
which may be further estimated as in (\ref{kern_bd}). 

Now, for $t >0$, we recall the bound in (\ref{ft_to_Gauss_est}):
\begin{equation} \label{f_to_gauss}
\big|\< e^{-it H_1}f, \varphi^\sigma_x\>\big|
\leq C_1  e^{C_{2} t|\ln t|} \int_{\R^d} \d y\,e^{-\frac { C_{3} }{t^2+ 1}  |x-y|} |f(y)|
=C_1  e^{C_{2} t|\ln t|} (G_{a_t}*|f|)(x)
\end{equation}
where we use the notation in (\ref{exp_dec}) and set 
\begin{equation}
a_t = \frac{C_3}{t^2+1} 
\end{equation}  
In this case, the bound
\begin{align}
K_t(f,x)&= \|W\|_1 |\<e^{-it H_1}f, \varphi^\sigma_x\>| + 2 \big( |W|* |\<e^{-it H_1}f,\varphi_{(\cdot)}^\sigma\>|\big)(x)\notag\\
&\leq  C_1 e^{C_{2} t|\ln t| } \|W\|_1 \big( G_{a_t}*|f|\big)(x) +  2 C_1 e^{C_{2} t|\ln t| } \big( |W|* \big(G_{a_t}*|f|\big)\big)(x)  \label{eq2.3}
\end{align}
is clear. Moreover, the exponential decay of $W$, see (\ref{W:dec}) in Assumption~\ref{Assump:W}, implies
\begin{align}
\nonumber \big( |W|* \big(G_{a_t}*|f|\big)\big)(x) 
&\leq c_{W} \int_{\R^d} \d z\, e^{-a|x-z|} \big(G_{a_t}*|f|\big)(z) \\ 
&\leq c_W\int_{\R^d} \d z\int_{\R^d} \d y\, e^{-4 b_t|x-z|}e^{-4b_t|z-y|}|f(y)| \label{eq2.2}
\end{align}
where we have used (\ref{Def:bt}), and in particular, that $4b_t \leq \min\{a,a_t \}$. By Lemma \ref{lemma:BoundFourier} 
there is $D_3>0$, depending only on $d$, such that
\begin{equation}
\int_{\mathbb{R}^d} dz e^{-4b_t|x-z|}e^{-4b_t|z-y|} \leq \frac{D_3^2}{(4b_t)^{d}} e^{-b_t |x-y|} \, .
\end{equation}
We conclude that 
\begin{equation}
\big( |W|* \big(G_{a_t}*|f|\big)\big)(x)  \leq c_W\frac{D_3^2}{(4b_t)^{d}} (G_{b_t}*|f|)(x)
\end{equation}
and note that (\ref{kern_bd}) follows from the point-wise bound $G_{a_t}(x)\leq G_{b_t}(x)$.
\end{proof}

We now turn to the special case of a Gaussian.

\begin{lemma}\label{sec2:lm:boundK2}
Let $t \geq 0$, $\sigma>0$, and $x,y\in\R^d$. We have
\beq
K_{t}(\varphi_{x}^\sigma, y) \leq 
P_2(t) e^{C_{2} t|\ln t|} G_{c_t}(x-y)
\eeq
where $G_{c_t}$ is as in (\ref{exp_dec}) with
\begin{equation}
c_{t}=\frac {aC_3}{16( a(t^2+ 1)+C_3( 1+a \sigma^2))}
\end{equation}
and $P_2: \mathbb{R} \to \mathbb{R}$ is the polynomial of degree $4d$ given by
\begin{equation} \label{Def:P2+c_t}
P_2(t)=\frac{e^{\frac{1}{8 \sigma^2}}}{(2 \pi \sigma^2)^{d/2}} P_1(t)\frac{D_3^2}{(4c_t)^d} \, .
\end{equation} 
Here we use freely the notation established in Lemma~\ref{lm2.1}. 
\end{lemma}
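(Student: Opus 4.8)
The plan is to reduce the estimate on $K_t(\varphi_x^\sigma,y)$ to the already-established Lemma~\ref{lm2.1}, exploiting the fact that the Gaussian $\varphi_x^\sigma$ is itself pointwise dominated by an exponential function so that convolving it against another exponential can be carried out explicitly. First I would observe that, by Lemma~\ref{lm2.1} applied with $f = \varphi_x^\sigma$, we have
\begin{equation}
K_t(\varphi_x^\sigma, y) \leq P_1(t) e^{C_2 t |\ln t|} \big( G_{b_t} * \varphi_x^\sigma \big)(y),
\end{equation}
so the whole problem is to bound the convolution $(G_{b_t} * \varphi_x^\sigma)(y)$ by a constant times $G_{c_t}(x-y)$ for a suitably smaller rate $c_t$.

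Next I would handle that convolution by the elementary pointwise bound on the Gaussian: from \eqref{Def:Gaussian}, using $|z|^2 \geq |z| - \tfrac14$ (the same trick used in the proof of Corollary~\ref{cor:LRB_SO_simple}), one gets
\begin{equation}
\varphi_x^\sigma(z) \leq \frac{e^{\frac{1}{8\sigma^2}}}{(2\pi\sigma^2)^{d/2}}\, e^{-\frac{|z-x|}{2\sigma^2}}.
\end{equation}
Thus $\varphi_x^\sigma$ is dominated, up to the constant $e^{1/(8\sigma^2)}/(2\pi\sigma^2)^{d/2}$, by the exponential $G_{1/(2\sigma^2)}$ centered at $x$. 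Then $(G_{b_t} * \varphi_x^\sigma)(y)$ is bounded by that constant times $(G_{b_t} * G_{1/(2\sigma^2)})(x-y)$. I would now invoke the convolution estimate from Lemma~\ref{lemma:BoundFourier} (the same $D_3$-bound used in Lemma~\ref{lm2.1}): for any two exponential rates, $(G_\alpha * G_\beta)(w) \leq \frac{D_3^2}{(\min\{\alpha,\beta\})^d} e^{-\frac12 \min\{\alpha,\beta\}|w|}$, or more precisely after choosing a common rate $4c_t \leq \min\{b_t, 1/(2\sigma^2)\}$ one writes $e^{-b_t|x-z|} e^{-|z-y|/(2\sigma^2)} \leq e^{-4c_t|x-z|}e^{-4c_t|z-y|}$ and integrates to get $\leq \frac{D_3^2}{(4c_t)^d} e^{-c_t|x-y|} = \frac{D_3^2}{(4c_t)^d} G_{c_t}(x-y)$. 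Collecting constants yields exactly $P_2(t) e^{C_2 t|\ln t|} G_{c_t}(x-y)$ with $P_2$ as in \eqref{Def:P2+c_t}, and since $P_1$ has degree $2d$ and we pick up another factor $(4c_t)^{-d}$ which is polynomial of degree $2d$ in $|t|$, $P_2$ has degree $4d$ as claimed.

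The only genuine bookkeeping is to verify that the stated $c_t = \frac{aC_3}{16(a(t^2+1)+C_3(1+a\sigma^2))}$ does satisfy $4c_t \leq \min\{b_t, 1/(2\sigma^2)\}$; this is a direct computation comparing $4c_t$ against $b_t = \frac{aC_3}{4(a(t^2+1)+C_3)}$ and against $\frac{1}{2\sigma^2}$, both of which hold by inspection of denominators since the denominator of $4c_t$ dominates the relevant quantities. I do not expect any real obstacle here: the lemma is essentially a corollary of Lemma~\ref{lm2.1} together with the Gaussian-to-exponential domination and the convolution bound of Lemma~\ref{lemma:BoundFourier}, so the proof is short and the main (minor) point is simply tracking the constants and the degree of the resulting polynomial.
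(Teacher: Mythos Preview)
Your proposal is correct and follows essentially the same argument as the paper's proof: apply Lemma~\ref{lm2.1}, replace the Gaussian by an exponential via the inequality $u^2 \geq u - \tfrac14$, and then bound the resulting convolution of two exponentials using Lemma~\ref{lemma:BoundFourier} after passing to the common rate $4c_t \leq \min\{b_t, 1/(2\sigma^2)\}$. The paper's write-up is slightly terser but the steps, constants, and verification of $4c_t \leq \min\{b_t, 1/(2\sigma^2)\}$ are identical to what you outline.
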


\begin{proof}
Applying Lemma \ref{lm2.1} and the simple bound $u^2 \geq u - 1/4$, for all $u \in \mathbb{R}$, we find that 
\begin{align}
K_{t}(\varphi^{x}_\sigma, y)
&\leq P_1(t) e^{C_{2} t|\ln t|} \big(G_{b_t}*\varphi_{x}^\sigma\big)(y)\notag\\
&= \frac {P_1(t)e^{\frac1{8 \sigma^2}}} {(2\pi \sigma^2)^{d/2}}  e^{C_{2} t|\ln t|}  \int_{\R^d} \d z\, e^{-b_t|y-z|} e^{-\frac{|x-z|}{2\sigma^2}}.\label{sec2:lm:boundK2:eq0}
\end{align}
Recalling also Lemma \ref{lemma:BoundFourier}, the bound
\begin{align}
 \int_{\R^d} \d z\, e^{-b_t |y-z|} e^{-\frac{|x-z|}{2\sigma^2}}
 \leq
 D_3^2\big(\frac{1}{4c_t}\big)^{d}
 e^{- c_t |x-y|},\label{sec2:lm:boundK2:eq1}
\end{align}
follows from the estimate
\beq
4c_t\leq \min\Big\{b_t,\frac 1 {2\sigma^2}\Big\} \, .
\eeq
This proves the assertion. 
\end{proof}

\subsection{Iterating the Bound} \label{sec:it_scheme}

In this section, we will iterate the bound proven in Lemma~\ref{lm:BoundAntiComm}, i.e. (\ref{F_it_bd}), and 
complete the proof of Theorem~\ref{thm:propagationbound}. 

We first note that iteration of (\ref{F_it_bd}) produces, for any $N \in \mathbb{N}$, a bound of the form
\begin{equation} \label{it_to_N}
F_t^{\Lambda}(f,g) \leq \sum_{n=1}^N a_n(t,f,g) + R_N(t,f,g) 
\end{equation}
where for each $1 \leq n \leq N$ the terms
\begin{align}\label{def:anfg}
a_n(t,f,g) = C_\sigma^n \int_0^t \d t_1 \int_{\R^d}\d x_1\,& K_{t-t_1}(f,x_1) \int_0^{t_1} \d t_2\int_{\R^d} \d x_2 \, K_{t_1-t_2} (\varphi_{x_1}^\sigma,x_2)\cdots\notag\\
&\times \int_0^{t_{n-1}}\d t_n\,\int_{\R^d} \d x_n\, K_{t_{n-1}-t_n}(\varphi_{x_{n-1}}^\sigma, x_n) |\<e^{-it_n H_1}\varphi_{x_n}^\sigma,g\>|
\end{align}
and similarly, the remainder is given by
\begin{align}\label{def:RNt}
R_N(t,f,g) = C_\sigma^N \int_0^t \d t_1 \int_{\R^d}\d x_1\,& K_{t-t_1}(f,x_1) \int_0^{t_1} \d t_2\int_{\R^d} \d x_2 \, K_{t_1-t_2} (\varphi_{x_1}^\sigma,x_2)\cdots\notag\\
&\times \int_0^{t_{N-1}}\d t_N\,\int_{\R^d} \d x_N\, K_{t_{N-1}-t_N}(\varphi_{x_{N-1}}^\sigma, x_N) F_{t_N}^{\Lambda}(\varphi_{x_N}^\sigma,g) 
\end{align}

Next, we estimate the terms $a_n(t,f,g)$.
\begin{lemma}\label{lm:boundafg}
Let $t>0$ and $n\in\N$. We find the following bound
\begin{align} \label{a_n_f_g_bd}
a_n(t,f,g)
\leq D(t) \frac{P_3(t)^n}{n!} \int_{\R^d} \d x \int_{\R^d} \d y\, e^{-\frac{c_t|x-y|}{4}}|f(x)||g(y)|
\end{align}
where $P_3 : \mathbb{R} \to \mathbb{R}$ is a polynomial in $t$ of degree $6d+1$ with 
\begin{equation}
P_3(t) = \frac{C_{\sigma} e^{C_2} D_3 t P_2(t)}{c_t^d}  \quad \mbox{and} \quad   D(t) = C_1 e^{C_2} D_3 \frac{P_1(t)}{P_2(t)} e^{C_2 t | \ln t|} \, .
\end{equation}
All quantities appearing above are as in Lemma~\ref{lm2.1} and Lemma~\ref{sec2:lm:boundK2}. 
\end{lemma}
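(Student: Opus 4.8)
The plan is to feed the three kernel estimates already available into the definition (\ref{def:anfg}) of $a_n(t,f,g)$, decouple the $t_j$-dependence so that the time integration separates off, and recognize what is left over as an $(n{+}1)$-fold convolution of exponentials. Concretely, in the integrand of (\ref{def:anfg}) I would estimate the leading kernel $K_{t-t_1}(f,x_1)$ by Lemma~\ref{lm2.1}, each of the $n-1$ intermediate kernels $K_{t_{j-1}-t_j}(\varphi_{x_{j-1}}^\sigma,x_j)$ by Lemma~\ref{sec2:lm:boundK2}, and the final factor $|\langle e^{-it_nH_1}\varphi_{x_n}^\sigma,g\rangle|$ by integrating the pointwise bound of Corollary~\ref{cor:LRB_SO_simple} against $|g|$, which yields $C_1e^{C_2t_n|\ln t_n|}(G_{a_{t_n}}*|g|)(x_n)$ with $a_{t_n}=C_3/(t_n^2+1)$. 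The integrand then factors into a polynomial part $P_1(t-t_1)\prod_{j=2}^nP_2(t_{j-1}-t_j)$, an exponential part $e^{C_2[(t-t_1)|\ln(t-t_1)|+\sum_{j=2}^n(t_{j-1}-t_j)|\ln(t_{j-1}-t_j)|+t_n|\ln t_n|]}$, and a spatial chain $G_{b_{t-t_1}}(x-x_1)\big(\prod_{j=2}^nG_{c_{t_{j-1}-t_j}}(x_{j-1}-x_j)\big)G_{a_{t_n}}(x_n-y)$ integrated against $|f(x)|$ and $|g(y)|$.

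Next I would make every $t_j$-dependent quantity crude enough to pull it out of the time integral. The polynomials $P_1,P_2$ are nondecreasing and the rates $a_t,b_t,c_t$ are nonincreasing on $[0,\infty)$, with $c_t\le\min\{a_t,b_t\}$; hence $P_1(t-t_1)\le P_1(t)$, $P_2(t_{j-1}-t_j)\le P_2(t)$, and every decay rate occurring in the chain is $\ge c_t$, so each factor in the chain is pointwise dominated by $G_{c_t}$ of the same argument. The only non-monotone piece is the product of the $e^{C_2s|\ln s|}$ factors, and here I would use the elementary inequality $\sum_i s_i|\ln s_i|\le (n+1)+t|\ln t|$, valid whenever $s_i\ge0$ and $\sum_i s_i=t$: on the gaps with $s_i\ge1$ one has $s_i\ln s_i\le s_i\ln t$, while each of the at most $n+1$ remaining gaps contributes at most $\sup_{u>0}(-u\ln u)=1/e$. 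Since the $n+1$ time gaps $t-t_1$, $t_{j-1}-t_j$, $t_n$ sum to $t$, the exponential part is $\le e^{C_2(n+1)}e^{C_2t|\ln t|}$. After these replacements nothing in the integrand depends on $t_1,\dots,t_n$, so the time integral reduces to $\int_{0\le t_n\le\cdots\le t_1\le t}\d t_1\cdots\d t_n=t^n/n!$, giving the $1/n!$ and one power of $t$ per iteration.

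For the spatial part, writing $u_0=x$, $u_i=x_i$, $u_{n+1}=y$, I must bound $\int_{(\R^d)^n}\prod_{j=0}^{n}G_{c_t}(u_j-u_{j+1})\,\d x_1\cdots\d x_n$ after convolving the two ends with $|f|$ and $|g|$. Splitting each link as $e^{-c_t|u_j-u_{j+1}|}=e^{-\frac{c_t}{4}|u_j-u_{j+1}|}e^{-\frac{3c_t}{4}|u_j-u_{j+1}|}$, the product of the first halves is $\le e^{-\frac{c_t}{4}|x-y|}$ by the triangle inequality and supplies the required spatial decay, while the integral of the product of the second halves is the convolution $(G_{3c_t/4}^{*(n+1)})(x-y)$, which by $\|G_{3c_t/4}*\cdots*G_{3c_t/4}\|_\infty\le\|G_{3c_t/4}\|_1^{\,n}\|G_{3c_t/4}\|_\infty$ and $\|G_r\|_1=\mathrm{const}\cdot r^{-d}$ (the role of Lemma~\ref{lemma:BoundFourier}) is bounded, up to dimensional constants, by $D_3(D_3/c_t^d)^{n}$. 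Hence the spatial contribution is at most $D_3(D_3/c_t^d)^{n}\int\int e^{-c_t|x-y|/4}|f(x)||g(y)|\,\d x\,\d y$. Multiplying together $C_\sigma^n$, $C_1P_1(t)P_2(t)^{n-1}$, $e^{C_2(n+1)}e^{C_2t|\ln t|}$, $t^n/n!$ and this spatial constant, and grouping one copy each of $C_1$, $e^{C_2}$, $D_3$, $P_1(t)/P_2(t)$ and $e^{C_2t|\ln t|}$ into $D(t)$ with the remaining $n$-th powers into $P_3(t)^n/n!$, gives precisely the claimed inequality, with $P_3$ of degree $2d+1+4d=6d+1$ in $t$.

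The step I expect to require genuine care is the time decoupling: one must see that the product of $n+1$ super-linear factors $e^{C_2s|\ln s|}$ collapses to a single $e^{C_2t|\ln t|}$ times a merely geometric factor $e^{C_2(n+1)}$, which is exactly what the sub-additivity estimate $\sum_is_i|\ln s_i|\le(n+1)+t|\ln t|$ achieves and is what keeps $D(t)$ from carrying an $n$-dependent exponential. A secondary point is choosing the split $c_t=\tfrac14c_t+\tfrac34c_t$ so that the $n$ intermediate integrations converge while still leaving the endpoint decay $e^{-c_t|x-y|/4}$; everything else is the routine iteration-of-convolutions bookkeeping familiar from Lieb-Robinson arguments.
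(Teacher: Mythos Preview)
Your proposal is correct and follows essentially the same route as the paper: bound the leading kernel by Lemma~\ref{lm2.1}, the intermediate kernels by Lemma~\ref{sec2:lm:boundK2}, the final factor via (\ref{f_to_gauss}), use monotonicity to freeze all polynomial and decay-rate arguments at $t$, control $\sum_j(t_{j-1}-t_j)|\ln(t_{j-1}-t_j)|$ by $t|\ln t|+(n{+}1)$ (the paper does this as (\ref{eq2.112}) with the bound $x|\ln x|\le 1$ on $(0,1]$ in place of your $1/e$), collect $t^n/n!$ from the simplex, and then bound the $(n{+}1)$-fold convolution of $G_{c_t}$. The only cosmetic difference is that for the spatial chain the paper invokes Lemma~\ref{lemma:BoundFourier} directly to obtain $(G_{c_t}^{*(n+1)})(x-y)\le (D_3/c_t^d)^{n+1}c_t^d\,e^{-c_t|x-y|/4}$, whereas you reprove this via the split $G_{c_t}=G_{c_t/4}G_{3c_t/4}$ plus Young's inequality; both yield the same $D_3^{n+1}/c_t^{dn}$ and the same final constants $D(t)$ and $P_3(t)$.
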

\begin{proof}
Fix $t>0$. For our estimate, it will be convenient to recall some bounds from Section~\ref{sec:est_K_t}.
First, let $f \in L^2( \mathbb{R}^d)$ and $x \in \mathbb{R}^d$. Lemma~\ref{lm2.1} shows that for $0 \leq t_1 \leq  t$,
\begin{eqnarray}
K_{t-t_1}(f, x) & \leq & P_1(t - t_1 ) e^{C_{2} (t-t_1)|\ln (t - t_1)|} (G_{b_{t-t_1}}*|f|)(x) \nonumber \\
& \leq & P_1(t) e^{C_{2}(t-t_1)|\ln (t - t_1)|} (G_{c_t}*|f|)(x)
\end{eqnarray}
where we have used that $P_1$ is increasing in $t$ and that $b_{t-t_1} \geq b_t \geq c_t$. 
Next, an application of Lemma~\ref{sec2:lm:boundK2} shows that for any $0 \leq t_j \leq t_{j-1} \leq t$ and all $x, y \in \mathbb{R}^d$,
\begin{eqnarray}
K_{t_{j-1}-t_j}(\varphi_{x}^\sigma, y) & \leq & P_2(t_{j-1} - t_j ) e^{C_{2} (t_{j-1}-t_j)|\ln (t_{j-1} - t_j)|} G_{c_{t_{j-1} - t_j}}(x-y) \nonumber \\
& \leq & P_2(t) e^{C_{2}(t_{j-1}-t_j)|\ln (t_{j-1} - t_j)|} G_{c_{t}}(x-y)
\end{eqnarray}
Here we used that the polynomial $P_2(t)$ is increasing in $t$ and the function $c_t$ is decreasing in $t$.  
Similarly, arguing as in (\ref{f_to_gauss}), we see that
\begin{equation}
 |\<e^{-it_n H_1}\varphi_{x_n}^\sigma,g\>| \leq C_1 e^{C_2 t_n | \ln t_n|} (G_{a_{t_n}}*|g|)(x_n) \leq C_1 e^{C_2 t_n | \ln t_n|} (G_{c_t}*|g|)(x_n)
\end{equation}
for any $0 \leq t_n \leq t$ and $x_n \in \mathbb{R}^d$.
As a final observation, note that for parameters 
$0 = t_{n+1} \leq t_n \leq \cdots \leq t_j \leq t_{j-1} \leq \cdots \leq t_1 \leq t_0 =t$, we have the bound
\begin{align} 
\sum_{j=1}^{n+1} (t_{j-1}-t_j) |\ln(t_{j-1}-t_j)| 
&\leq
\sum_{j=1}^{n+1} (t_{j-1}-t_j) \ln(t_{j-1}-t_j)1\{ t_{j-1}-t_j\geq 1\} + n+1\notag\\
&\leq \ln t \, 1\{t\geq 1\} t + n+1\notag\\
&\leq t|\ln t|+ n+1,\label{eq2.112} 
\end{align}
where $1\{\cdot\}$ stands for the indicator function, we used that $x|\ln x|\leq 1$ for all $x\in (0,1]$, and in the argument of the $\ln$,
we use the bound $t_{j-1} - t_j\leq t$, for $j=1,\ldots,n+1$, to produce a telescopic sum.

Putting all this together we find that
\begin{eqnarray}
a_n(t,f,g) & \leq & C_1 C_\sigma^n P_1(t) P_2(t)^{n-1} \int_0^t dt_1 \cdots \int_0^{t_{n-1}} dt_n e^{C_2 \sum_{j=1}^{n+1} (t_{j-1}-t_j) | \ln(t_{j-1} -t_j)|}  \nonumber \\
& \mbox{ } & \quad \times \int_{\mathbb{R}^d} dx_1 \cdots \int_{\mathbb{R}^d} dx_n (G_{c_t}*|f|)(x_1) G_{c_t}(x_1-x_2) \cdots G_{c_t}(x_{n-1}-x_n)  (G_{c_t}*|g|)(x_n) \nonumber \\
\mbox{ } & \leq & C_1 C_\sigma^n P_1(t) P_2(t)^{n-1} e^{C_2 t | \ln t |} e^{(n+1)C_2} \frac{t^n}{n!}  \nonumber \\
& \mbox{ } & \quad \times \int_{\mathbb{R}^d} dx_1 (G_{c_t}*|f|)(x_1) (G_{c_t} *\cdots *G_{c_t}*|g|)(x_1)
\end{eqnarray} 
The latter integral can be further estimated as
\begin{eqnarray}
\int_{\mathbb{R}^d} dx (G_{c_t}*|f|)(x) (G_{c_t} *\cdots *G_{c_t}*|g|)(x) & = & 
\int_{\mathbb{R}^d} dx |f(x)| (G_{c_t} *\cdots *G_{c_t}*|g|)(x) \nonumber \\
& \leq & c_t^d \left( \frac{D_3}{c_t^d} \right)^{n+1} \int_{\mathbb{R}^d} dx  \int_{\mathbb{R}^d} dy e^{- \frac{c_t|x-y|}{4}} |f(x)| |g(y)| .
\end{eqnarray} 
by using the point-wise estimate in Lemma~\ref{lemma:BoundFourier} on the $n+1$-fold convolution
of $G_{c_t}$ with itself. This is the bound claimed in (\ref{a_n_f_g_bd}). 
\end{proof}

We can now complete the proof of Theorem~\ref{thm:propagationbound}.
\begin{proof}[Proof of Theorem~\ref{thm:propagationbound}]
Given (\ref{it_to_N}) and the estimate in Lemma~\ref{lm:boundafg}, i.e. (\ref{a_n_f_g_bd}), it is clear that we need only show that
the remainder term $R_N(t)$, see (\ref{def:RNt}) goes to zero as $N \to \infty$. We will see that this is the case
uniformly for $t$ in compact sets. 

Fix $T>0$ and let $t \in [-T,T]$. We argue as in the proof of Lemma~\ref{lm:boundafg}. 
The only difference between the term $a_N(t,f,g)$ and $R_N(t)$ is the final factor in the integrand: 
more precisely, $|\langle e^{-it_NH_1} \varphi_{x_n}^{\sigma}, g \rangle|$ is replaced with 
$F_{t_N}^{\Lambda}( \varphi_{x_N}^{\sigma}, g)$. In this case, the naive bound
\begin{equation}
F_{t_N}^{\Lambda}( \varphi_{x_N}^{\sigma}, g) \leq 6 \| \varphi_{x_N}^{\sigma} \|_2 \| g \|_2 = 6 \sqrt{C_{\sigma}} \| g \|_2  
\end{equation}
will suffice. In fact, 
\begin{eqnarray}
R_N(t) & \leq & 6 \sqrt{C_{\sigma}} \| g \|_2 C_\sigma^N P_1(t) P_2(t)^{N-1} \int_0^t dt_1 \cdots \int_0^{t_{N-1}} dt_n e^{C_2 \sum_{j=1}^{N} (t_{j-1}-t_j) | \ln(t_{j-1} -t_j)|}  \nonumber \\
& \mbox{ } & \quad \times \int_{\mathbb{R}^d} dx_1 \cdots \int_{\mathbb{R}^d} dx_N (G_{c_t}*|f|)(x_1) G_{c_t}(x_1-x_2) \cdots G_{c_t}(x_{N-1}-x_N)   \nonumber \\
\mbox{ } & \leq & 6 \sqrt{C_{\sigma}} \| g \|_2 C_\sigma^N P_1(t) P_2(t)^{N-1} e^{C_2 t | \ln t |} e^{N C_2} \frac{t^N}{N!}  \nonumber \\
& \mbox{ } & \quad \times \int_{\mathbb{R}^d} dx_N (G_{c_t} *\cdots *G_{c_t}*|f|)(x_N).
\end{eqnarray} 
Now, another application of Lemma~\ref{lemma:BoundFourier} demonstrates that
\begin{eqnarray}
\int_{\mathbb{R}^d} dx (G_{c_t} *\cdots *G_{c_t}*|f|)(x) & \leq & \int_{\mathbb{R}^d} dx \int_{\mathbb{R}^d} dy (G_{c_t}* \cdots G_{c_t})(x-y) |f(y)| \nonumber \\
& \leq &  \left( \frac{D_3}{c_t^d} \right)^N c_t^d \int_{\mathbb{R}^d} dx \int_{\mathbb{R}^d} dy e^{- \frac{c_t|x-y|}{4}} |f(y)| \nonumber \\
& = &  \left( \frac{D_3}{c_t^d} \right)^N c_t^d \| G_{c_t/4} \|_1 \| f \|_1 
\end{eqnarray}
where here we have used that $f \in L^1(\mathbb{R}^d) \cap L^2( \mathbb{R}^d)$. Since all the quantities in these
estimates are explicit, it is clear that
\begin{equation}
\lim_{N \to 0} \sup_{t \in [-T,T]} R_N(t) = 0
\end{equation}
which completes the proof of Theorem~\ref{thm:propagationbound}.
\end{proof}

\section{The infinite-system dynamics. Proof of Theorem \ref{thm:main}}\label{sec:thermlimit:proof}

Our proof of Theorem \ref{thm:main} will make essential use of the following direct consequences of the propagation bounds of 
Theorem~\ref{thm:propagationbound}.

Let $V$ and $W$ satisfy Assumptions \ref{Assump:V} and \ref{Assump:W}, respectively. Then, there exist continuous functions $\wtilde C(\cdot), \wtilde a(\cdot) >0$,
such that with $\varphi_x^\sigma$ the Gaussians introduced in \eq{Def:Gaussian},
and $f\in L^2(\Rl^d)$ of compact support, denoted by $\supp(f)$, we have
\begin{equation}
\Vert \{ \tau^\Lambda_t(a(f)), a^\#(\varphi_z^\sigma)\} \Vert
\leq \Vert f\Vert_1 e^{\wtilde C(|t|)} e^{-\wtilde a(|t|) d(\supp(f),z)}\label{basicbound_2}
\end{equation}
where $a^{\#}( \cdot)$ refers to either an annihilation or creation operator, compare with (\ref{basicbound}).
This estimate follows from Theorem~\ref{thm:propagationbound} and Theorem~\ref{Thm:boundProp} where we again use
arguments as in \eq{sec2:lm:boundK2:eq0} and \eq{sec2:lm:boundK2:eq1}.

Apart from this key estimate, the proof below uses a combination of several ideas introduced
in \cite{bratteli:1997,bravyi:2006a,nachtergaele:2006,nachtergaele:2010}.

\begin{proof}[Proof of Theorem \ref{thm:main}]

We first prove \eq{TLmonomials} for $f\in L^2(\Rl^d)$ of compact support, say $\supp f\subset X\subset \Rl^d$, for some compact $X$.
Let $(\Lambda_n)_{n\geq 1}$ be an increasing sequence of bounded, measurable sets such that $\bigcup_n \Lambda_n = \Rl^d$.
To show that $(\tau^{\Lambda_n}_t(a(f)))_{n\geq 1}$ is Cauchy, uniformly in $t\in[-T,T]$ for $T>0$, note that, for any $\Lambda_m\subseteq \Lambda_n$, 
the operator
$$
W^\sigma_{\Lambda_n}-W^\sigma_{\Lambda_m}=\int_{\Lambda_n\times\Lambda_n\setminus\Lambda_m\times\Lambda_m}  \!\!\!\!\!\!\!\! \d x \d y\, W(x-y)
a^*(\varphi_x^\sigma)a^*(\varphi_y^\sigma)a(\varphi_y^\sigma)a(\varphi_x^\sigma)
$$
is bounded by our assumptions. Hence, the generator of the
strongly continuous dynamics $\tau^{\Lambda_n}_t$ is a bounded perturbation of the generator of $\tau^{\Lambda_m}_t$. In this case, we can
apply the same perturbation formula as in \eq{perturbation_formula} to compare the two dynamics.
The following identity then holds
\bea\label{eq:cauchyid}
&&\tau^{\Lambda_n}_t(a(f)) - \tau^{\Lambda_m}_t(a(f)) \\
&&\nonumber = \frac{i}{2}\int_0^t \d s\int_{\Lambda_n\times\Lambda_n\setminus\Lambda_m\times\Lambda_m}  \!\!\!\!\!\!\!\! \d x \d y\, W(x-y)
\tau^{\Lambda_n}_t\left(\left[
a^*(\varphi_x^\sigma)a^*(\varphi_y^\sigma)a(\varphi_y^\sigma)a(\varphi_x^\sigma), \tau_{t-s}^{\Lambda_m}(a(f))\right]\right).
\eea
Note the identity
\begin{align}
\left[a^*(\varphi_x^\sigma)a^*(\varphi_y^\sigma)a(\varphi_y^\sigma)a(\varphi_x^\sigma), \tau_{t-s}^{\Lambda_m}(a(f))\right]
=&
a^*(\varphi_x^\sigma)a^*(\varphi_y^\sigma)a(\varphi_y^\sigma)\left\{a(\varphi_x^\sigma), \tau_{t-s}^{\Lambda_m}(a(f)) \right\}\notag\\
&-
a^*(\varphi_x^\sigma)a^*(\varphi_y^\sigma)\left\{a(\varphi_y^\sigma), \tau_{t-s}^{\Lambda_m}(a(f)) \right\} a(\varphi_x^\sigma)\notag\\
&+
a^*(\varphi_x^\sigma)\left\{a^*(\varphi_y^\sigma), \tau_{t-s}^{\Lambda_m}(a(f)) \right\} a(\varphi_y^\sigma)a(\varphi_x^\sigma)\notag\\
&- 
\left\{a^*(\varphi_x^\sigma), \tau_{t-s}^{\Lambda_m}(a(f)) \right\}a^*(\varphi_y^\sigma) a(\varphi_y^\sigma)a(\varphi_x^\sigma).
\end{align}
Bounding (\ref{eq:cauchyid}) in norm, applying (\ref{basicbound_2}), and using the symmetry of $W$, we then find for any $T>0$ constants $C$ and $b>0$, such that for every $t\in[-T,T]$ and every $n>m$,
\be
\Vert \tau^{\Lambda_n}_t(a(f)) - \tau^{\Lambda_m}_t(a(f)) \Vert 
\leq  C\Vert \varphi_0^\sigma\Vert^3_2  \Vert f\Vert_1 \int_{\Lambda_n\times\Lambda_n\setminus\Lambda_m\times\Lambda_m}\!\!\!\!\!\!\!\!\!\!\!\!\!\!\!\! \d x\d y\, |W(x-y)|
 e^{-b d(X, x)},
\ee
which converges to $0$ as $n>m\to\infty$ since $|W(x-y)| e^{-b d(X, x)} \in L^1(\R^{2d})$. 
This shows that for compactly supported $f$, the sequence $(\tau^{\Lambda_n}_t(a(f)))_{n\geq 1}$ is Cauchy (in norm) uniformly for $t\in [-T,T]$. 
Thus, the limit exists and gives rise to an isometry from $\cP_{c}$, the set algebraically generated by $\{a(f),a^*(f): f\in L^2(\Rl^d) \text{ of compact support}\}$, into $ \A(L^2(\Rl^d))$. Equation \eqref{eq:cauchyid} can be applied to see this limit is independent of the sequence $(\Lambda_n)$. As $\cP_{c}$ is dense in $\A(L^2(\Rl^d))$, this isometry extends uniquely to a homomorphism, $\tau_t$, of $\A(L^2(\Rl^d))$. It is straightforward to verify
that $\tau_t\circ \tau_s = \tau_{t+s}$ and, in particular, that $\tau_{-t}$ is the inverse of $\tau_t$. Hence, $\{\tau_t\mid t\in\Rl\}$ is a one-parameter group of automorphisms of the CAR algebra.

To prove the strong continuity in $t$, it suffices to note that, for $f\in L^2(\Rl^d)$ and of compact support, 
the continuity of $t\mapsto \tau^{\Lambda_n}_t(a(f)) - \tau^\emptyset_t(a(f))$ carries over to 
the limiting function $t\mapsto \tau_t(a(f)) - \tau^\emptyset_t(a(f))$ due to the uniform convergence on compact intervals. 
Then, since  $\tau^\emptyset_t$ is already known to be strongly
continuous, $ \tau_t(a(f)) $ must be too. Finally, an $\epsilon/3$ argument shows that the strong continuity extends to the full CAR algebra.
\end{proof}

\appendix

\section{Convergence of the {$\sigma\to0$} limit}\label{sec:strong_resolvent_limit}

We prove that, for any fixed finite number of fermions, the UV-regularized dynamics converges to the standard one as $\sigma$ tends to 0.
For this we consider arbitrary, not necessarily bounded, measurable $\Lambda\subset \Rl_d$. When $\Lambda$ is not bounded, the interaction operator $W_\Lambda^\sigma$ is generally unbounded. Therefore, we start by providing a more careful definition of $H_\Lambda^\sigma$. Note that (\ref{def:interaction}) defines a bounded operator $W_{\Lambda;n}^\sigma$ on the $n$-particle subspace $\big( L^2(\R^d)^{\otimes n}\big)^-$ for each $n$. Define the operator $W_{\Lambda;n}^0$ on $\big( L^2(\R^d)^{\otimes n}\big)^-$ to be multiplication by the function $\sum_{1\leq k < l\leq n}  W_\Lambda(x_k-x_l)$. For $\sigma\geq0$, define 
\be
H_{1;n} + W_{\Lambda;n}^\sigma = \sum_{k=1}^n (-\Delta_k + V(x_k)) + W_{\Lambda;n}^\sigma
\ee
acting on $\big( L^2(\R^d)^{\otimes n}\big)^-$. For real-valued $V,W\in L^\infty(\R^d)$ this operator is self-adjoint on the domain $\mathcal{D}(H_{1;n})=(H^2(\R^d)^{\otimes n})^{-}$. With $\sigma\geq0$, let $H_\Lambda^\sigma$ be the operator acting on Fock space by
$$
H_\Lambda^\sigma= \bigoplus_{n=0}^\infty (H_{1;n} + W_{\Lambda;n}^\sigma)
$$
with domain
$$
\mathcal{D}(H_\Lambda^\sigma)=\{ (\psi_n)\in\mathfrak{F}^-: \psi_n\in \mathcal{D}(H_{1;n}) \text{ and }\sum_{n=0}^\infty \|(H_{1;n}+W_{\Lambda;n}^\sigma)\psi_n\|_2^2<\infty\}.
$$
This operator is well-known to be self-adjoint, see e.g. \cite[Exercise 5.43]{weidmann:1980}.
\begin{theorem}\label{thm:srconv}
For real-valued $V,W\in L^\infty(\R^d)$, take $H_1$ as in \eqref{def:H_1}.
Then, for any measurable set $\Lambda\subset \Rl^d$
$$
H_\Lambda^{\sigma}\to H_\Lambda^0
$$
in the strong resolvent sense as $\sigma\downarrow0$.\newline
\end{theorem}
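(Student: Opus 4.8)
The plan is to reduce the claimed strong resolvent convergence to a statement on each fixed particle-number sector, and there to exploit the explicit Gaussian smearing. Since $H_\Lambda^\sigma=\bigoplus_{n\ge 0}(H_{1;n}+W_{\Lambda;n}^\sigma)$ is an orthogonal direct sum of self-adjoint operators, we have $(H_\Lambda^\sigma - i)^{-1}=\bigoplus_n (H_{1;n}+W_{\Lambda;n}^\sigma - i)^{-1}$, and for $\psi=(\psi_n)\in\mathfrak F^-$ the norm of $\bigl[(H_\Lambda^\sigma-i)^{-1}-(H_\Lambda^0-i)^{-1}\bigr]\psi$ squared equals $\sum_n \|\bigl[(H_{1;n}+W_{\Lambda;n}^\sigma-i)^{-1}-(H_{1;n}+W_{\Lambda;n}^0-i)^{-1}\bigr]\psi_n\|^2$. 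Each summand is bounded by $4\|\psi_n\|^2$, which is summable, so by dominated convergence for series it suffices to prove that $H_{1;n}+W_{\Lambda;n}^\sigma\to H_{1;n}+W_{\Lambda;n}^0$ in strong resolvent sense on $(L^2(\R^d)^{\otimes n})^-$ for each fixed $n$. The sectors $n=0,1$ are trivial because the interaction vanishes there, so fix $n\ge 2$.

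On the $n$-particle subspace the strategy is to rewrite $W_{\Lambda;n}^\sigma$ in first-quantized form. Using the CAR and $a(\varphi)=\int\varphi(u)\,a_u\,du$ (the Gaussians being real, the antilinearity conventions of $f\mapsto a(f)$ are immaterial here), the second-quantized expression (\ref{def:interaction}) restricts on $(L^2(\R^d)^{\otimes n})^-$ to $W_{\Lambda;n}^\sigma=\sum_{1\le k<l\le n} w^\sigma_{kl}$, where $w^\sigma$ is the two-body operator on $L^2(\R^d\times\R^d)$ acting on the $k$-th and $l$-th coordinates with integral kernel $\int\!\!\int W_\Lambda(a,b)\,\varphi_a^\sigma(x_k)\varphi_b^\sigma(x_l)\varphi_a^\sigma(x_k')\varphi_b^\sigma(x_l')\,\mathrm{d}a\,\mathrm{d}b$. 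A short computation that uses only evenness of $\varphi_0^\sigma$ identifies this as $w^\sigma = A^\sigma\,M_{W_\Lambda}\,A^\sigma$, where $M_{W_\Lambda}$ is multiplication by $W_\Lambda(x_k,x_l)$ and $A^\sigma$ is convolution by the $2d$-dimensional mollifier $\varphi_0^\sigma\otimes\varphi_0^\sigma$ in the $(x_k,x_l)$ variables. In particular $\|w^\sigma_{kl}\|\le\|W\|_\infty$ uniformly in $\sigma$ and in $\Lambda$, so $W_{\Lambda;n}^\sigma$ is bounded and symmetric with $\|W_{\Lambda;n}^\sigma\|\le\binom n2\|W\|_\infty$; this also makes transparent that no restriction on $\Lambda$ is needed, since only $\|W_\Lambda\|_\infty\le\|W\|_\infty$ enters.

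Next, $\varphi_0^\sigma\otimes\varphi_0^\sigma$ is an $L^1$-normalized approximate identity on $\R^{2d}$, so $A^\sigma\to\idty$ strongly on $L^2(\R^{2d})$ as $\sigma\downarrow 0$, hence also when $A^\sigma$ acts on only two of the $n$ coordinates (strong convergence survives tensoring with the identity, using $\|A^\sigma\|\le 1$ and density of elementary tensors). Combined with boundedness of $M_{W_\Lambda}$, the elementary estimate $\|A^\sigma M_{W_\Lambda}A^\sigma\psi-M_{W_\Lambda}\psi\|\le\|W\|_\infty\|A^\sigma\psi-\psi\|+\|A^\sigma(M_{W_\Lambda}\psi)-M_{W_\Lambda}\psi\|$ shows $w^\sigma_{kl}\to M_{W_\Lambda(x_k,x_l)}$ strongly, whence $W_{\Lambda;n}^\sigma\to W_{\Lambda;n}^0$ strongly on $(L^2(\R^d)^{\otimes n})^-$ with uniform-in-$\sigma$ operator bound. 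Finally I would invoke the second resolvent identity: all $H_{1;n}+W_{\Lambda;n}^\sigma$ ($\sigma\ge 0$) are self-adjoint on the common domain $\mathcal D(H_{1;n})$ and their resolvents at $z=i$ are contractions, and for any $\phi$, setting $\psi=(H_{1;n}+W_{\Lambda;n}^0+i)^{-1}\phi\in\mathcal D(H_{1;n})$, one has $(H_{1;n}+W_{\Lambda;n}^\sigma+i)^{-1}\phi-(H_{1;n}+W_{\Lambda;n}^0+i)^{-1}\phi=(H_{1;n}+W_{\Lambda;n}^\sigma+i)^{-1}\bigl(W_{\Lambda;n}^0-W_{\Lambda;n}^\sigma\bigr)\psi\to 0$, since $(W_{\Lambda;n}^0-W_{\Lambda;n}^\sigma)\psi\to 0$ and the prefactor is a contraction. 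This yields strong resolvent convergence on each sector, and then on $\mathfrak F^-$ by the first paragraph.

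The main obstacle is the bookkeeping in the second step: correctly passing from the second-quantized interaction (\ref{def:interaction}) to the first-quantized sum $\sum_{k<l} A^\sigma M_{W_\Lambda} A^\sigma$, keeping the CAR sign conventions and the (anti)linearity of $f\mapsto a(f)$ straight and checking compatibility with antisymmetrization. Everything downstream is soft: approximate-identity convergence, the second resolvent identity, and a dominated-convergence argument for the direct sum. The a priori unboundedness of $\Lambda$, which forces the careful Fock-space definition of $H_\Lambda^\sigma$, is in the end harmless, because working sector by sector one only ever deals with operators bounded by $\|W\|_\infty$ independently of $|\Lambda|$.
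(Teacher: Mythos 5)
Your proposal is correct, and its analytic core coincides with the paper's: both identify the smeared pair interaction, on a fixed particle sector, with the mollified multiplication operator $\Phi^\sigma*\big(W_\Lambda(\Phi^\sigma*\,\cdot\,)\big)$ (your $A^\sigma M_{W_\Lambda}A^\sigma$), and both deduce the strong convergence $W^\sigma_{\Lambda;n}\to W^0_{\Lambda;n}$ from the $L^2$ approximate-identity property of the Gaussian together with the uniform bound $\|W\|_\infty$ via Young's inequality. Where you genuinely differ is the functional-analytic wrapper. The paper proves strong convergence $H^\sigma_\Lambda\psi\to H^0_\Lambda\psi$ on the common core $\mathcal{D}^-$ of antisymmetrized Schwartz finite-particle vectors and then invokes the criterion that strong convergence on a common core implies strong resolvent convergence; to get there it reduces to a two-particle computation through the auxiliary operators $\tilde{W}^\sigma_{\Lambda;n}$ on the full tensor product and the antisymmetrization $A_n$. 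You instead stay at the resolvent level: on each sector the interactions are uniformly bounded and converge strongly on all of $L^2$, so the second resolvent identity on the common domain $\mathcal{D}(H_{1;n})$ yields strong resolvent convergence sector by sector, and the direct-sum structure plus dominated convergence (each resolvent difference being bounded by $2$) assembles the Fock-space statement. This buys a proof that needs no core and never touches the unbounded kinetic part, at the modest price of the second-quantization bookkeeping $W^\sigma_{\Lambda;n}=\sum_{k<l}(A^\sigma M_{W_\Lambda}A^\sigma)_{kl}$ on the antisymmetric subspace, which you correctly flag as the delicate point and whose kernel you state in a form that matches the paper's explicit two-particle computation (note that writing the restriction as a sum over pairs implicitly uses the evenness of $W$, exactly as in the paper's definition of $\tilde{W}^0_{\Lambda;n}$).
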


Using a slight modification of \cite[Thm VIII.20(a)]{reed:1980}, the above theorem readily implies:

\begin{corollary}\label{cor:srconv}
For $t\in\R$  we denote by $U_\Lambda^\sigma(t)= e^{-i t H_\Lambda^{\sigma}}$ and $U_\Lambda^0(t)= e^{- i t H_\Lambda^0}$ the unitary groups generated by $H_\Lambda^{\sigma}$ and $H_\Lambda^0$, respectively. Then
$$
\lim_{\sigma\downarrow0}U_\Lambda^\sigma(t)\psi = U_\Lambda^0(t)\psi
$$
for each $\psi\in\mathfrak{F}^-$, uniformly for $t$ in compact subsets of $\Rl$.
\end{corollary}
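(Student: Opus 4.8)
The plan is to obtain Corollary~\ref{cor:srconv} from Theorem~\ref{thm:srconv} by the standard passage from strong resolvent convergence to convergence of the generated unitary groups. Recall that if self-adjoint operators $A_\sigma$ converge to $A_0$ in the strong resolvent sense, then $e^{-itA_\sigma}\to e^{-itA_0}$ strongly for each fixed $t\in\R$; this is \cite[Thm.~VIII.20(a)]{reed:1980}. Since Theorem~\ref{thm:srconv} provides precisely $H_\Lambda^\sigma\to H_\Lambda^0$ in the strong resolvent sense as $\sigma\downarrow 0$, it follows at once that $U_\Lambda^\sigma(t)\psi\to U_\Lambda^0(t)\psi$ for every $\psi\in\mathfrak{F}^-$ and every $t\in\R$. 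What is not contained in \cite[Thm.~VIII.20(a)]{reed:1980} as ordinarily stated, and what the ``slight modification'' alluded to there refers to, is the uniformity of this convergence for $t$ in compact subsets of $\R$.

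To supply that uniformity I would run an equicontinuity argument on a convenient dense set. Let $\mathcal{D}\subset\mathfrak{F}^-$ consist of the vectors $\psi=(\psi_n)_n$ having only finitely many nonzero components, each with $\psi_n\in\mathcal{D}(H_{1;n})=(H^2(\R^d)^{\otimes n})^-$. The key point is that, on each $n$-particle sector, $W_{\Lambda;n}^\sigma$ is bounded and self-adjoint with $\|W_{\Lambda;n}^\sigma\|\leq\tfrac12 n(n-1)\|W\|_\infty$, \emph{uniformly} in $\sigma>0$ and in $\Lambda$; this follows from the identity $\langle\psi,W_\Lambda^\sigma\psi\rangle=\tfrac12\int\int W_\Lambda(x,y)\,\|a(\varphi_y^\sigma)a(\varphi_x^\sigma)\psi\|^2\,\d x\,\d y$ together with $\int a^*(\varphi_x^\sigma)a(\varphi_x^\sigma)\,\d x=\d\Gamma(T_\sigma)$, where $T_\sigma$ is convolution by $\varphi^\sigma*\varphi^\sigma$ and hence $\|T_\sigma\|\leq 1$. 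Consequently every $\psi\in\mathcal{D}$ lies in $\mathcal{D}(H_\Lambda^\sigma)$ for all $\sigma$, with $\sup_{\sigma>0}\|H_\Lambda^\sigma\psi\|<\infty$; hence the bound $\|U_\Lambda^\sigma(t)\psi-U_\Lambda^\sigma(s)\psi\|\leq|t-s|\,\|H_\Lambda^\sigma\psi\|$ exhibits the curves $t\mapsto U_\Lambda^\sigma(t)\psi$ as equi-Lipschitz on any interval $[-T,T]$. Equicontinuity together with the pointwise-in-$t$ convergence established above upgrades, by the usual covering argument, to uniform convergence on $[-T,T]$ for each $\psi\in\mathcal{D}$; and since each $U_\Lambda^\sigma(t)$ is unitary, a final $\epsilon/3$ estimate transfers this to all of $\mathfrak{F}^-$.

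I do not expect any real difficulty in the corollary proper --- all of the substance has been placed in Theorem~\ref{thm:srconv}, which I may invoke. For orientation: that theorem would be proved by reducing to each $n$-particle sector (the resolvents of $H_\Lambda^\sigma$ are block diagonal, so strong resolvent convergence on $\mathfrak{F}^-$ reduces, via a dominated-convergence estimate in $n$, to strong resolvent convergence on every sector), and on a fixed sector $H_{1;n}+W_{\Lambda;n}^\sigma$ is a bounded perturbation of the fixed self-adjoint operator $H_{1;n}$, so the second resolvent identity reduces everything to proving $W_{\Lambda;n}^\sigma\to W_{\Lambda;n}^0$ strongly --- and that last point is where the approximate-identity property of the Gaussians $\varphi_x^\sigma$ enters, tested against a dense set of smooth wave functions. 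The only points in the corollary itself that call for a little care are the verification that $\mathcal{D}$ is dense and that the $\sigma$-uniform domain and norm bounds used above are legitimate, both of which are routine.
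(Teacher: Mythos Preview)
Your proposal is correct and follows exactly the route the paper indicates: deduce the corollary from Theorem~\ref{thm:srconv} via \cite[Thm.~VIII.20(a)]{reed:1980}, supplying the ``slight modification'' needed for uniformity on compact $t$-intervals. Your equicontinuity argument on the dense set $\mathcal{D}$, resting on the uniform-in-$\sigma$ operator bound $\|W_{\Lambda;n}^\sigma\|\leq\tfrac12 n(n-1)\|W\|_\infty$ (a sharper estimate than the $\sigma$-dependent bound \eqref{W_L:bd} and one the paper does not state), is a clean way to carry this out and in fact makes explicit what the paper leaves to the reader.
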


\begin{remark}
Theorem \ref{thm:srconv} and Corollary \ref{cor:srconv} apply more generally to any self-adjoint operator $H_1$.
\end{remark}

\begin{proof}[Proof of Theorem \ref{thm:srconv}]
We start by reducing the proof Theorem \ref{thm:srconv} to what is essentially the 2-particle situation. Recall that for $\sigma\geq0$,
$$
H_\Lambda^\sigma=\bigoplus_{n=0}^\infty (H_{1;n}+W_{\Lambda;n}^\sigma).
$$
For each $n$, let $\mathcal{D}_n^-=(\mathcal{S}(\Rl^d)^{\otimes n})^- $ be the antisymmetrized $n$-fold tensor product of the Schwarz space $\mathcal{S}(\Rl^d)$, where $\mathcal{S}(\Rl^d)^{\otimes n}=\bigotimes_{j=1}^n \mathcal{S}(\Rl^d)$ is the set of finite linear combinations of functions $\psi$ of the form $\psi(x_1,...,x_n)=\psi_1(x_1)\cdots\psi_n(x_n)$, with each $\psi_j\in\mathcal{S}(\Rl^d)$. Since $\mathcal{S}(\Rl^d)$ is a core for $H_1$, $D_n^-$ is a core for $H_{1;n}$ by the Corollary to \cite[Thm VIII.33]{reed:1980}. It follows that $\mathcal{D}_n^-$ is a core for $H_{1;n}+W_{\Lambda;n}^\sigma$ for every $\sigma\geq0$. If 
\begin{equation} \label{common_core}
\mathcal{D}^-=\{(\psi_n)\in\mathfrak{F}^-: \psi_n\in \mathcal{D}^-_n \text{ and } \exists N \text{ with }\psi_n=0\,\forall n\geq N\},
\end{equation}
then it is not difficult to see that $\mathcal{D}^-$ is a core for $H_\Lambda^\sigma$ for every $\sigma\geq0$ (this is essentially Example 2 in \cite[Sec VIII]{reed:1980}).
 
As is well-known, strong resolvent convergence follows from strong convergence on a common core, see e.g. \cite[Thm VIII.25(a)]{reed:1980}, 
and thus to prove Theorem \ref{thm:srconv}, it suffices to establish that
\begin{equation} \label{s_conv_on_D}
\lim_{\sigma\downarrow0}H_\Lambda^\sigma\psi=H_\Lambda^0\psi
\end{equation}
for every $\psi\in\mathcal{D}^-$. Given the form of $\mathcal{D}^{-}$, see (\ref{common_core}), it is clear that (\ref{s_conv_on_D}) will follow if
\begin{equation}\label{Wslimit}
\lim_{\sigma\downarrow0}W_{\Lambda;n}^\sigma\psi=W_{\Lambda;n}^0\psi \text{\, for every }\psi\in\mathcal{D}_n^-.
\end{equation}

We need only prove (\ref{Wslimit}). To this end, let $b^*(\cdot)$ and $b(\cdot)$ denote the creation and annihilation operators on the full Fock space $\mathfrak{F}=\bigoplus_{n=0}^\infty L^2(\Rl^d)^{\otimes n}$; for any $f\in L^2(\Rl^d)$ and $\psi\in L^2(\Rl^d)^{\otimes n}$,
$$
(b(f)\psi )(x_1,...,x_{n-1})= \sqrt{n}\int_{\Rl^d}\d x f(x)\psi(x,x_1,...,x_{n-1})
$$
and
$$
(b^*(f)\psi)(x_1,...,x_{n+1})= \sqrt{n+1}f(x_1)\psi(x_2,...,x_{n+1}).
$$
 For any $f,g\in L^2(\Rl^d)$, the operator $b^*(f)b^*(g)b(g)b(f)$ is reduced by the $n$-particle subspace, therefore for $\sigma>0$ we may define 
$$
\tilde{W}_{\Lambda;n}^\sigma= \frac{1}{2}\int_{\Rl^d}\int_{\Rl^d}\d x\d y\, W_{\Lambda}(x,y)b^*(\varphi_x^\sigma)b^*(\varphi_y^\sigma)b(\varphi_y^\sigma)b(\varphi_x^\sigma) \Big|_{\bigotimes_{k=1}^nL^2(\Rl^d)},
$$
which is a bounded operator on $L^2(\Rl^d)^{\otimes n}$. Also define $\tilde{W}_{\Lambda;n}^0$ 
by
$$
(\tilde{W}_{\Lambda;n}^0 h)(x_1,...,x_n)=\frac{n(n-1)}{2}W_{\Lambda}(x_1,x_2) h(x_1,...,x_n)
$$
for $h\in L^2(\Rl^d)^{\otimes n}$. 
With these definitions, for every $\sigma\geq0$, 
$$
W_{\Lambda;n}^\sigma=A_n\tilde{W}_{\Lambda;n}^\sigma \Big|_{(\bigotimes_{k=1}^n L^2(\Rl^d))^{-}}
$$ 
where $A_n$ is the antisymmetrization projection $L^2(\Rl^d)^{\otimes n}\to (L^2(\Rl^d)^{\otimes n})^-$. We conclude that: 
if we prove that $\lim_{\sigma\downarrow0}\tilde{W}_{\Lambda;n}^\sigma\psi=\tilde{W}_{\Lambda;n}^0\psi$ for every $\psi$ of the form $\psi(x_1,...,x_n)=\psi_1(x_1)\cdots \psi_n(x_n)$, where $\psi_1,...,\psi_n\in\mathcal{S}(\Rl^d)$, then \eqref{Wslimit} follows. Moreover, since for every $\sigma\geq0$ $\tilde{W}_{\Lambda;n}^\sigma$ acts nontrivially only on the first two particles, it will suffice to prove that $\tilde{W}_{\Lambda;2}^\sigma\to \tilde{W}_{\Lambda;2}^0$ strongly as $\sigma\downarrow0$. 

Let $\sigma >0$ and introduce the function $\Phi^{\sigma}: \mathbb{R}^{2d} \to \mathbb{R}$ by setting
\begin{equation}
\Phi^{\sigma}(x,y) = \varphi_0^{\sigma}(x) \varphi_0^{\sigma}(y) = \frac{1}{(2 \pi \sigma^2)^d}e^{ - \frac{1}{2 \sigma^2}(|x|^2+|y|^2)} \quad \mbox{for any } x,y \in \mathbb{R}^d \, .
\end{equation}
It is clear that $\Phi^{\sigma}$ is $L^1$-normalized, and moreover, a simple calculation shows that for $\psi\in L^2(\Rl^{2d})$,
\begin{align*}
(\tilde{W}_{\Lambda;2}^\sigma &\psi)(x_1,x_2)\\
&=\int_{\Rl^{4d}} \d x \d y\d z_1\d z_2\,W_\Lambda(x,y)\psi(z_1,z_2) \varphi_{z_1}^\sigma(x)\varphi_{x_1}^\sigma(x)\varphi_{z_2}^\sigma(y)\varphi_{x_2}^\sigma(y)\\
&=(\Phi^\sigma*(W_\Lambda (\Phi^\sigma *\psi)))(x_1,x_2).
\end{align*}
We are now ready to conclude the proof of the theorem.

Let $\psi\in\mathcal{S}(\R^d)^{\otimes 2}$. Then
$$
\tilde{W}^\sigma_{\Lambda;2}\psi=\Phi^\sigma*(W_\Lambda(\Phi^\sigma*\psi))=\Phi^{\sigma}*(W_\Lambda\psi)+\Phi^\sigma*(W_\Lambda(\Phi^\sigma*\psi-\psi))
$$
The first term above converges to the desired limit. In fact, convolutions with appropriately scaled $L^1$-functions converge in $L^p$-norm, see e.g. \cite[Thm 8.14 a)]{folland:1999}, and thus since 
$\psi \in L^2( \mathbb{R}^{2d})$,
$$
\Phi^\sigma*(W_\Lambda \psi)\to W_\Lambda\psi \quad \mbox{in } L^2( \mathbb{R}^{2d}) \mbox{ as } \sigma \to 0 \, .
$$
We handle the remainder with Young's inequality, i.e. the bound
$$
\|\Phi^\sigma*(W_\Lambda(\Phi^\sigma*\psi-\psi))\|_2\leq \|\Phi^\sigma\|_1\|W_\Lambda (\Phi^\sigma*\psi-\psi)\|_2\leq \|W\|_\infty \|\Phi^\sigma*\psi-\psi\|_2.
$$
A further application of \cite[Thm 8.14 a)]{folland:1999} shows that
$$
\lim_{\sigma\downarrow0}\|\Phi^\sigma*\psi-\psi\|_2=0
$$
which proves the result.
\end{proof}

\section{Several Fourier transforms}\label{sec:proof:lemma:BoundFourier}

In this section we aim at proving the following Lemma:

 \begin{lemma}\label{lemma:BoundFourier}
 Let  $a>0$, $n\in\N$ with $n\geq2$, and $x\in\R^d$. If $G_a(x)=e^{-a|x|}$, then there exists a constant $D_3>0$ such that
 \begin{align}
(\underbrace{G_a*G_a*\cdots *G_a}_{n-1 \text{ convolutions}})(x) \leq \Big(\frac{D_3} {a^d}\Big)^n a^{d} e^{-\frac {a|x|} 4}.
 \end{align}
 \end{lemma}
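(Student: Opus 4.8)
The plan is to reduce to the normalized case $a=1$ by a scaling change of variables, and then to prove the $a=1$ bound by a short induction on the number of convolution factors, using only the triangle inequality and one elementary integral.

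\emph{Step 1 (scaling).} Let $\Phi_n$ denote the convolution of $n$ copies of $G_1$. Substituting $y \mapsto y/a$ in each convolution, a one-line induction gives
\[
\left(\underbrace{G_a * \cdots * G_a}_{n \text{ factors}}\right)(x) \;=\; a^{-(n-1)d}\, \Phi_n(ax), \qquad x \in \R^d,\ n \geq 1 .
\]
Since $|ax| = a|x|$, the assertion of the lemma (which concerns $n$ copies of $G_a$, i.e.\ $n-1$ convolutions) is therefore equivalent to the following $a$-free statement: there is $D_3 > 0$, depending only on $d$, such that
\[
\Phi_n(x) \;\leq\; D_3^{\,n}\, e^{-|x|/4} \qquad \text{for all } n \geq 2 \text{ and } x \in \R^d .
\]

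\emph{Step 2 (the key integral).} The engine of the induction is the pointwise bound
\[
\int_{\R^d} e^{-|x-y|/4}\, e^{-|y|}\, \d y \;\leq\; \kappa_d\, e^{-|x|/4}, \qquad \kappa_d := \int_{\R^d} e^{-3|y|/4}\, \d y \;<\; \infty ,
\]
which follows at once from $|x| \leq |x-y| + |y|$ (hence $e^{-|x-y|/4} \leq e^{-|x|/4}\, e^{|y|/4}$) together with $e^{|y|/4}\, e^{-|y|} = e^{-3|y|/4}$; the constant $\kappa_d$ is finite and depends only on $d$.

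\emph{Step 3 (induction).} Fix $D_3 := \max\{\kappa_d, 1\}$, so that $D_3 \geq \kappa_d$ and $D_3^{\,2} \geq \kappa_d$. For $n = 2$, bounding $e^{-|x-y|} \leq e^{-|x-y|/4}$ and invoking Step 2 gives $\Phi_2(x) = \int_{\R^d} e^{-|x-y|}\, e^{-|y|}\, \d y \leq \kappa_d\, e^{-|x|/4} \leq D_3^{\,2}\, e^{-|x|/4}$. For the inductive step, $\Phi_{n+1} = \Phi_n * G_1$, so by the inductive hypothesis and then Step 2,
\[
\Phi_{n+1}(x) = \int_{\R^d} \Phi_n(x-y)\, e^{-|y|}\, \d y \;\leq\; D_3^{\,n} \int_{\R^d} e^{-|x-y|/4}\, e^{-|y|}\, \d y \;\leq\; D_3^{\,n}\, \kappa_d\, e^{-|x|/4} \;\leq\; D_3^{\,n+1}\, e^{-|x|/4} .
\]
Feeding this back into the scaling identity of Step 1 yields the lemma. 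I do not anticipate any genuine difficulty here; the only points needing care are the off-by-one between the number of $*$ symbols ($n-1$) and the number of factors ($n$), and the requirement that $D_3$ be chosen once and for all, independently of both $n$ and $a$.
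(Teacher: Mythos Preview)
Your proof is correct and considerably more elementary than the paper's. The paper computes the Fourier transform of $G_a$ explicitly (obtaining a rational function of $|\xi|^2$), then inverts the $n$-th power via an integral formula that produces a modified Bessel function $K_\nu$, and finally bounds $K_\nu(y)$ by a constant times $y^{-\nu}e^{-y/4}$; combining these three special-function identities yields the stated inequality. You bypass all of this: the scaling reduction to $a=1$ is clean, and your inductive step rests only on the triangle inequality via $e^{-|x-y|/4}\leq e^{-|x|/4}e^{|y|/4}$, which immediately gives the self-reproducing bound with the explicit constant $\kappa_d=\int_{\R^d}e^{-3|y|/4}\,\d y$.

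What the two approaches buy: the paper's route yields an exact closed form for the $n$-fold convolution (equation involving $K_{\frac{d(n-1)+n}{2}}$), which could in principle give sharper constants or asymptotics, and its Bessel-function bound may be reusable elsewhere. Your route is shorter, avoids special functions entirely, and still delivers an explicit $D_3=\max\{1,\kappa_d\}$ depending only on $d$, which is all that the applications in Sections~\ref{sec:est_K_t}--\ref{sec:it_scheme} require. Your remark about the off-by-one (the lemma's ``$n-1$ convolutions'' means $n$ factors) is well taken and you handle it correctly.
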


To prove this, we first compute several Fourier transforms. We let $\mathcal F$ denote the unitary Fourier transform and $\mathcal F^*$ its inverse.

\begin{lemma}\label{lm:Fourier1}
Let $a>0$ and $G_a(x)= e^{-a |x|}$. Then for $\xi\in\R^d$
\beq
\big(\mathcal F G_a\big)(\xi)=\frac{2^{d/2} \Gamma(\frac{d+1} 2)}{\sqrt \pi} \frac{a }{(a^2+\xi^2)^{\frac{d+1}2}},
\eeq
where $\Gamma$ denotes the Gamma function. 
\end{lemma}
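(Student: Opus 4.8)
The plan is to compute $\mathcal F G_a$ by \emph{subordination}: express $e^{-a|x|}$ as a continuous superposition of Gaussians and use that the Fourier transform of a Gaussian is again a Gaussian.

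First I would record the elementary subordination identity
\beq
e^{-\beta}\;=\;\frac{1}{\sqrt{\pi}}\int_0^\infty \frac{e^{-v}}{\sqrt{v}}\,e^{-\beta^2/(4v)}\,\d v\qquad(\beta\geq 0),
\eeq
which is the case $\nu=\tfrac12$ of $\int_0^\infty t^{\nu-1}e^{-t-z^2/(4t)}\,\d t=2(z/2)^\nu K_\nu(z)$ together with $K_{1/2}(z)=\sqrt{\pi/(2z)}\,e^{-z}$; it is standard, and if one prefers a self-contained argument it follows in two lines by checking that the right-hand side $f(\beta)$ satisfies $f(0)=1$ and $f'(\beta)=-f(\beta)$ (the latter via the substitution $v\mapsto\beta^2/(4v)$). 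Substituting $\beta=a|x|$ writes $G_a(x)$ as a Gaussian average. Then, with the unitary convention $(\mathcal F g)(\xi)=(2\pi)^{-d/2}\int_{\R^d}e^{-i\xi\cdot x}g(x)\,\d x$ used throughout the paper and the elementary fact $\bigl(\mathcal F\,e^{-c|\cdot|^2}\bigr)(\xi)=(2c)^{-d/2}e^{-|\xi|^2/(4c)}$, I would interchange the $v$- and $x$-integrals. This interchange is justified by Tonelli's theorem after taking absolute values, since
\beq
\int_0^\infty\!\!\int_{\R^d}\frac{e^{-v}}{\sqrt v}\,e^{-a^2|x|^2/(4v)}\,\d x\,\d v\;=\;\Bigl(\tfrac{4\pi}{a^2}\Bigr)^{d/2}\Gamma\!\Bigl(\tfrac{d+1}{2}\Bigr)\;<\;\infty.
\eeq

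Performing the $x$-integration first with $c=a^2/(4v)$ turns the inner Fourier transform into $(2v/a^2)^{d/2}e^{-v|\xi|^2/a^2}$, so that
\beq
(\mathcal F G_a)(\xi)\;=\;\frac{2^{d/2}}{\sqrt{\pi}\,a^d}\int_0^\infty v^{(d-1)/2}\,e^{-v(1+|\xi|^2/a^2)}\,\d v,
\eeq
and the remaining integral is a Gamma function, $\int_0^\infty v^{(d-1)/2}e^{-\lambda v}\,\d v=\Gamma(\tfrac{d+1}{2})\lambda^{-(d+1)/2}$ with $\lambda=(a^2+|\xi|^2)/a^2$. Collecting the powers of $a$ yields exactly $\frac{2^{d/2}\Gamma((d+1)/2)}{\sqrt{\pi}}\cdot\frac{a}{(a^2+|\xi|^2)^{(d+1)/2}}$, which is the claimed formula (here $\xi^2$ stands for $|\xi|^2$).

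The computation presents no real difficulty: the only ingredient that is not pure bookkeeping is the subordination identity, and since it can be proved in a few lines as indicated above, I would include such a derivation rather than merely cite it; everything else — the Gaussian transform, the Fubini/Tonelli justification, and the Gamma integral — is routine.
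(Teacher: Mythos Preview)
Your proof is correct and complete; the subordination identity, the Fubini/Tonelli justification, the Gaussian transform, and the final Gamma integral all check out.

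It is, however, a genuinely different route from the paper's. The paper exploits the radial symmetry of $G_a$ to reduce the $d$-dimensional Fourier transform to a one-dimensional Hankel transform,
\[
(\mathcal F G_a)(\xi)=|\xi|^{\frac{2-d}{2}}\int_0^\infty r^{d/2}\,J_{\frac{d-2}{2}}(|\xi|r)\,e^{-ar}\,\d r,
\]
and then evaluates this Bessel integral by citing a table entry in Gradshteyn--Ryzhik. Your subordination approach trades the Bessel machinery for the elementary identity $e^{-\beta}=\pi^{-1/2}\int_0^\infty v^{-1/2}e^{-v}e^{-\beta^2/(4v)}\,\d v$, after which only Gaussian and Gamma integrals remain. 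The advantage of your method is that it is fully self-contained (your ODE verification of the subordination identity is clean, and no external reference is needed); the paper's method is shorter on the page but outsources the nontrivial step to a tabulated formula. Both are standard, and either is appropriate here.
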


\begin{proof}
Let $\xi\in\R^d$. We compute using the spherical symmetry of $G_a$
\begin{align}
\big(\mathcal F G_a\big)(\xi)
&=\frac{1}{(2\pi)^{d/2}}\int_{\R^d} \d x e^{-a |x|} e^{-i x\xi}\notag\\
&= |\xi|^{\frac{2-d}2} \int_0^\infty  \d r\, r^{\frac{d}2}\,J_{\frac{d-2} 2}(|\xi| r)\,  e^{-a r}
\end{align}
where $J_\nu(y)$ denotes the Bessel function of first kind. Computing this integral using \cite[Sec. 6.621 eq. 1]{gradshteyn:2007}  gives
\begin{align}
\big(\mathcal F G_a\big)(\xi)= \frac{2^{d/2} \Gamma(\frac{d+1} 2)}{\sqrt \pi} \frac{a }{(a^2+|\xi|^2)^{\frac{d+1}2}},
\label{sec2lmeq1}
\end{align}
which is the assertion. 
\end{proof}

\begin{lemma}\label{lm:Fourier2}
Let $a>0$, $\xi\in\R^d$ and 
\beq
H_a(\xi) = \frac a {(a^2+|\xi|^2)^{\frac{d+1} 2}}.
\eeq
Then for $n\in\N$ we obtain 
\beq
\big(\mathcal F^* H^n_a\big)(x)=  \frac{2^{\frac{2-(d+1)n}2} a^{\frac{d-(d-1)n} 2} |x|^{\frac{d(n-1)+n}{2}} K_{\frac{d(n-1)+n}2}(a|x|)}  {\Gamma\big(\frac{(1+d)n}2\big)},
\eeq
where $\Gamma$ denotes the Gamma function and $K_\nu(y)$ the modified Bessel function. 
\end{lemma}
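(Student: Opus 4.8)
The plan is to use the rotational symmetry of $H_a^n$ to reduce the $d$-dimensional (inverse) Fourier transform to a one-dimensional Hankel-type integral, precisely as in the proof of Lemma~\ref{lm:Fourier1}, and then to evaluate that integral by a standard Weber--Schafheitlin formula. Since $H_a^n$ is radial, the same computation that produced the first displayed identity in the proof of Lemma~\ref{lm:Fourier1} gives
\begin{equation}
\big(\mathcal F^* H_a^n\big)(x) \;=\; |x|^{\frac{2-d}{2}} \int_0^\infty \frac{a^n\, \rho^{d/2}}{(a^2+\rho^2)^{(d+1)n/2}}\, J_{\frac{d-2}{2}}(\rho |x|)\, \d\rho ,
\end{equation}
where one uses that the transform of a radial function is real, so the sign in the exponential kernel is immaterial. (For $x=0$ the identity is handled separately: both sides are finite because $(d+1)n>d$, and equality follows from the $z\to0$ asymptotics of $K_\nu$.)

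Next I would invoke the closed-form evaluation \cite[Sec.~6.565]{gradshteyn:2007},
\begin{equation}
\int_0^\infty \frac{\rho^{\nu+1}\, J_\nu(b\rho)}{(\rho^2+a^2)^{\mu+1}}\, \d\rho \;=\; \frac{a^{\nu-\mu}\, b^\mu}{2^\mu\, \Gamma(\mu+1)}\, K_{\nu-\mu}(ab),
\end{equation}
valid for $a,b>0$ and $-1<\operatorname{Re}\nu<\operatorname{Re}(2\mu+\tfrac32)$, applied with $\nu=\tfrac{d-2}{2}$ (so that $\nu+1=d/2$ matches the power of $\rho$ above), $\mu+1=(d+1)n/2$, and $b=|x|$. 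The admissibility conditions hold for every $d\geq1$ and $n\in\N$: the lower bound is $\tfrac{d-2}{2}>-1$, and the upper bound reduces to $\tfrac{d-1}{2}<(d+1)n$, both immediate.

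Finally it remains only to assemble the constants and powers. Multiplying the prefactor $a^n |x|^{(2-d)/2}$ against the value of the integral and rewriting $K_{\nu-\mu}$ via $K_{-z}=K_z$ with index $\mu-\nu=\tfrac{(d+1)n-d}{2}=\tfrac{d(n-1)+n}{2}$, one reads off: the power of $2$ is $-\mu=\tfrac{2-(d+1)n}{2}$; the power of $a$ is $n+(\nu-\mu)=\tfrac{d-(d-1)n}{2}$; the power of $|x|$ is $\tfrac{2-d}{2}+\mu=\tfrac{d(n-1)+n}{2}$; and the surviving constant is $1/\Gamma(\mu+1)=1/\Gamma((d+1)n/2)$. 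This is exactly the claimed identity. The only labor is this exponent bookkeeping together with the routine verification of the validity range of the integral formula; I expect no analytic obstacle. As a consistency check, for $n=1$ the identity collapses, via $K_{1/2}(z)=\sqrt{\pi/(2z)}\,e^{-z}$, to $\mathcal F^*H_a=\tfrac{\sqrt\pi}{2^{d/2}\Gamma((d+1)/2)}G_a$, which is precisely the inversion of Lemma~\ref{lm:Fourier1}.
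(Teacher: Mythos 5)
Your proposal is correct and follows essentially the same route as the paper: reduce the inverse transform of the radial function $H_a^n$ to a one-dimensional Hankel integral and evaluate it with the Gradshteyn--Ryzhik formula 6.565(4), after which the identity is just exponent bookkeeping (your bookkeeping and the validity-range check are both right, and the $n=1$ consistency check against Lemma~\ref{lm:Fourier1} confirms the constants). No gaps.
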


\begin{proof}
We compute, using again the Fourier transform for spherically symmetric functions,
\begin{align}
\big(\mathcal F^* H^n_a\big)(x) 
&=\frac{a^n}{(2\pi)^{d/2}}\int_{\R^d} \d \xi \Big(\frac 1 {a^2+|\xi|^2}\Big)^{\frac{n(d+1)} 2} e^{i x\xi}\notag\\
&=
a^n |x|^{\frac{2-d}2} \int_0^\infty  \d r\, r^{\frac{d}2}\,J_{\frac{d-2} 2}(|x| r)\,  \Big(\frac 1 {a^2+r^2}\Big)^{\frac{n(d+1)} 2},
\label{sec2lmeq2}
\end{align}
where $J_\nu(y)$ is the Bessel function of first kind. Integrating the latter with the help of \cite[Sec. 6.565 eq. 4]{gradshteyn:2007}, gives
\begin{align}
\big(\mathcal F^* H^n_a\big)(x) &=
 \frac{2^{1-\frac{(d+1)n}2} a^{\frac{d-(d-1)n} 2} |x|^{\frac{d(n-1)+n}{2}} K_{\frac{d(n-1)+n}2}(a|x|)}  {\Gamma\big(\frac{(1+d)n}2\big)}.
\end{align}
\end{proof}
Hence, from Lemma \ref{lm:Fourier1} and Lemma \ref{lm:Fourier2} we obtain
\begin{align}\label{eq:Fourier1}
\mathcal F^*\big((\mathcal F G_a)^n\big)(x)=  
\frac{\big(\Gamma(\frac{d+1} 2)\big)^n 2^{\frac{2-n} 2} a^{\frac{d-(d-1)n} 2} |x|^{\frac{d(n-1)+n}{2}} K_{\frac{d(n-1)+n}2}(a|x|)}  {\pi^{n/2}\Gamma\big(\frac{(d+1)n}2\big)}.
\end{align}

\begin{lemma}\label{lemma:ModifiedBessel}
Let $\eta>0$. The modified Bessel function $K_{\eta}$ satisfies for $y>0$ the bound 
\beq
0\leq K_{\eta} (y) \leq \frac {4^\eta }{y^\eta} e^{-\frac y 4} \Gamma(\eta).
\eeq
\end{lemma}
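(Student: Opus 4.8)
The plan is to reduce the estimate to the integral definition of the Gamma function by way of a standard integral representation of $K_\eta$ followed by a single change of variables. Recall that for $y>0$ and any $\eta\in\R$ one has
\[
K_\eta(y)=\frac12\int_0^\infty e^{-\frac{y}{2}\left(s+\frac1s\right)}\,s^{-\eta-1}\,\d s,
\]
which follows from the standard formula $\int_0^\infty x^{\nu-1}e^{-\beta/x-\gamma x}\,\d x=2(\beta/\gamma)^{\nu/2}K_\nu(2\sqrt{\beta\gamma})$ (see \cite[Sec. 3.471]{gradshteyn:2007}) taken with $\beta=\gamma=y/2$ and $\nu=-\eta$, together with $K_\eta=K_{-\eta}$. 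Since the integrand is non-negative, this gives at once $K_\eta(y)\geq 0$, the first assertion; note also that near $s=0$ the factor $e^{-y/(2s)}$ dominates the singularity $s^{-\eta-1}$, so the representing integral converges for every $\eta$.

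For the upper bound the key elementary step is the pointwise inequality
\[
\frac{y}{2}\left(s+\frac1s\right)=\frac{y}{4}\left(s+\frac1s\right)+\frac{y}{4}s+\frac{y}{4s}\geq\frac{y}{2}+\frac{y}{4s}\geq\frac{y}{4}+\frac{y}{4s},\qquad s>0,
\]
obtained from $s+\tfrac1s\geq 2$ and $s>0$. Consequently $e^{-\frac{y}{2}(s+1/s)}\leq e^{-y/4}e^{-y/(4s)}$, and hence
\[
K_\eta(y)\leq\frac12\,e^{-y/4}\int_0^\infty e^{-\frac{y}{4s}}\,s^{-\eta-1}\,\d s.
\]

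Finally, the substitution $s=\tfrac{y}{4u}$ maps $(0,\infty)$ to itself with $\d s=-\tfrac{y}{4u^2}\,\d u$, and transforms the remaining integral into $\left(\tfrac{4}{y}\right)^\eta\int_0^\infty e^{-u}u^{\eta-1}\,\d u=\left(\tfrac{4}{y}\right)^\eta\Gamma(\eta)$, where the hypothesis $\eta>0$ is precisely what guarantees convergence at $u=0$. Combining the last two displays yields
\[
K_\eta(y)\leq\frac12\,\frac{4^\eta}{y^\eta}\,e^{-y/4}\,\Gamma(\eta)\leq\frac{4^\eta}{y^\eta}\,e^{-y/4}\,\Gamma(\eta),
\]
which is the assertion (indeed with a factor $2$ to spare). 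There is no deep obstacle here; the only subtlety is selecting an integral representation in which the change of variables cleanly reproduces the Gamma function, and the only points deserving care are the finiteness of the representing integral for all $\eta$ and the bookkeeping of the powers of $y$ through the substitution.
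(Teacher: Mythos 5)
Your proof is correct. It is a close cousin of the paper's argument rather than a literally identical one: the paper starts from the representation $K_\eta(y)=\int_0^\infty e^{-y\cosh t}\cosh(\eta t)\,\d t$ (G\&R 8.432.1), bounds $\tfrac12 e^t\leq\cosh t\leq e^t$, substitutes $u=e^t$ to land on $\int_1^\infty e^{-uy/2}u^{\eta-1}\,\d u$, and then uses $e^{-uy/2}\leq e^{-y/4}e^{-uy/4}$ for $u\geq1$ before extending the integral to $(0,\infty)$ and recognizing $\Gamma(\eta)$. You instead start from the symmetric representation $K_\eta(y)=\tfrac12\int_0^\infty e^{-\frac y2(s+1/s)}s^{-\eta-1}\,\d s$ (the two are related by $s=e^t$ plus the $s\mapsto1/s$ symmetry), split the exponent algebraically via $s+\tfrac1s\geq2$, and reduce directly to the Gamma integral with the substitution $s=y/(4u)$. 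Your route dispenses with the $\cosh(\eta t)\leq e^{\eta t}$ step and the restriction to $u\geq1$, and as you note it even yields the bound with an extra factor $\tfrac12$; the paper's route has the minor advantage of quoting a representation valid as stated without invoking $K_\eta=K_{-\eta}$. All steps you give (the pointwise exponent inequality, convergence of the representing integral, and the change of variables producing $(4/y)^\eta\Gamma(\eta)$, where $\eta>0$ is needed) check out.
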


\begin{proof}
We write the modified Bessel function $K_\eta$ as
\beq
K_{\eta} (y) = \int_0^\infty \d t\, e^{-y\cosh(t)} \cosh(\eta t),
\eeq
see \cite[Sec. 8.432 eq. 1]{gradshteyn:2007}. Using $\frac 1 2 e^x\leq \cosh(x)\leq e^x$ valid for all $x\geq0$, we obtain 
\begin{align}
\int_0^\infty \d t\, e^{-y\cosh(t)} \cosh(\eta t)
\leq \int_0^\infty \d t\, e^{-\frac {y e^t} 2} e^{\eta t} 
= \int_{1}^\infty \d u\, e^{-\frac {u y} 2} u^{\eta-1},
\end{align}
where we performed the change of variables $u=e^t$ in the last line. Now, for $u\geq 1$ and $y>0$ we have $e^{-\frac {u y} 2}\leq e^{-\frac y 4}e^{-\frac {u y} 4}$ and therefore
\begin{align}
\int_{1}^\infty \d u\, e^{-\frac {u y} 2} u^{\eta-1}
&\leq 
e^{-\frac y 4} \int_0^\infty\d u\, e^{-\frac {u y} 4} u^{\eta-1}\notag\\
&= \frac{4^\eta}{y^\eta} e^{-\frac y 4} \int_0^\infty\d u\, e^{-u} u^{\eta-1} = \frac{4^\eta}{y^\eta} e^{-\frac y 4}\Gamma(\eta). 
\end{align}
\end{proof}

\begin{proof}[Proof of Lemma \ref{lemma:BoundFourier}]

Starting with \eqref{eq:Fourier1} and using Lemma \ref{lemma:ModifiedBessel} we obtain,
\begin{align}
(G_a*G_a*\cdots *G_a)(x)
&=(2\pi)^{d(n-1)/2}\mathcal F^*\big((\mathcal F G_a)^n\big)(x)\nonumber\\ 
&\leq 
\frac{\big(\Gamma(\frac{d+1} 2)\pi^{\frac{d-1}2}2^{\frac{3d+1}2}\big)^n \pi^{-\frac d 2}2^{\frac 1 2}  a^{-d(n-1)}\Gamma\big(\frac{d(n-1)+n}{2}\big) e^{-\frac {a|x|} 4}}  {\Gamma\big(\frac{(d+1)n}2\big)} \nonumber\\
&\leq D_3^n a^{-d(n-1)} e^{-\frac {a|x|} 4}
\end{align}
for some explicit constant $D_3>0$ depending on $d$, where
 we used that $(2/\pi^d)^{\frac 1 2}\leq 1$ and 
\beq
\frac{\Gamma(\frac{d(n-1)+n}2)}{\Gamma\big(\frac{(d+1)n}2\big)}\leq 1.
\eeq
This gives the assertion. 
\end{proof}

%
%

\end{document}